\newcommand{\bs}[1]{\boldsymbol #1}
\newtheorem{lemma}{Lemma}
\newtheorem{theorem}{Theorem}
\newtheorem{problem}{Problem}
\DeclareMathOperator*{\argmax}{argmax}
\DeclarePairedDelimiter\set\{\}
\title{Lower T-count with faster algorithms}
\author[1,2]{Vivien Vandaele}
\affil[1]{Eviden Quantum Lab, Les Clayes-sous-Bois, France}
\affil[2]{Université de Lorraine, CNRS, Inria, LORIA, F-54000 Nancy, France}
\date{}
\begin{document}
\maketitle

\begin{abstract}
Among the cost metrics characterizing a quantum circuit, the $T$-count stands out as one of the most crucial as its minimization is particularly important in various areas of quantum computation such as fault-tolerant quantum computing and quantum circuit simulation.
In this work, we contribute to the $T$-count reduction problem by proposing efficient $T$-count optimizers with low execution times.
In particular, we greatly improve the complexity of \texttt{TODD}, an algorithm currently providing the best $T$-count reduction on various quantum circuits.
We also propose some modifications to the algorithm which are leading to a significantly lower number of $T$ gates.
In addition, we propose another algorithm which has an even lower complexity and that achieves a better or equal $T$-count than the state of the art on most quantum circuits evaluated.
We also prove that the number of $T$ gates in the circuit obtained after executing our algorithms on a Hadamard-free circuit composed of $n$ qubits is upper bounded by $n(n + 1)/2 + 1$, which improves on the worst-case $T$-count of existing optimization algorithms.
From this we derive an upper bound of $(n + 1)(n + 2h)/2 + 1$ for the number of $T$ gates in a Clifford$+T$ circuit where $h$ is the number of internal Hadamard gates in the circuit, i.e.\ the number of Hadamard gates lying between the first and the last $T$ gate of the circuit.
\end{abstract}

\section{Introduction}

One of the main tasks of a quantum compiler is to minimize the resources needed to execute a given quantum algorithm.
This step is of considerable importance in the compilation stack as it makes quantum computation more practical and efficient.
To complete this task effectively and decide which optimization to perform, we must first identify the most expensive operations hindering our way towards fast and functional quantum computation.
In this regard the $T$ gate is often targeted as it cannot be trivially implemented, for instance via transversal operations, in a fault-tolerant way in most quantum error correcting codes as opposed to Clifford gates.
It implies that implementing a $T$ gate is generally much more costly than performing a Clifford operation~\cite{raussendorf2007topological, fowler2009high, beverland2021cost}.
Also, numerous quantum error correcting codes can be used to perform universal fault-tolerant quantum computation with the Clifford$+T$ gate set.
In such a setup, the depth of a quantum circuit, and so the time required to execute it, is generally directly linked to the $T$-depth of the circuit~\cite{fowler2012time}.
For this reason, much work has been put into the minimization of the $T$-depth in Clifford$+T$ circuits~\cite{amy2013meet, selinger2013quantum, amy2014polynomial, abdessaied2014quantum, niemann2019t, gheorghiu2022quasi}.
But if the number of qubits at disposal is limited, then the depth of the circuit also depends on the number of $T$ gates within it.
In addition, the depth of a circuit dictates the minimum number of physical qubits needed to encode the logical qubits utilized to perform the fault-tolerant computation.
The coherence time of the logical qubits must be greater than the time required to execute the whole circuit, and so the required amount of physical qubits per logical qubits increases as the depth of the circuit increases.
We can discern a feedback loop here: if the depth of the circuit is diminished then less physical qubits are required to encode the logical qubits, which frees up qubits that can be used to further lower the depth of the circuit.
The optimization of the $T$-count can intervene at multiple stages of this process.
Firstly, a lower $T$-count can induce a lower $T$-depth and so a lower circuit depth~\cite{abdessaied2014quantum}.
Also, the number of qubits required to implement the $T$ gates can depend on the $T$-count.
This is for example the case when the $T$ gates are implemented via magic state distillation~\cite{bravyi2005universal}.
A lower number of $T$ gates can thus lower the number of physical qubits required to implement the circuit.

The optimization of the $T$-count also have important applications outside of fault-tolerant quantum computation.
Numerous quantum compilers designed for NISQ devices are incorporating a step consisting in reducing the number of $T$ gates~\cite{amy2020staq, sivarajah2020t, martiel2022architecture}.
It has been demonstrated by these compilers that reducing the $T$-count can lead to shorter circuits and can help with the minimization of other gates such as the $\mathrm{CNOT}$ gate.

Besides compilation, $T$-count minimization also plays an important role in quantum circuits simulation.
As stated by the Gottesman-Knill theorem~\cite{gottesman1998heisenberg}, circuits composed of Clifford gates and Pauli measurements can be efficiently simulated by a classical computer.
Extending the gate set by adding the $T$ gate allows the simulation of universal quantum circuits at the cost of a significant increase in computational time as no algorithm is currently known to efficiently simulate these circuits.
That is why many simulation techniques have a runtime that scales exponentially with respect to the number of $T$ gates~\cite{bravyi2016improved, bravyi2019simulation, qassim2021improved, kissinger2022simulating, kissinger2022classical}.
Minimizing the number of $T$ gates is then essential to exploit the full potential of these simulators and to push back the frontier of non-simulable quantum circuits.

\paragraph{State of the art and contributions.}
Any unitary gate can be implemented by the Clifford$+T$ gate set up to an arbitrary precision.
Therefore, a compilation problem of primordial importance is to find an approximation, over the Clifford$+T$ gate set and that uses a little amount of $T$ gates, of a given unitary operator to an accuracy within $\epsilon > 0$ .
A fundamental solution to this problem, and which can be applied to any finite universal gate set, was given by the Solovay-Kitaev theorem~\cite{kitaev2002classical, dawson2006solovay}.
Other approaches designed for the Clifford$+T$ gate set were then developed to obtain better approximations~\cite{selinger2015efficient, kliuchnikov2013asymptotically, ross2016optimal}.
Further improvements have then been made by introducing measurements~\cite{bocharov2015efficient} and by using a probabilistic mixture of unitaries~\cite{hastings2016turning, campbell2017shorter}.
Recently, it has been shown that these two methods can be combined with a novel approach to achieve better results~\cite{kliuchnikov2022shorter}.
Another important synthesis problem concerns the set of unitary gates which can be exactly implemented over the Clifford+$T$ gate set.
Given one of these unitary gates, the problem then consists in finding an exact Clifford$+T$ implementation of it using a minimal number of $T$ gates.
For this problem, an optimal and efficient algorithm is known for the case of single-qubit unitaries~\cite{kliuchnikov2013fast}.

Once a Clifford$+T$ implementation of a unitary gate has been found, whether through approximate or exact synthesis, some quantum circuit optimization methods can then be applied to reduce the number of $T$ gates in the circuit.
The algorithms developed for this purpose and achieving the best $T$-count reductions are foremostly designed for the restricted class of $\{\mathrm{CNOT}, S, T\}$ quantum circuits.
The problem of $T$-count optimization for this class of circuits has been well defined by showing its equivalence with the problem of decoding Reed-Muller codes~\cite{amy2019t}.
In particular, it was demonstrated that the codewords of the punctured Reed-Muller code of length $2^n-1$ and order $n-4$ are generating the complete set of identities that can be used to optimize the number of $T$ gates in $\{\mathrm{CNOT}, S, T\}$ circuits.
Reducing the number of $T$ gates can then be done by finding relevant identities in this large set.
For example it has been shown that a particular subset of identities, called spider nest identities, can be efficiently exploited to reduce the number of $T$ gates~\cite{de2019techniques, de2020fast, munson2019and, domitrz2022}.
An effective way to find relevant identities that can be applied to reduce the number of $T$ gates was given by the \texttt{TODD} algorithm~\cite{heyfron2018efficient}.
However, an important drawback of the \texttt{TODD} algorithm is its complexity of $\mathcal{O}(n^3 m^5)$ where $n$ is the number of qubits and $m$ is the number of $T$ gates in the initial circuit, which makes it impractical for circuits of large size.
In Section~\ref{sec:beyond_merging}, we show how the complexity of the \texttt{TODD} algorithm can be reduced to $\mathcal{O}(n^4m^3)$.
In addition, we propose some modifications to the \texttt{TODD} algorithm which are resulting in a significantly improved reduction in the number of $T$ gates.
In the same section we propose another algorithm which has an even lower complexity of $\mathcal{O}(n^2m^3)$ and that achieves better results than the original \texttt{TODD} algorithm on most quantum circuits evaluated.
We also prove that the algorithms presented in Section~\ref{sec:beyond_merging} are producing Hadamard-free circuits in which the $T$-count is upper bounded by $(n^2 + n)/2 + 1$ where $n$ is the number of qubits.
Benchmarks are provided in Section~\ref{sec:bench} to evaluate the performances, in terms of $T$-count and execution time, of our algorithms on a library of reversible logic circuits and on large-scale quantum circuits.
In Section~\ref{sec:higher_orders}, we extend our results for minimizing the number of $R_Z(\pi/2^d)$ gates, where $d$ is a non-negative integer.
Finally, in Section~\ref{sec:k_upper_bound}, we demonstrate an upper bound for the number of $R_Z(\pi/2^d)$ gates in a Clifford$+\{R_Z(\pi/2^d), R_Z(2\pi/2^d)\}$ circuit.
For Clifford$+T$ circuits we obtain an upper bound of $(n + 1)(n + 2h)/2 + 1$ for the number of $T$ gates, which can be satisfied in polynomial time and without any ancillary qubit, and where $h$ is the number of internal Hadamard gates in the circuit.

\section{Preliminaries}\label{sec:preliminaries}

\subsection{$T$-count optimization in Hadamard-free circuits}\label{sub:t_count_hadamard_free}

We define the set of Pauli operators $\mathcal{P}_n$ as the set composed of all the tensor products of $n$ Pauli matrices, which are defined as follows:
$$
I = \begin{pmatrix}
1 & 0\\
0 & 1
\end{pmatrix},\quad
X = \begin{pmatrix}
0 & 1\\
1 & 0
\end{pmatrix},\quad
Y = \begin{pmatrix}
0 & -i\\
i & 0
\end{pmatrix},\quad
Z = \begin{pmatrix}
1 & 0\\
0 & -1
\end{pmatrix},
$$
with a multiplicative factor of $\pm 1$.
A Pauli rotation $R_P(\theta)$ is defined as follows:
$$R_P(\theta) = \exp(-i\theta P/2) = \cos(\theta/2)I - i\sin(\theta/2)P$$
for a Pauli operator $P \in \mathcal{P}_n$ and an angle $\theta \in \mathbb{R}$.
For instance, the $T$ gate is defined as a $\pi/4$ Pauli $Z$ rotation: 
$$T = R_Z(\pi/4).$$
The Clifford group, denoted $\mathcal{C}_n$, is generated by the set of $\pi/2$ Pauli rotations acting on $n$ qubits
$$\{R_P(\pi/2) \mid P \in \mathcal{P}_n\}.$$
A Pauli operator $P \in \mathcal{P}_n$ conjugated by a Clifford gate $U \in \mathcal{C}_n$ is always equal to another Pauli operator $P' \in \mathcal{P}_n$, i.e.\ $U^\dag P U = P'$.
This fact also holds when a Pauli rotation is conjugated by $U \in \mathcal{C}_n$: 
\begin{equation}
    U^\dag R_P(\theta) U = R_{U^\dag P U }(\theta) = R_{P'}(\theta).
\end{equation}
That is why the operation performed by a Clifford$+T$ circuit acting on $n$ qubits, represented by a unitary gate $U$, can always be described by a sequence of $\pi/4$ Pauli rotations and a final Clifford operator $C \in \mathcal{C}_n$~\cite{gosset2014algorithm}:
\begin{equation}\label{eq:sequence_prod}
    U = e^{i \phi} C \left( \prod_{i=1}^{m} R_{P_i}(\pi/4)\right)
\end{equation}
where $m$ is the number of $T$ gates in the circuit and $P_i \in \mathcal{P}_n \setminus \{\pm I^{\otimes n}\}$.

Multiple algorithms to reduce the number of $T$ gates are foremostly designed for circuits composed of $\{\text{CNOT}, S, T\}$ gates.
In order to make use of these algorithms for Clifford$+T$ circuits, it is necessary to circumvent the Hadamard gates in the input circuit since these algorithms cannot be directly executed on them.
This can be done using one the two following methods.
The first method consists in dividing the circuit into Hadamard-free subcircuits and Clifford subcircuits containing Hadamard gates.
The number of $T$ gates in the Hadamard-free subcircuits can then be optimized using these algorithms.
There exists multiple strategies that can be employed to create such a partition of the circuit.
It is generally preferred to regroup the $T$ gates in the least number of Hadamard-free subcircuits as possible to take advantage of the fact that the number of $T$ gates in an Hadamard-free circuit can be upper bounded by $\mathcal{O}(n^2)$, where $n$ is the number of qubits.
One approach to partition the circuit is to describe the operation performed by the circuit by a sequence of Pauli rotations and a final Clifford operator, as in Equation~\ref{eq:sequence_prod}, and to then reorder the Pauli rotations in the sequence by forming groups of mutually commuting Pauli rotations.
This can for example be done by using the procedure whose pseudo-code is given in Algorithm~\ref{alg:grouping}.
This algorithm has a complexity of $\mathcal{O}(nm^2)$ since checking whether or not two Pauli rotations commute takes $\mathcal{O}(n)$ operations and such commutativity checks are done at most $m$ times at each iteration of the loop.
This way of grouping the Pauli rotations is not new, an equivalent algorithm (but which has a worst-case complexity of $\mathcal{O}(nm^3)$) was given in Reference~\cite{litinski2019game}.
By using the layers of Pauli rotations produced by Algorithm~\ref{alg:grouping}, Equation~\ref{eq:sequence_prod} can then be rewritten as follows:
\begin{equation}\label{eq:sequence_prod_2}
    U = e^{i \phi} C \left( \prod_{i=1}^{\lvert L \rvert} \prod_{R_P \in L_i}  R_P(\pi/4)\right)
\end{equation}
where $\lvert L \rvert$ denotes the number of layers in $L$.

\begin{algorithm}[t]
    \caption{Grouping of Pauli rotations}
    \label{alg:grouping}
	\SetAlgoLined
	\SetArgSty{textnormal}
	\SetKwInput{KwInput}{Input}
	\SetKwInput{KwOutput}{Output}
    \KwInput{A sequence $R_{P_1},\ldots,R_{P_m}$ of Pauli rotations.}
    \KwOutput{Equivalent sequence as layers of mutually commuting Pauli rotations.}
    $L \leftarrow$ list of empty sets \\
    \For{$i \gets 1$ \KwTo $m$}{
        $j \leftarrow \max\left(\{j \mid R_{P_i} \text{ anticommutes with one element in } L_j \} \cup \{0\}\right)$\\
        $L_{j+1} \leftarrow L_{j+1} \cup \{R_{P_i}\}$ \\
    }
    \Return $L$ \\
\end{algorithm}

A Pauli operator $P$ and a Pauli rotation $R_P(\theta)$ are diagonal if and only if $P$ is a tensor product of the matrices $I$ and $Z$, up to a multiplicative factor of $\pm 1$.
Such a Pauli rotation can be implemented without using any Hadamard gate.
Because the Pauli rotations in each layer $L_i$ of Equation~\ref{eq:sequence_prod_2} are mutually commuting, we can find a Clifford operator $C_i \in \mathcal{C}_n$ such that $C_i^\dag R_P(\pi/4) C_i$ is diagonal for all $R_P \in L_i$.
We say that $C_i$ simultaneously diagonalize the Pauli rotations in $L_i$.
A circuit implementing the Clifford operator $C_i$ can be found with a complexity of $\mathcal{O}(n^2m)$, where $n$ is the number of qubits and $m$ is the number of Pauli rotations in $L_i$~\cite{van2020circuit}.
By performing a simultaneous diagonalization for each layer of Pauli rotations, Equation~\ref{eq:sequence_prod_2} can then be rewritten as follows:
\begin{equation}\label{eq:sequence_prod_3}
\begin{aligned}
    U &= e^{i \phi} C \left( \prod_{i=\lvert L \rvert}^{1} C_i C_i^\dag \left(\prod_{R_P \in L_i}  R_P(\pi/4)\right) C_i C_i^\dag \right) \\
    &= e^{i \phi} C \left( \prod_{i=\lvert L \rvert}^{1} C_i \left(\prod_{R_P \in L_i}  R_{C_i^\dag P C_i}(\pi/4)\right) C_i^\dag \right)
\end{aligned}
\end{equation}
where $C_i \in \mathcal{C}_n$ and $C_i^\dag P C_i$ is diagonal.
Because $C_i^\dag P C_i$ is diagonal, the associated Pauli rotations $R_{C_i^\dag P C_i}(\pi/4)$ can be implemented using exclusively $\text{CNOT}$, $S$ and $T$ gates.
The set of Pauli rotations $R_{C_i^\dag P C_i}(\pi/4)$ for a fixed $i$ and where $R_P \in L_i$ can then be implemented into the same Hadamard-free subcircuit in which the number of $T$ gates can be optimized.

The second method to circumvent the Hadamard gates in the circuit relies on a measurement-based gadget which can substitute a Hadamard gate~\cite{bremner2011classical}.
This gadget, presented in Figure~\ref{fig:gadgetization}, involves an ancilla qubit, a CZ gate and a measurement.
If all the Hadamard gates in the circuits are gadgetized, then the circuit is Hadamard-free and the number of $T$ gates can be optimized using algorithms specifically designed for Hadamard-free circuits.
Only internal Hadamard gates, which are the Hadamard gates comprised between the first and the last $T$ gates of the circuit, are required to be gadgetized.
Indeed, if only the internal Hadamard gates are gadgetized then the circuit can be partitioned into an initial and final Clifford circuit and a Hadamard-free circuit in between containing all the $T$ gates.
The main drawback of this method is that one additional qubit must be used for each internal Hadamard gate that is gadgetized.
This motivates the minimization of internal Hadamard gates.
To do so, we will use the algorithm presented in Reference~\cite{vandaele2024optimal} which performs the synthesis of the sequence of Pauli rotations of Equation~\ref{eq:sequence_prod} with a minimal number of internal Hadamard gates.

The classically controlled $X$ gate of Figure~\ref{fig:gadgetization}, can be commuted through the subsequent $R_Z$ gates of the circuit by relying on the following equality: 
\begin{equation}
    R_Z(\theta)X = X R_Z(-\theta) = X R_Z(-2\theta) R_Z(\theta)
\end{equation}
In a Clifford$+T$ circuit, the angle $\theta$ is a multiple of $\pi/4$, and so $-2\theta$ is a multiple of $\pi/2$, which results in a rotation implementable using only Clifford gates.
The classically controlled $X$ gate is only commuted through the Pauli rotations succeeding the associated Hadamard gate, and not the Pauli rotations preceding it.
Therefore, the optimized circuit in which all Hadamard gates have been gadgetized can be implemented with a measurement depth equal to $\lvert L \rvert - 1$, where $\lvert L \rvert$ corresponds to the number of layers as in Equation~\ref{eq:sequence_prod_3}.
An example of classically controlled Clifford gates resulting from the gadgetization of a Hadamard gate is provided in Figure~\ref{fig:t_opt_correction_example}.

\begin{figure}[t]
    \centering
    \scalebox{0.9}{
        \includegraphics{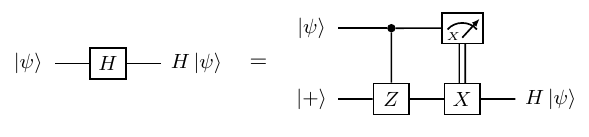}
    }
    \caption{Circuit transformation corresponding to the gadgetization of a Hadamard gate.}
    \label{fig:gadgetization}
\end{figure}

\subsection{Weighted polynomial and signature tensor}\label{sub:weighted_polynomial}

We now describe the established formalism for the optimization of the number of $T$ gates in $\{\text{CNOT}, S, T\}$ circuits.
Let $C$ be a $\{\text{CNOT}, S, T\}$ circuit operating over $n$ qubits.
It has been demonstrated~\cite{amy2013meet} that the action of $C$ on a basis state has the form
\begin{equation}
    \lvert \bs x\rangle \mapsto U_f \lvert g(\bf \bs x)\rangle
\end{equation}
where $g: \mathbb{F}_2^n \rightarrow \mathbb{F}_2^n$ is a linear reversible Boolean function which can be implemented using only CNOT gates~\cite{aaronson2004improved}, and
\begin{equation}\label{eq:uf_gate}
    U_f = \sum_{\bs x \in \mathbb{Z}_2^n} \omega^{f(\bs x)} \lvert \bs x \rangle \langle \bs x \rvert
\end{equation}
where $\omega = e^{i\pi/4}$ and $f: \mathbb{Z}_2^n \rightarrow \mathbb{Z}_8$ is a multilinear polynomial of degree 3 such that
\begin{equation}
    f(\bs x) = \sum_{\alpha}^n l_\alpha x_\alpha - 2 \sum_{\alpha < \beta}^n q_{\alpha, \beta} x_\alpha x_\beta + 4 \sum_{\alpha < \beta < \gamma}^n c_{\alpha, \beta, \gamma} x_\alpha x_\beta x_\gamma \pmod{8}
\end{equation}
where $l_\alpha \in \mathbb{Z}_8$, $q_{\alpha, \beta} \in \mathbb{Z}_4$ and $c_{\alpha, \beta, \gamma} \in \mathbb{Z}_2$.
In the same way as in Reference~\cite{campbell2017unified}, we will refer to the function $f$ as a weighted polynomial due to the fact that each monomial of order $m$ has a coefficient weighted by $2^{m-1}$.
It has been shown in Reference~\cite{campbell2017unified} that $U_f$ belongs to the diagonal subgroup of the third level of the Clifford hierarchy~\cite{gottesman1999demonstrating}.
We will use $\mathcal{D}_3$ to denote this group and $\mathcal{D}^C_3$ to refer to the unitaries of $\mathcal{D}_3$ which are implementable using only CCZ gates.

The parities of the coefficients for a weighted polynomial $f$ can be described by the signature tensor $\mathcal{A}^{(U_f)} \in \mathbb{Z}_2^{(n,n,n)}$~\cite{heyfron2018efficient}, such that $\mathcal{A}^{(U_f)}$ is a symmetric tensor of order 3 satisfying
\begin{equation}
\begin{aligned} 
    \mathcal{A}_{\alpha, \alpha, \alpha} &\equiv l_{\alpha} &\pmod{2} \\
    \mathcal{A}_{\sigma(\alpha,\beta,\beta)} &\equiv \mathcal{A}_{\sigma(\alpha,\alpha,\beta)} = q_{\alpha,\beta} &\pmod{2} \\
    \mathcal{A}_{\sigma(\alpha,\beta,\gamma)} &\equiv c_{\alpha,\beta,\gamma} &\pmod{2}
\end{aligned}
\end{equation}
where $\alpha, \beta, \gamma$ are satisfying $0 \leq \alpha < \beta < \gamma < n$ and $\sigma$ denotes all permutations of the indices.
For convenience, we will drop the superscript $(U_f)$ from $\mathcal{A}$ when it is clear from the context that $\mathcal{A}$ is associated with $U_f$.
It has been proven in Reference~\cite{campbell2017unified} that $U_{2f}$ can be implemented using only Clifford gates for any weighted polynomial $f$.
It implies that two unitaries $U_f$ and $U_{f'}$, where $f$ and $f'$ are weighted polynomials, are Clifford equivalent if the coefficients of $f$ and $f'$ all have the same parity, i.e.\ if $\mathcal{A}^{(U_f)}$ and $\mathcal{A}^{(U_{f'})}$ are equal.

\begin{figure}[t]
    \centering
    \resizebox{1.0\columnwidth}{!}{
        \includegraphics{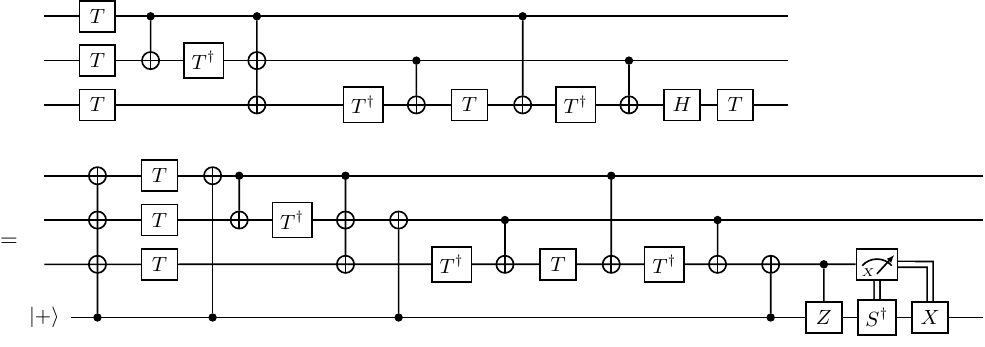}
    }
    \caption{Example of classically controlled Clifford gates resulting from the gadgetization of a Hadamard gate and allowing the optimization of the number of $T$ gates.}
    \label{fig:t_opt_correction_example}
\end{figure}

\subsection{Phase polynomial and parity table}

An implementation of the $U_f$ gate for a weighted polynomial $f$ can be described by a phase polynomial via the circuit-polynomial correspondence~\cite{dawson2004quantum, montanaro2017quantum}.
A phase polynomial $p$ is a linear combination of linear Boolean functions:
\begin{equation}
    p(\bs x) = \sum_{i = 1}^{m} a_i (y^{(i)}_1x_1 \oplus \ldots \oplus y^{(i)}_n x_n) \pmod{8}
\end{equation}
where $\bs y^{(i)} \in \mathbb{Z}_2^n\setminus \{\bs 0\}$, $\bs a \in \mathbb{Z}_{8}^n$ an $m \geq 0$.
We will refer to the Boolean vectors $\bs y^{(i)}$ as the parities of the phase polynomial $p$ and to $\bs a$ as the weights of $p$.
The parities of a phase polynomial can be described by a Boolean matrix, called parity table and denoted $P$, of size $n \times m$ where $n$ is the number of qubits and $m$ is the number of parities weighted by a non-zero $a_i$.
For every weighted polynomial $f$ we can find a phase polynomial $p$ with weights $\bs a$ such that $p(\bs x) = f(\bs x)$ for all $\bs x$.
Such phase polynomial can then be used to implement $U_f$ via a phase polynomial synthesis algorithm which will result in a circuit containing $|\bs a \pmod{2}|$ $T$ gates.
As we are focusing on minimizing the number of $T$ gates, we will represent a parity $\bs y^{(i)}$ in the parity table $P$ if and only if its weight is satisfying $a_i \equiv 1 \pmod{2}$.
The number of columns of $P$ is then equal to the number of $T$ gates required to implement the phase polynomial $p$.
For example, the weighted polynomial associated with the CCZ gate is $f(x_1, x_2, x_3) = 4x_1x_2x_3$.
Its symmetric tensor $\mathcal{A} \in \mathbb{Z}_2^{(3, 3, 3)}$ satisfies:
\begin{equation}
    \mathcal{A}_{\alpha, \beta, \gamma} = 
    \begin{cases}
        1 &\text{if $\alpha \neq \beta \neq \gamma$},\\
        0 &\text{otherwise}.
    \end{cases}
\end{equation}
One possible phase polynomial that can be use to implement this weighted polynomial is $p(x_1, x_2, x_3) = x_1 + x_2 + x_3 + 7(x_1 \oplus x_2) + 7(x_1 \oplus x_3) + 7(x_2 \oplus x_3) + (x_1 \oplus x_2 \oplus x_3)$.
The parity table $P$ and the weights $\bs a$ associated with this phase polynomial are
\begin{align*}
    P &= \begin{pmatrix}
        1 & 0 & 0 & 1 & 1 & 0 & 1 \\
        0 & 1 & 0 & 1 & 0 & 1 & 1 \\
        0 & 0 & 1 & 0 & 1 & 1 & 1 \\
    \end{pmatrix},\quad
    \bs a = \begin{pmatrix} 1 & 1 & 1 & 7 & 7 & 7 & 1 \end{pmatrix}^T.
\end{align*}
We can verify that $f(\bs x) = p(\bs x)$ for all $\bs x$ by using the identity $2xy \equiv x + y + 7(x\oplus y) \pmod{8}$~\cite{selinger2013quantum}.
The synthesis of a phase polynomial $p$ with weights satisfying $a_i \in \mathbb{Z}_8$ for all $i$ can be realized using $\mathrm{CNOT}$ gates and one $R_Z$ gate with an angle of $a_i \pi/4$ for each parity $\bs y^{(i)}$ in $p$.
A circuit implementing the phase polynomial of this example with the CCZ gate is represented in Figure~\ref{fig:ccz_example}.

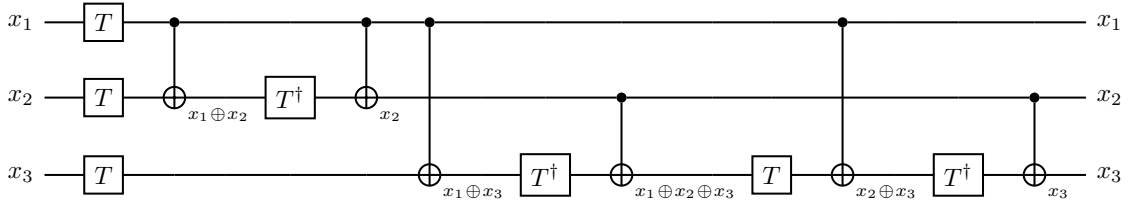
\begin{figure}[t]
    \resizebox{1.0\columnwidth}{!}{
    \begin{quantikz}[column sep=0.55cm]
        \lstick{$x_1$} & \gate{T} & \ctrl{1} & \qw & \qw & \ctrl{1} & \ctrl{2} & \qw & \qw & \qw & \qw & \qw & \qw & \ctrl{2} & \qw & \qw & \qw & \qw \rstick{$x_1$} \\
        \lstick{$x_2$} & \gate{T} & \targ{} \rstick{\\\vspace{-0.2cm}\scriptsize $\!\! x_1\! \oplus\! x_2$} & \qw & \gate{T^\dag} & \targ{}  \rstick{\\\vspace{-0.2cm}\scriptsize $\!\! x_2$} & \qw & \qw & \qw & \ctrl{1} & \qw & \qw & \qw & \qw & \qw & \qw & \ctrl{1} & \qw \rstick{$x_2$} \\
        \lstick{$x_3$} & \gate{T} & \qw & \qw & \qw & \qw & \targ{} \rstick{\\\vspace{-0.2cm}\scriptsize $\!\! x_1\! \oplus\! x_3$} & \qw & \gate{T^\dag} & \targ{} \rstick{\\\vspace{-0.2cm}\scriptsize $\!\! x_1\! \oplus\! x_2\!\oplus\! x_3$} & \qw & \qw & \gate{T} & \targ{} \rstick{\\\vspace{-0.2cm}\scriptsize $\!\! x_2\! \oplus\! x_3$} & \qw & \gate{T^\dag} & \targ{} \rstick{\\\vspace{-0.2cm}\scriptsize $\!\! x_3$} & \qw \rstick{$x_3$}
    \end{quantikz}}
    \caption{An implementation of the $U_f$ gate with weighted polynomial $f(\bs x) = 4x_1x_2x_3$, which corresponds to the CCZ gate.}
    \label{fig:ccz_example}
\end{figure}

\begin{figure}[t]
    \centering
    \resizebox{1.0\columnwidth}{!}{
        \includegraphics{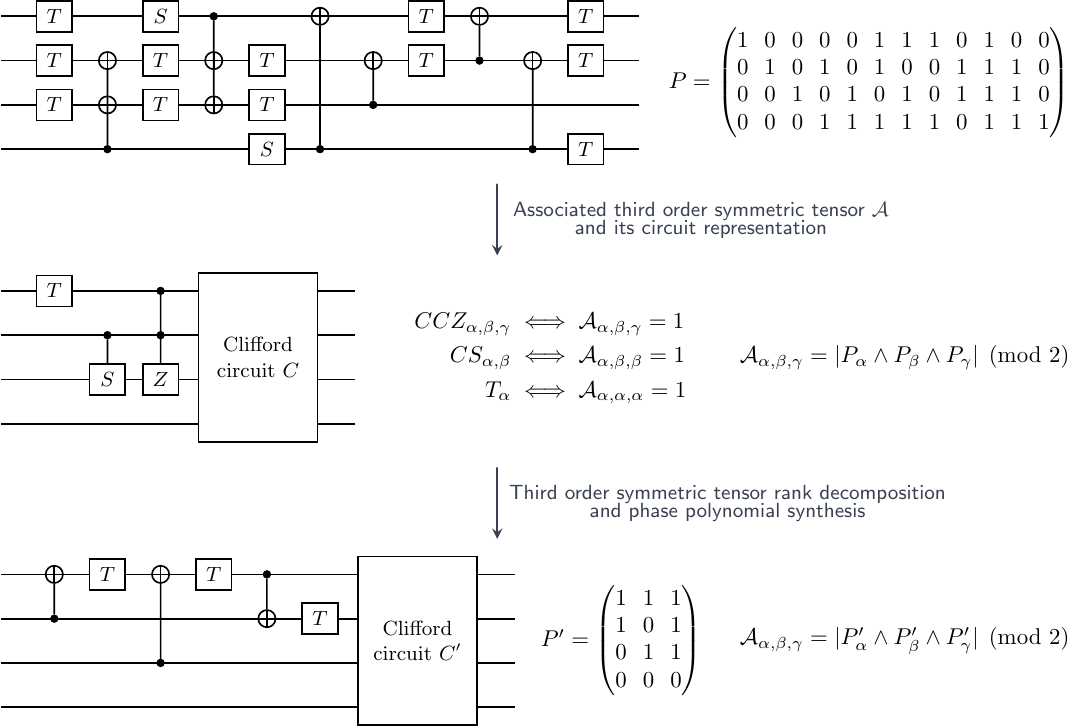}
    }
    \caption{Overview of the process for the optimization of the number of $T$ gates in Hadamard-free circuits.
    We first compute the parity table $P$ associated with the phase polynomial implemented by the initial circuit.
    Then, we compute the weighted polynomial associated with $P$, which can be represented by the third order symmetric tensor $\mathcal{A}$ or by a $\{T, CS, CCZ\}$ circuit.
    We then solve the third order symmetric tensor rank decomposition problem to find another parity table $P'$ associated with $\mathcal{A}$ but which contains a minimal number of columns.
    Finally, we perform the synthesis of the phase polynomial represented by $P'$ along with the final Clifford operator to obtain the optimized circuit.}
    \label{fig:t_opt_example}
\end{figure}

The weighted polynomial associated with a phase polynomial composed of a single parity $p(\bs x) = a(x_1 \oplus \ldots \oplus x_n)$, where $a \in \mathbb{Z}_8$, can be computed using the following equality:
\begin{equation}\label{eq:weighted_polynomial_eq}
a \left(x_1 \oplus \ldots \oplus x_n\right) = a \left(\sum_{\alpha}^n x_\alpha - \sum_{\alpha < \beta}^n 2 x_\alpha x_\beta + \sum_{\alpha < \beta < \gamma}^n 4 x_\alpha x_\beta x_\gamma \right) \pmod{8}.
\end{equation}
We provide a proof of Equation~\ref{eq:weighted_polynomial_eq}, and its more general form, in Section~\ref{sub:preliminaries_d}.
The coefficients of the weighted polynomial represented in the right-hand side of this equation are $l_\alpha = a \pmod{8}$, $q_{\alpha, \beta} = a \pmod{4}$ and $c_{\alpha, \beta, \gamma} = a \pmod{2}$.
Equation~\ref{eq:weighted_polynomial_eq} can then be used for each parity of a given phase polynomial to compute its associated weighted polynomial:
\begin{equation}
\begin{aligned}
        p(\bs x) &= \sum_{i = 1}^{m} a_i (y^{(i)}_1x_1 \oplus \ldots \oplus y^{(i)}_n x_n) \pmod{8} \\
                 &= \sum_{i = 1}^{m} a_i \left(\sum_{\alpha}^n y_\alpha^{(i)} x_\alpha - \sum_{\alpha < \beta}^n 2 y_\alpha^{(i)}y_\beta^{(i)}x_\alpha x_\beta + \sum_{\alpha < \beta < \gamma}^n 4 y_\alpha^{(i)}y_\beta^{(i)}y_\gamma^{(i)}x_\alpha x_\beta x_\gamma \right) \pmod{8} \\
                 &= \sum_{\alpha}^n l_\alpha x_\alpha - \sum_{\alpha < \beta}^n 2 q_{\alpha, \beta} x_\alpha x_\beta + \sum_{\alpha < \beta < \gamma}^n 4c_{\alpha, \beta, \gamma} x_\alpha x_\beta x_\gamma \pmod{8}
\end{aligned}
\end{equation}
where $l_\alpha, q_{\alpha, \beta}$ and $c_{\alpha, \beta, \gamma}$ are satisfying
\begin{equation}
\begin{aligned}
    l_\alpha &= \sum_{i=1}^m a_i y_\alpha^{(i)} \pmod{8}, \\
    q_{\alpha, \beta} &= \sum_{i=1}^m a_i y_\alpha^{(i)} y_\beta^{(i)} \pmod{4}, \\
    c_{\alpha, \beta, \gamma} &= \sum_{i=1}^m a_i y_\alpha^{(i)} y_\beta^{(i)} y_\gamma^{(i)} \pmod{2}. \\
\end{aligned}
\end{equation}

Two weighted polynomials $f$ and $f'$, with associated coefficients $l_\alpha, q_{\alpha, \beta}, c_{\alpha, \beta, \gamma}$ and $l'_\alpha, q'_{\alpha, \beta}, c'_{\alpha, \beta, \gamma}$ respectively, are equal if and only if $l_\alpha = l'_\alpha, q_{\alpha, \beta} = q'_{\alpha, \beta}$ and $c_{\alpha, \beta, \gamma} = c'_{\alpha, \beta, \gamma}$ for all $\alpha, \beta, \gamma$.
Indeed, if these equalities are not satisfied, then we can find some vector $\bs x$ such that $f(\bs x) \neq f'(\bs x)$.
Therefore, two phase polynomials $p$ and $p'$ are implementing the same operator if and only if their associated weighted polynomials $f$ and $f'$ are equal.
The problem of $T$-count optimization then consists in finding a phase polynomial $p$ which implements a given weighted polynomial $f$ with a minimal number of $T$ gates.
Notice that the coefficients of the weighted polynomial represented in Equation~\ref{eq:weighted_polynomial_eq} all have the same parity: $l_{\alpha} \equiv q_{\alpha, \beta} \equiv  c_{\alpha, \beta, \gamma}\equiv a \pmod{2}$.
And if $a$ is even, then the associated rotation of angle $a\pi/4$ is a multiple of $\pi/2$ and can be implemented using only $S$ gates.
That is why a weighted polynomial $f$ can be implemented using only $\{\mathrm{CNOT}, S\}$ gates if and only if all its coefficients have an even parity, in such case we say that $f$ is a Clifford weighted polynomial.
Then, a weighted polynomial $p$ with an associated parity table $P$ is an implentation of a weighted polynomial $f$ with an associated signature tensor $\mathcal{A}$ up to an operator implementable over the $\{\mathrm{CNOT}, S\}$ gate set if and only if the equality
\begin{equation}\label{eq:tensor_parity_table_equality}
    \mathcal{A}_{\alpha,\beta,\gamma} = \lvert P_\alpha \wedge P_\beta \wedge P_\gamma \rvert \pmod{2}
\end{equation}
is satisfied for all $\alpha, \beta, \gamma$ satisfying $0 \leq \alpha \leq \beta \leq \gamma < n$, where $n$ is the number of qubits.
Throughout the paper, the notation $\lvert \bs v \rvert$ will be used to refer to the Hamming weight of the vector $\bs v$, and the symbol $\wedge$ will refer to the logical AND operation.
Let $f'$ be the weighted polynomial implemented by $p$, then $f$ and $f'$ have the same signature tensor, and $f$ can be implemented by performing the synthesis of $p$ and the synthesis of the Clifford operator associated with the Clifford weighted polynomial $f - f'$.
The problem of finding a phase polynomial implementing a given weighted polynomial with a minimal number of $T$ gates and up to an operator implementable over the $\{\mathrm{CNOT}, S\}$ gate set can then be described by the following third order symmetric tensor rank decomposition (3-STR) problem~\cite{heyfron2018efficient}.

\begin{problem}[3-STR]\label{pb:3_str}
    Let $\mathcal{A} \in \mathbb{Z}_2^{(n,n,n)}$ be a symmetric tensor such that
    \begin{equation}
        \mathcal{A}_{\alpha, \beta, \gamma} = \mathcal{A}_{\alpha', \beta', \gamma'}
    \end{equation}
    for all $\alpha, \beta, \gamma$ and $\alpha', \beta', \gamma'$ satisfying the set equality $\{\alpha, \beta, \gamma\} = \{\alpha', \beta', \gamma'\}$.
    Find a Boolean matrix $P$ of size $n \times m$ such that
    \begin{equation}
        \mathcal{A}_{\alpha,\beta,\gamma} = \lvert P_\alpha \wedge P_\beta \wedge P_\gamma \rvert \pmod{2}
    \end{equation}
    for all $\alpha, \beta, \gamma$ satisfying $0 \leq \alpha \leq \beta \leq \gamma < n$, with minimal $m$.
\end{problem}

The $T$-count minimization problem in $\{\mathrm{CNOT}, T, S\}$ circuits have first been shown to be equivalent to finding a minimum distance decoding in the order $n-4$ punctured Reed-Muller code of length $2^n - 1$~\cite{amy2019t}, noted $\mathcal{RM}(n-4, n)^*$, which is equivalent to the 3-STR problem~\cite{seroussi1983maximum}.
The complexity class of the 3-STR problem is unknown, however, the related problem of finding the tensor rank of a tensor of order $3$ is NP-complete~\cite{haastad1989tensor}.
And the more general problem of optimizing the number of $T$ gates in a Clifford$+T$ circuit is NP-hard~\cite{van2023optimising}.

An illustrative overview of the process we described for the optimization of the number of $T$ gates in Hadamard-free circuits is provided in Figure~\ref{fig:t_opt_example}.
In Section~\ref{sec:beyond_merging}, we present two algorithms attempting to solve the 3-STR problem and we prove that the number of $T$ gates in the circuit produced by these algorithms is upper bounded by $(n^2 + n)/2+1$.
We evaluate the perfomances of our algorithms in the benchmarks of Section~\ref{subsec:bench_tohpe_todd}.
And we generalize our results for the optimization of the number of $R_Z(\pi/2^d)$ gates, where $d$ is a non-negative integer, in Section~\ref{sec:higher_orders}.

\section{$T$-count reduction algorithms}\label{sec:beyond_merging}

In this section we tackle the 3-STR problem as defined in Section~\ref{sec:preliminaries}.
First, we propose an algorithm for this problem in Subsection~\ref{sub:tohpe} and we prove that the number of $T$ gates in the solution produced by this algorithm is upper bounded by $n(n+1)/2+1$.
Then, in Subsection~\ref{sub:improving_todd}, we show how the complexity of the \texttt{TODD} algorithm of Reference~\cite{heyfron2018efficient} can be reduced and we propose some modifications to this algorithm to improve its performances.

\subsection{Third order homogeneous polynomials elimination algorithm}\label{sub:tohpe}

The key mechanism used by our algorithm for reducing the $T$-count is based on the following theorem.

\begin{theorem}\label{thm:tohpe}
    Let $P$ be a parity table of size $n \times m$ and $P' = P \oplus \bs z \bs y^T$ where $\bs z$ and  $\bs y$ are vectors of size $n$ and $m$ respectively such that
    \begin{align}
        \lvert \bs y \rvert &\equiv 0 \pmod{2} \label{eq:tohpe_condition_1}\\
        \lvert P_\alpha \wedge \bs y \rvert &\equiv 0 \pmod{2} \label{eq:tohpe_condition_2}\\
        \lvert P_\alpha \wedge P_\beta \wedge \bs y \rvert &\equiv 0 \pmod{2}\label{eq:tohpe_condition_3}
    \end{align}
    for all $0 \leq \alpha < \beta < n$.
    Then we have
    \begin{equation}\label{eq:parity_table_equality}
        \lvert P'_\alpha \wedge P'_\beta \wedge P'_\gamma \rvert \equiv \lvert P_\alpha \wedge P_\beta \wedge P_\gamma \rvert \pmod{2}
    \end{equation}
    for all $0 \leq \alpha \leq \beta \leq \gamma < n$.
\end{theorem}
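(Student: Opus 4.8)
The plan is to compute $\lvert P'_\alpha \wedge P'_\beta \wedge P'_\gamma \rvert$ directly by expanding the definition $P' = P \oplus \bs z \bs y^T$, which gives $P'_\alpha = P_\alpha \oplus z_\alpha \bs y$ for each row index $\alpha$ (here $z_\alpha \in \mathbb{F}_2$ is a scalar, so $z_\alpha \bs y$ is either $\bs 0$ or $\bs y$). First I would establish a small inclusion–exclusion-type identity over $\mathbb{F}_2$ for the Hamming weight of an AND of XORs: for bit-vectors $\bs u, \bs v$ one has $\lvert \bs u \oplus \bs v\rvert = \lvert \bs u\rvert + \lvert \bs v\rvert - 2\lvert \bs u \wedge \bs v\rvert$, and more usefully, componentwise, $(u_i \oplus v_i) = u_i + v_i - 2 u_i v_i$, which when multiplied out inside a triple product and summed gives an expansion of $\lvert (P_\alpha \oplus z_\alpha\bs y)\wedge(P_\beta\oplus z_\beta\bs y)\wedge(P_\gamma\oplus z_\gamma\bs y)\rvert$ into a sum of $2^3 = 8$ terms of the form $\pm z_\alpha^{a} z_\beta^{b} z_\gamma^{c}\lvert (\text{mix of }P\text{'s and }\bs y\text{'s})\rvert$, all to be read $\pmod 2$.

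The main work is then bookkeeping modulo $2$. Reducing mod $2$ kills all the factors of $2$, but one must be careful: terms like $2\lvert P_\alpha \wedge P_\beta \wedge \bs y\rvert$ would vanish mod $2$ anyway, but cross terms arising from squaring (e.g. from $(u\oplus v)\wedge(u \oplus v)$ type collisions when two of $\alpha,\beta,\gamma$ coincide, or from $\bs y \wedge \bs y = \bs y$) survive and must be handled using idempotency of $\wedge$ over $\mathbb{F}_2$ and the hypotheses. Concretely, after expansion the difference $\lvert P'_\alpha\wedge P'_\beta\wedge P'_\gamma\rvert - \lvert P_\alpha\wedge P_\beta\wedge P_\gamma\rvert$ mod $2$ should collapse to a $\mathbb{Z}_2$-linear combination (with coefficients built from $z_\alpha, z_\beta, z_\gamma$) of exactly the three quantities $\lvert \bs y\rvert$, $\lvert P_\delta \wedge \bs y\rvert$, and $\lvert P_\delta \wedge P_\epsilon \wedge \bs y\rvert$ for indices $\delta,\epsilon$ among $\{\alpha,\beta,\gamma\}$ — and possibly $\lvert P_\delta \wedge P_\epsilon \wedge P_\zeta \wedge \bs y \rvert$, which I would need to argue also reduces (it does not appear since $\bs y$ enters each term at most once after using idempotency, or a $\lvert P_\alpha \wedge P_\beta \wedge P_\gamma \wedge \bs y\rvert$ term appears with an even coefficient). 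Each of these is $\equiv 0 \pmod 2$ by hypotheses~\eqref{eq:tohpe_condition_1}, \eqref{eq:tohpe_condition_2}, \eqref{eq:tohpe_condition_3} respectively (using that $\lvert P_\alpha \wedge P_\alpha \wedge \bs y\rvert = \lvert P_\alpha \wedge \bs y\rvert$ to fold the diagonal cases into condition~\eqref{eq:tohpe_condition_2}), so the difference is $0$.

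I expect the main obstacle to be organizing the $8$-term expansion cleanly, especially tracking which terms collapse under idempotency when the indices $\alpha,\beta,\gamma$ are not all distinct (the three cases $\alpha<\beta<\gamma$, $\alpha=\beta<\gamma$, $\alpha=\beta=\gamma$ behave slightly differently), and checking that no term survives that is not covered by the three hypotheses. A cleaner route I would try first, to avoid case analysis, is to introduce the indicator formulation: work over $\mathbb{Z}$ with the substitution $x \mapsto x$ for bits and expand $\prod_{j\in\{\alpha,\beta,\gamma\}}(P_j \oplus z_j \bs y)$ componentwise using $a\oplus b = a + b - 2ab$, sum over components, and reduce mod $2$ at the very end — this mechanizes the collapse and makes the appeal to the three conditions immediate. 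Either way the proof is a finite, if slightly fiddly, mod-$2$ computation with no conceptual depth beyond the identity $a \oplus b \equiv a + b \pmod 2$ and idempotency of $\wedge$.
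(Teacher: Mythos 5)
Your plan is correct and is essentially the paper's own proof: expand $P'_j = P_j \oplus z_j\bs y$, distribute the triple AND into $2^3$ terms, reduce the Hamming weight of an XOR to a sum of Hamming weights mod $2$, use $\bs y \wedge \bs y = \bs y$, and kill every term except $\lvert P_\alpha \wedge P_\beta \wedge P_\gamma\rvert$ via the three hypotheses (with the diagonal cases $\lvert P_\alpha\wedge P_\alpha\wedge\bs y\rvert$ folded into condition~\eqref{eq:tohpe_condition_2}, exactly as you note). Your worry about a surviving $\lvert P_\alpha\wedge P_\beta\wedge P_\gamma\wedge\bs y\rvert$ term is moot, since each factor contributes either $P_j$ or $z_j\bs y$ but never both, so that term never arises.
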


\begin{proof}
    For all $\alpha, \beta, \gamma$ satisfying $0 \leq \alpha \leq \beta \leq \gamma < n$, we have:
    \begin{equation}\label{eq:tohpe_proof}
    \begin{aligned}
        \lvert P'_\alpha \wedge P'_\beta \wedge P'_\gamma \rvert &= \lvert (P_\alpha \oplus z_\alpha \bs y) \wedge (P_\beta \oplus z_\beta \bs y) \wedge (P_\gamma \oplus z_\gamma \bs y) \rvert \\
          &= \lvert \left[(P_\alpha \wedge P_\beta) \oplus z_\beta(P_\alpha \wedge \bs y) \oplus z_\alpha(P_\beta \wedge \bs y) \oplus z_\alpha z_\beta \bs y\right] \wedge (P_\gamma \oplus z_\gamma \bs y) \rvert \\
          &\equiv \lvert P_\alpha \wedge P_\beta \wedge P_\gamma \rvert + z_\gamma\lvert P_\alpha \wedge P_\beta \wedge \bs y \rvert + z_\beta\lvert P_\alpha \wedge P_\gamma \wedge \bs y \rvert + z_\alpha \lvert P_\beta \wedge P_\gamma \wedge \bs y \rvert \\
          & \quad + z_\beta z_\gamma\lvert P_\alpha \wedge \bs y \rvert + z_\alpha z_\gamma\lvert P_\beta \wedge \bs y \rvert + z_\alpha z_\beta\lvert P_\gamma \wedge \bs y \rvert + z_\alpha z_\beta z_\gamma\lvert \bs y \rvert \pmod{2} \\
&\equiv \lvert P_\alpha \wedge P_\beta \wedge P_\gamma \rvert \pmod{2}
    \end{aligned}
    \end{equation}
\end{proof}

If the conditions of Theorem~\ref{thm:tohpe} are satisfied, then the set of columns selected by $\bs y$ are forming a weighted polynomial that can be divided into two weighted polynomials $f$ and $f'$, where $f$ is a third order homogeneous weighted polynomial:
\begin{equation}\label{eq:third_order_homogeneous_polynomial}
    f(\bs x) = 4 \sum_{\alpha < \beta < \gamma}^n c_{\alpha, \beta, \gamma} x_\alpha x_\beta x_\gamma \pmod{8}
\end{equation}
where $c_{\alpha, \beta, \gamma} = \lvert P_\alpha \wedge P_\beta \wedge P_\gamma \wedge \bs y \rvert \pmod{2}$, and where $f'$  is a Clifford weighted polynomial:
\begin{equation}
    f'(\bs x) = \sum_{\alpha}^n l_\alpha x_\alpha + 2 \sum_{\alpha < \beta}^n q_{\alpha, \beta} x_\alpha x_\beta \pmod{8}
\end{equation}
where $l_\alpha \in \mathbb{Z}_8$, $q_{\alpha, \beta} \in \mathbb{Z}_4$ and $l_\alpha \equiv q_{\alpha, \beta} \equiv 0 \pmod{2}$.
The unitary gate associated with the weighted polynomial $f$ belongs to the $\mathcal{D}^C_3$ group: it is implementable using only CCZ gates~\cite{campbell2017unified}.
It has already been shown that such weighted polynomials can be exploited to reduce the $T$-count, notably via the subadditivity theorem of Reference~\cite{campbell2017unified}.
Let $U \in \mathcal{D}_3$ act on $n$ qubits, in the following we define $\tau[U]$ as the optimal $T$-count to implement $U$ without ancillary qubits:
\begin{equation}
    \tau[U] = \min \{t \mid U = C_0 T_1 C_1 \ldots T_t C_t, \{C_0, \ldots, C_t\} \in \mathcal{C}_n^* \}
\end{equation}
where $\mathcal{C}_n^*$ is the subgroup of Clifford operators which can be implemented with CNOT and $S$ gates.
The subadditivity theorem of Reference~\cite{campbell2017unified} states that if $\tau[U_1] \equiv 1 \pmod{2}$ and $\tau[U_2] > 0$, then $\tau[U_1 \otimes U_2] < \tau[U_1] + \tau[U_2]$ where $U_1 \in \mathcal{D}^C_3$ and $U_2 \in \mathcal{D}_3$.
Based on Theorem~\ref{thm:tohpe}, we can actually remove the condition $\tau[U_1] \equiv 1 \pmod{2}$, which gives the following theorem.

\begin{theorem}[Subadditivity theorem]\label{thm:subadditivity}
    Let $U_1 \in \mathcal{D}^C_3$, and $U_2 \in \mathcal{D}_3$.
    If $\tau[U_1], \tau[U_2] > 0$, then $\tau[U_1 \otimes U_2] < \tau[U_1] + \tau[U_2]$.
\end{theorem}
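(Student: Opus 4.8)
The plan is to deduce the statement from the original subadditivity theorem of Reference~\cite{campbell2017unified} by showing that its extra hypothesis is never binding, i.e.\ that $\tau[U_1]$ is automatically odd whenever $U_1 \in \mathcal{D}^C_3$ and $\tau[U_1] > 0$. First I would record the structural fact behind this: since $U_1 \in \mathcal{D}^C_3$ it is a product of CCZ gates, hence its weighted polynomial is third order homogeneous, of the form of Equation~\ref{eq:third_order_homogeneous_polynomial}, so its signature tensor $\mathcal{A}$ satisfies $\mathcal{A}_{\alpha, \alpha, \alpha} = 0$ and $\mathcal{A}_{\alpha, \alpha, \beta} = 0$ for all $\alpha \neq \beta$. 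By Equation~\ref{eq:tensor_parity_table_equality}, this means that in any parity table $P$ representing $U_1$ each row has even Hamming weight and each pair of rows has even overlap: $\lvert P_\alpha \rvert \equiv \lvert P_\alpha \wedge P_\beta \rvert \equiv 0 \pmod{2}$.

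Next I would take an optimal parity table $P$ for $U_1$, with $m = \tau[U_1]$ columns, and assume for contradiction that $m$ is even. I then apply Theorem~\ref{thm:tohpe} with $\bs z$ equal to the first column of $P$ and $\bs y = \bs 1$. Its first hypothesis holds because $\lvert \bs y \rvert = m$ is even, and its second and third hypotheses reduce, using $\lvert P_\alpha \wedge \bs 1 \rvert = \lvert P_\alpha \rvert$ and $\lvert P_\alpha \wedge P_\beta \wedge \bs 1 \rvert = \lvert P_\alpha \wedge P_\beta \rvert$, to the two parity facts above. Hence $P' = P \oplus \bs z \bs y^T$ has the same signature tensor as $P$, while its first column is now the zero vector. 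That zero column contributes the identically-zero linear form to the phase polynomial and can be dropped without altering the implemented operator, leaving a parity table with $m - 1$ columns whose signature tensor is still that of $U_1$; this gives an implementation of $U_1$ using only $m - 1$ $T$ gates, which contradicts the minimality of $m = \tau[U_1]$. Therefore $\tau[U_1]$ is odd, and since $U_2 \in \mathcal{D}_3$ with $\tau[U_2] > 0$, the subadditivity theorem of Reference~\cite{campbell2017unified} applies verbatim and yields $\tau[U_1 \otimes U_2] < \tau[U_1] + \tau[U_2]$. As a byproduct, the same argument shows that the hypothesis $\tau[U_1] \equiv 1 \pmod 2$ of Reference~\cite{campbell2017unified} was already implied by $U_1 \in \mathcal{D}^C_3$ together with $\tau[U_1] > 0$.

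If one instead wants an argument that does not rely on Reference~\cite{campbell2017unified}, one has to reproduce the mechanism behind it. I would form the block-diagonal parity table from optimal implementations of $U_1$ and $U_2$; conjugating $U_2$ by a CNOT circuit changes neither $\tau[U_2]$ nor membership in $\mathcal{D}_3$, so one may assume some column of $U_2$'s parity table is a single qubit $q$. Then build a new table by setting the entry on qubit $q$ to $1$ in every column inherited from $U_1$'s parity table and deleting that single-qubit column, which leaves $\tau[U_1] + \tau[U_2] - 1$ columns. The main obstacle is then verifying that this new table still has the signature tensor of $U_1 \otimes U_2$: one checks entry by entry that the entries mixing the two blocks vanish---this is where the even row weights and pairwise overlaps of $U_1$'s parity table enter---and that the entry $\mathcal{A}_{q,q,q}$ comes out correctly, which is where the evenness of $\tau[U_1] - 1$ is used. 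It is worth noting why this last construction must be a row enlargement rather than a direct application of Theorem~\ref{thm:tohpe}: the column operation there forces the selected columns to XOR to the zero vector, and no single nonzero column of $U_2$'s parity table can be adjoined to the columns of $U_1$'s parity table while keeping that property together with the even-Hamming-weight hypothesis, so the clean argument of the first two paragraphs cannot by itself reach across the two tensor factors.
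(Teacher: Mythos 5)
Your proof is correct, but it takes a genuinely different route from the paper's. The paper proves the statement from scratch: it forms the block table $W = \begin{bmatrix} P & Q \end{bmatrix}$ and applies Theorem~\ref{thm:tohpe} with $\bs y = \bs 1$ supported on $P$'s columns only but with the cross-block vector $\bs z = P_{:,i} \oplus Q_{:,j}$, so that the rank-one update turns the $i$th column of the $P$ block into a duplicate of $Q_{:,j}$, which is then cancelled against it; the parity of $\tau[U_1]$ is absorbed by optionally appending the column $\bs z$. You instead isolate the lemma that $\tau[U_1]$ is automatically odd whenever $U_1 \in \mathcal{D}^C_3$ and $\tau[U_1] > 0$ --- your contradiction argument via Theorem~\ref{thm:tohpe} with $\bs y = \bs 1$ and $\bs z$ equal to a column of $P$ is sound, and is essentially the same move the paper makes near the end of the proof of Theorem~\ref{thm:tohpe_upper_bound} --- and then invoke the theorem of Reference~\cite{campbell2017unified} verbatim. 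This is shorter, extracts a reusable structural fact, and explains \emph{why} the parity hypothesis of the original theorem can be dropped (it was never an extra hypothesis); the cost is the reliance on the external result, which your third paragraph discharges by reconstructing its row-enlargement argument, and that reconstruction checks out (the mixed tensor entries vanish by the even row weights and pairwise overlaps, and the entry at $(q,q,q)$ uses the oddness of $\tau[U_1]$). One small correction: your closing claim that Theorem~\ref{thm:tohpe} cannot by itself reach across the two tensor factors is accurate only if the column selector $\bs y$ is what must straddle the blocks; the paper's proof shows that letting $\bs z$ straddle them while $\bs y$ stays inside $P$'s columns does reach across, which is precisely the self-contained column-operation argument you concluded was unavailable.
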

\begin{proof}
    Let $W = \begin{bmatrix} P & Q \end{bmatrix}$ be a parity table such that $P$ and $Q$ are the parity tables associated with the implementation of $U_1$ and $U_2$ and which have $\tau[U_1]$ and $\tau[U_2]$ columns respectively.
    Then, because $U_1 \in \mathcal{D}^C_3$, $P$ satisfies the following equations: 
    \begin{align}
        \lvert P_\alpha \rvert &\equiv 0 \pmod{2} \\
        \lvert P_\alpha \wedge P_\beta \rvert &\equiv 0 \pmod{2}
    \end{align}
    for all $\alpha, \beta$.
    Let $\bs z = P_{:,i} \oplus Q_{:,j}$ for any $i$ and $j$ satisfying $0\leq i < \tau[U_1]$, $0\leq j < \tau[U_2]$, where $P_{:,i}$ denotes the $i$th column of the parity table $P$.
    And let $P'$ be a parity table such that
    $$
    P' = 
    \begin{cases}
        P \oplus \bs z \bs 1^T &\text{if $\tau[U_1] \equiv 0 \pmod{2}$},\\ 
        \begin{bmatrix} P \oplus \bs z \bs 1^T & \bs z \end{bmatrix} &\text{otherwise}.\\ 
    \end{cases}
    $$
    Then, as stated by Theorem~\ref{thm:tohpe}, the parity table $P'$ satisfies 
    \begin{equation}
        \lvert P'_\alpha \wedge P'_\beta \wedge P'_\gamma \rvert \equiv \lvert P_\alpha \wedge P_\beta \wedge P_\gamma \rvert \pmod{2}
    \end{equation}
    for all $\alpha, \beta, \gamma$.
    And so the parity table $W' = \begin{bmatrix} P' & Q \end{bmatrix}$, which has at most one more column than $W$, also satisfies 
    \begin{equation}\label{eq:w_equality}
        \lvert W'_\alpha \wedge W'_\beta \wedge W'_\gamma \rvert \equiv \lvert W_\alpha \wedge W_\beta \wedge W_\gamma \rvert \pmod{2}
    \end{equation}
    for all $\alpha, \beta, \gamma$.
    However, we can notice that $P'_{:,i} = Q_{:,j}$.
    Therefore, by removing these two columns from $W'$ Equation~\ref{eq:w_equality} still holds and $W'$ has at least one less column than $W$.
    The parity table $W'$ implements the unitary $U_1 \otimes U_2$ up to a Clifford operator and has at most $\tau[U_1] + \tau[U_2] - 1$ columns, thus we have $\tau[U_1 \otimes U_2] \leq \tau[U_1] + \tau[U_2] - 1 < \tau[U_1] + \tau[U_2]$.
\end{proof}

Based on this subadditivity theorem and on Theorem~\ref{thm:tohpe}, we can derive the following upper bound on the number of $T$ gates in a $\{\mathrm{CNOT}, S, T\}$ circuit.
\begin{theorem}\label{thm:tohpe_upper_bound}
    The number of $T$ gates in an $n$-qubits $\{\mathrm{CNOT}$, $T$, $S\}$ circuit can be upper bounded by
    \begin{equation}
        2 \lfloor (n^2 + n)/4 \rfloor + 1 \leq (n^2 + n)/2 + 1
    \end{equation}
    in polynomial time.
\end{theorem}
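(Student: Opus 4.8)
The plan is to argue at the level of parity tables. Given an $n$-qubit $\{\mathrm{CNOT},S,T\}$ circuit, describe its diagonal part by a parity table $P$ of size $n\times m$, where $m$ is the $T$-count, and then repeatedly apply the update $P\mapsto P\oplus\bs z\bs y^{T}$ of Theorem~\ref{thm:tohpe} to shrink $m$ while keeping the signature tensor — hence the implemented operator up to a $\{\mathrm{CNOT},S\}$ Clifford — unchanged, until no move is available; synthesizing the resulting parity table together with the Clifford correction and the linear permutation then produces a circuit with the claimed $T$-count. The key observation is that the three conditions imposed on $\bs y\in\mathbb{F}_2^{m}$ in Theorem~\ref{thm:tohpe} are homogeneous $\mathbb{F}_2$-linear equations in the entries of $\bs y$: one parity equation, $n$ first-order equations, and $\binom n2$ second-order equations, i.e.\ $1+n+\binom n2=(n^2+n)/2+1$ equations in total.

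First I would run the basic loop. Whenever $m\geq (n^2+n)/2+2$, the solution space of the first- and second-order conditions alone ($(n^2+n)/2$ equations) has dimension at least $2$, hence it contains a nonzero $\bs y$ of even Hamming weight; such a $\bs y$ satisfies all three conditions of Theorem~\ref{thm:tohpe}, and taking $\bs z=P_{:,j}$ for any column $j$ in the support of $\bs y$ turns column $j$ into the zero vector, which implements no $T$ gate and is deleted. This strictly decreases $m$; each iteration is a Gaussian elimination on an $O(n^2)\times m$ system plus an outer-product update, and at most $m$ iterations occur, so in time polynomial in $n$ and the initial $T$-count we reach $m\leq (n^2+n)/2+1$.

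To obtain the sharper value $2\lfloor(n^2+n)/4\rfloor+1$ — the largest odd integer not exceeding $(n^2+n)/2+1$ — I would analyse the stuck configuration. If no further move is possible then every nonzero solution of the first- and second-order conditions has odd weight (otherwise the basic move would apply); but a subspace of $\mathbb{F}_2^{m}$ whose nonzero vectors all have odd weight is at most one-dimensional, so that system has rank at least $m-1$, which forces $m\leq (n^2+n)/2+1$, and when equality holds the solution space is exactly $\{\bs 0,\bs y_0\}$ with $\lvert\bs y_0\rvert$ odd. If $(n^2+n)/2$ is odd then $m$ is even, so $\bs y_0\neq\bs 1$ and $\lvert\bs y_0\rvert<m$; then a modified update with $\bs z=P_{:,k}\oplus P_{:,k'}$, for $k$ in and $k'$ outside the support of $\bs y_0$, makes columns $k$ and $k'$ equal, and deleting this duplicated pair (a $\{\mathrm{CNOT},S\}$ Clifford) together with re-appending the single parity $\bs z$ restores the signature tensor while lowering $m$ by one. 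Thus the loop can only terminate at $m\leq 2\lfloor(n^2+n)/4\rfloor+1$; the subadditivity Theorem~\ref{thm:subadditivity} provides an alternative route to the last column in the remaining extremal regime, where the all-ones vector meets the first- and second-order conditions and the operator lies in $\mathcal{D}^C_3$.

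The $(n^2+n)/2+1$ bound itself is thus just a rank count against Theorem~\ref{thm:tohpe}; the step I expect to require the most care is the stuck-case analysis — pinning down exactly which parity tables can stop the loop, verifying that in each of them either the all-ones vector already satisfies the parity condition or the collapse-into-a-duplicate-column variant is available, and checking that all of this stays polynomial-time — so that the extremal constant comes out as $2\lfloor(n^2+n)/4\rfloor+1$ rather than merely $(n^2+n)/2+1$.
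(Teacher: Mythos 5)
Your proposal is correct and follows essentially the same route as the paper: the same matrix $L$ of first- and second-order conditions, the same rank count giving nullity at least one (resp.\ two) at $m=(n^2+n)/2+1$ (resp.\ larger), and the same two column-removal moves derived from Theorem~\ref{thm:tohpe} and the subadditivity mechanism. The only cosmetic differences are that you prefer zeroing a column via an even-weight kernel vector where the paper defaults to the duplicate-pair move with $\bs y\neq\bs 1$, and your observation that a subspace whose nonzero vectors all have odd weight is at most one-dimensional repackages the paper's terminal case analysis slightly more cleanly.
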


\begin{proof}
    Let $U$ be a unitary gate implementable by a $\{\mathrm{CNOT}, S, T\}$ gate set, let $P$ be a parity table of size $n \times m$ which implements $U$ up to an operator implementable over the $\{\mathrm{CNOT}$, $S\}$ gate set, and let $L$ be a matrix with rows labelled by $(\alpha\beta)$ such that
    \begin{align}
        L_{\alpha\beta} &= P_\alpha \wedge P_\beta
    \end{align}
    for all $\alpha, \beta$ satisfying $0\leq \alpha \leq \beta < n$.
    If $P$ has strictly more than $(n^2 + n)/2 + 1$ columns then we can necessarily find a non-zero vector $\bs y$ satisfying $L \bs y = \bs 0$ and $\bs y \neq \bs 1$ because $L$ has $(n^2 + n)/2$ rows.
    Note that such vector $\bs y$ necessarily satisfies Equations~\ref{eq:tohpe_condition_2} and~\ref{eq:tohpe_condition_3} of Theorem~\ref{thm:tohpe}.
    We can then divide $P$ into two non-empty parity tables $P^{(1)}$ and $P^{(2)}$ where the column $P_{:,i}$ belongs to $P^{(1)}$ if and only if $y_i = 1$ and to $P^{(2)}$ otherwise.
    The parity tables $P^{(1)}$ and $P^{(2)}$ are implementations of some unitary gates $U_1 \in \mathcal{D}^C_3$ and $U_2 \in \mathcal{D}_3$ respectively.
    The subadditivity theorem (Theorem~\ref{thm:subadditivity}) can then be exploited to reduce the number of columns of $P$.
    Let $\bs z = P_{:,i} \oplus P_{:,j}$ where $i$ and $j$ are satisfying $y_i = 1$ and $y_j = 0$, and let 
    $$
    P' = 
    \begin{cases}
        P \oplus \bs z \bs y^T &\text{if $\lvert \bs y \rvert \equiv 0 \pmod{2}$},\\ 
        \begin{bmatrix} P \oplus \bs z \bs y^T & \bs z \end{bmatrix} &\text{otherwise}.\\ 
    \end{cases}
    $$
    By Theorem~\ref{thm:tohpe}, $P'$ implements the same unitary gate as $P$ up to an operator implementable over the $\{\mathrm{CNOT}$, $S\}$ gate set.
    The parity table $P'$ has at most one more column than $P$ and we have $P'_{:,i} = P'_{:,j}$.
    Therefore, the columns $i$ and $j$ can be removed from $P'$, which entails that $P'$ has at least one less column than $P$.
    We showed that if the number of columns of $P$ is strictly greater than $(n^2 + n)/2 + 1$, then the number of columns of $P$ can be reduced by at least one in polynomial time.
    Moreover, if the number of columns of $P$ is equal to $(n^2 + n)/2 + 1$ and is even, then we can necessarily find a non-zero vector $\bs y$ satisfying $L \bs y = \bs 0$ because $L$ has $(n^2 + n)/2$ rows.
    If there exist $i$ such that $y_i = 0$ then we can reduce the number of columns of $P$ as described above.
    Otherwise we must have $\lvert \bs y \rvert \equiv 0 \pmod{2}$, and so $\bs y$ satisfies the Equations of Theorem~\ref{thm:tohpe}.
    Therefore, if $P' = P \oplus \bs z \bs y^T$ where $\bs z$ is equal to the $i$th column of $P$ for any $i$, then $P'$ implements the same unitary gate as $P$ up to an operator implementable over the $\{\mathrm{CNOT}$, $S\}$ gate set.
    The parity table $P'$ has the same number of columns as $P$ but its $i$th column is equal to the null vector and can therefore be removed, which leads to a parity table containing $(n^2 + n)/2$ columns.
    The polynomial-time procedure described above to reduce the number of columns of $P$ can be repeated until $P$ has a number of columns lower or equal to
    \begin{equation}
        2 \lfloor (n^2 + n)/4 \rfloor + 1 \leq (n^2 + n)/2 + 1
    \end{equation}
\end{proof}

Note that, for $n>5$, this upper bound is better than the previously best known upper bound of $(n^2 + 3n - 14)/2$~\cite{campbell2017unified}.
The proof of Theorem~\ref{thm:tohpe_upper_bound} provides a straightforward algorithm for achieving this upper bound in polynomial time.
We propose some improvements regarding this approach by providing an algorithm whose pseudo-code is given in Algorithm~\ref{alg:tohpe}.

The algorithm starts by computing the set $Z$ which contains all the vectors $\bs z$ which can potentially be used to transform $P$ via Theorem~\ref{thm:tohpe} and reduce its number of columns:
\begin{equation}
        Z = \{P_{:,i} \oplus P_{:,j} \mid 0 \leq i < j < m \} \cup \{P_{:,i}  \mid 0 \leq i < m \}.
\end{equation}
For each vector $\bs z \in Z$, the algorithm also computes the set $S^{(\bs z)}$ of pairs of indices $\{i, j\}$ satisfying $P_{:,i} \oplus P_{:,j} = \bs z$ in the case where $i\neq j$ or satisfying $P_{:,i} = \bs z$ in the case where $i = j$.
Let $L$ be a matrix with rows labelled by $(\alpha\beta)$ such that
\begin{align}
    L_{\alpha\beta} &= P_\alpha \wedge P_\beta
\end{align}
for all $\alpha < \beta$.
In order to exploit the subadditivity theorem, the vector $\bs y$ must satisfy $L\bs y = \bs 0$, $\bs y \neq \bs 0$ and $\bs y \neq \bs 1$.
However, in the case where $\bs y = \bs 1$ and $\lvert \bs y \rvert \equiv 0 \pmod{2}$ the number of columns of $P$ can still be reduced.
Indeed, in such case the vector $\bs y$ satisfies all the condition of Theorem~\ref{thm:tohpe}, therefore the parity table $P'$ defined as $P' = P \oplus P_{:,i} \bs 1^T$ for any $i$ is equivalent to $P$ and its $i$th column is equal to the null vector and can therefore be removed, which reduces the number of columns by one.
To summarize, the number of columns of $P$ can be reduced if $\bs y$ satisfies $L\bs y = \bs 0$ and $\bs y \neq \bs 0$ and $\bs y \neq \bs 1$ or $\lvert \bs y \rvert \equiv 0 \pmod{2}$.
If no such $\bs y$ is found, Algorithm~\ref{alg:tohpe} will stop and return $P$.
Otherwise, the algorithm will select a vector $\bs z$ such that the number of duplicated columns plus the number of all-zero columns in the parity table $P' = P \oplus \bs z \bs y^T$ is maximized.
This number corresponds to the number of columns that can be removed from $P$ if $\bs z$ is chosen (minus one in the case where $\lvert \bs y \rvert \equiv 1 \pmod{2}$), and is given by the following objective function:
\begin{equation}
-\lvert \bs y \rvert \pmod{2} + \sum_{\{i, j\} \in S^{(\bs z)}} 
\begin{cases} 
2(y_i \oplus y_j) & \text{if } i \neq j, \\
y_i + 2(y_i \oplus 1)(\lvert \bs y \rvert \pmod{2}) & \text{if } i = j. 
\end{cases}
\end{equation}
The case where $i \neq j$ is straightforward, as the $i$th column will be equal to the $j$th column of $P'$ if and only if $y_i \oplus y_j = 1$, which results in the removal of 2 columns.
In the case where $i = j$, the equation
\begin{equation}\label{eq:t_opt_i_eq_j}
y_i + 2(y_i \oplus 1)(\lvert \bs y \rvert \pmod{2}) 
\end{equation}
is equal to $1$ if $y_i = 1$ because the $i$th column of $P'$ is the null vector, leading to the removal of 1 column.
Otherwise, if $y_i = 0$ and $\lvert \bs y \rvert$ is odd, then Equation~\ref{eq:t_opt_i_eq_j} is equal to $2$ because the $i$th column of $P'$ will be equal the $\bs z$ column vector that is added to $P'$ in the case where $\lvert \bs y \rvert$  is odd, resulting in the removal of 2 columns.
Once the vector $\bs z$ maximizing this objective function has been found, Algorithm~\ref{alg:tohpe} then computes the new parity table $P'$ which contains fewer columns, and performs a recursive call.

A variant of this algorithm could consist in finding both vectors $\bs z$ and $\bs y$ that are maximizing the objective function.
However, our experiments on this approach showed that it significantly increases the complexity of the algorithm for a rather marginal gain in the number of $T$ gates.
The performances of Algorithm~\ref{alg:tohpe} are evaluated in Section~\ref{subsec:bench_tohpe_todd}.

\begin{algorithm}[t]
    \caption{Third order homogeneous polynomials elimination algorithm}
    \label{alg:tohpe}
	\SetAlgoLined
	\SetArgSty{textnormal}
	\SetKwFunction{proc}{TOHPE}
	\SetKwInput{KwInput}{Input}
	\SetKwInput{KwOutput}{Output}
    \KwInput{A parity table $P$ of size $n \times m$.}
    \KwOutput{An equivalent parity table with an optimized number of columns.}
	\SetKwProg{Fn}{procedure}{}{}
    \Fn{\proc{$P$}}{
        $Z \leftarrow \{P_{:,i} \oplus P_{:,j} \mid 0 \leq i < j < m \} \cup \{P_{:,i}  \mid 0 \leq i < m \}$ \\
        \ForAll{$\bs z \in Z$} {
            $S^{(\bs z)} \leftarrow \{\{i, j\} \mid P_{:,i} \oplus P_{:,j} = \bs z\} \cup \{\{i, i\} \mid P_{:,i} = \bs z\}$
        }
        $L \leftarrow$ matrix whose rows are forming the set $\{P_i \wedge P_j \mid 0 \leq i \leq j < n\}$ \\
        \If{$\nexists\bs y$ such that $L\bs y = \bs 0$ and $\bs y \neq \bs 0$ and ($\bs y \neq \bs 1$ or $\lvert \bs y \rvert \equiv 0 \pmod{2}$)}{
            \Return $P$
        }
        $\bs y \leftarrow$ any vector such that $L\bs y = \bs 0$ and $\bs y \neq \bs 0$ and ($\bs y \neq \bs 1$ or $\lvert \bs y \rvert \equiv 0 \pmod{2}$) \\
        $\displaystyle \bs z \leftarrow \argmax_{\bs z \in Z} \set[\Big]{-\lvert \bs y \rvert \pmod{2} +\!\sum_{\{i, j\} \in S^{(\bs z)}} 2 (y_i \oplus y_j) + \delta_{ij}\left[y_i + 2(y_i \oplus 1)(\lvert \bs y \rvert \pmod{2})\right]}$ \\
        $P' \leftarrow P \oplus \bs z \bs y^T$ \\
        \If{$\lvert \bs y \rvert \equiv 1 \pmod{2}$}{
            $P' \leftarrow \begin{bmatrix} P' & \bs z \end{bmatrix}$ \\
        }
        $P' \leftarrow P'$ with all its duplicated and all-zero columns removed \\
        \Return $\texttt{TOHPE}(P')$ \\
	}
\end{algorithm}

\paragraph{Complexity analysis.}
The set $Z$ can be computed with $\mathcal{O}(nm^2)$ operations, and all the sets $S^{(\bs z)}$ can be created with the same time complexity.
The matrix $L$ has $\mathcal{O}(n^2)$ rows and a number of columns that is at most equal to $m$.
Therefore performing a Gaussian elimination to compute a generating set of the right nullspace of $L$ implies a complexity of $\mathcal{O}(n^2m^2)$.
The vector $\bs z$ satisfying the argmax function can be computed in $\mathcal{O}(m^2)$ operations because the union of all the $S^{(\bs z)}$ sets contains $\mathcal{O}(m^2)$ elements.
Updating the parity table $P$ induces $\mathcal{O}(nm)$ operations.
And the algorithm is performing no more than $m$ recursive calls.
Thus, the overall complexity of Algorithm~\ref{alg:tohpe} is $\mathcal{O}(n^2m^3)$.

\subsection{Improving the \texttt{TODD} algorithm}\label{sub:improving_todd}

In this section we show how the \texttt{TODD} algorithm proposed in Reference~\cite{heyfron2018efficient} can be improved.
We first describe the algorithm and demonstrate how its complexity can be reduced.
We then propose a modified version of this algorithm to improve its performances.

The key mechanism of the \texttt{TODD} algorithm rests on the following theorem, which was first proven in Reference~\cite{heyfron2018efficient}.
We provide its proof for completeness in Appendix~\ref{app:todd}.

\begin{theorem}\label{thm:todd}
    Let $P$ be a parity table of size $n \times m$ and $P' = P \oplus \bs z \bs y^T$ where $\bs z$ and  $\bs y$ are vectors of size $n$ and $m$ respectively such that
    \begin{align}
        \lvert \bs y \rvert &\equiv 0 \pmod{2} \label{eq:condition_1} \\
        \lvert P_\alpha \wedge \bs y \rvert &\equiv 0 \pmod{2} \label{eq:condition_2} \\
        \lvert \left[ z_\alpha(P_\beta \wedge P_\gamma) \oplus z_\beta(P_\alpha \wedge P_\gamma) \oplus z_\gamma(P_\alpha \wedge P_\beta)\right] \wedge \bs y \rvert &\equiv 0 \pmod{2} \label{eq:condition_3}
    \end{align}
    for all $0 \leq \alpha < \beta < \gamma < n$.
    Then we have
    \begin{equation}\label{eq:p_equality}
        \lvert P'_\alpha \wedge P'_\beta \wedge P'_\gamma \rvert \equiv \lvert P_\alpha \wedge P_\beta \wedge P_\gamma \rvert \pmod{2}
    \end{equation}
    for all $0 \leq \alpha \leq \beta \leq \gamma < n$.
\end{theorem}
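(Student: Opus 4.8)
The plan is to prove Theorem~\ref{thm:todd} by a single direct expansion, exactly in the style of the proof of Theorem~\ref{thm:tohpe}. The starting point is that the $\alpha$th row of $P'$ equals $P_\alpha \oplus z_\alpha \bs y$, so $\lvert P'_\alpha \wedge P'_\beta \wedge P'_\gamma \rvert = \lvert (P_\alpha \oplus z_\alpha \bs y) \wedge (P_\beta \oplus z_\beta \bs y) \wedge (P_\gamma \oplus z_\gamma \bs y) \rvert$. Since $\wedge$ distributes over $\oplus$ and we have the $\mathbb{F}_2$ identities $\bs y \wedge \bs y = \bs y$ and $z_\alpha^2 = z_\alpha$ (these also make the expansion valid when some of $\alpha,\beta,\gamma$ coincide), the triple conjunction expands into eight terms: $P_\alpha \wedge P_\beta \wedge P_\gamma$; the three terms $z_\gamma(P_\alpha\wedge P_\beta\wedge\bs y)$, $z_\beta(P_\alpha\wedge P_\gamma\wedge\bs y)$, $z_\alpha(P_\beta\wedge P_\gamma\wedge\bs y)$; the three terms $z_\beta z_\gamma(P_\alpha\wedge\bs y)$, $z_\alpha z_\gamma(P_\beta\wedge\bs y)$, $z_\alpha z_\beta(P_\gamma\wedge\bs y)$; and $z_\alpha z_\beta z_\gamma\,\bs y$.

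Next I would take Hamming weights modulo $2$, using $\lvert \bs u \oplus \bs v \rvert \equiv \lvert \bs u \rvert + \lvert \bs v \rvert \pmod 2$ to turn the XOR of the eight terms into a sum of eight weights. The term $z_\alpha z_\beta z_\gamma \lvert \bs y \rvert$ vanishes by Equation~\ref{eq:condition_1}; the three terms $z_\beta z_\gamma \lvert P_\alpha \wedge \bs y \rvert$, $z_\alpha z_\gamma \lvert P_\beta \wedge \bs y \rvert$, $z_\alpha z_\beta \lvert P_\gamma \wedge \bs y \rvert$ vanish by Equation~\ref{eq:condition_2}. What survives is $\lvert P_\alpha \wedge P_\beta \wedge P_\gamma \rvert + z_\gamma \lvert P_\alpha \wedge P_\beta \wedge \bs y \rvert + z_\beta \lvert P_\alpha \wedge P_\gamma \wedge \bs y \rvert + z_\alpha \lvert P_\beta \wedge P_\gamma \wedge \bs y \rvert \pmod 2$. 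The crucial point here — and precisely what makes \texttt{TODD} less restrictive than the transformation of Theorem~\ref{thm:tohpe} — is that these three weighted terms need not vanish individually. Pulling the scalar inside, $z_\gamma \lvert P_\alpha \wedge P_\beta \wedge \bs y \rvert = \lvert z_\gamma (P_\alpha \wedge P_\beta) \wedge \bs y \rvert$, and applying additivity of the Hamming weight modulo $2$ once more recombines the three into $\lvert [z_\alpha(P_\beta \wedge P_\gamma) \oplus z_\beta(P_\alpha \wedge P_\gamma) \oplus z_\gamma(P_\alpha \wedge P_\beta)] \wedge \bs y \rvert$, which is the left-hand side of Equation~\ref{eq:condition_3} and hence $0$. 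This establishes Equation~\ref{eq:p_equality} whenever $\alpha < \beta < \gamma$.

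Finally I would dispatch the degenerate cases in which two or three of $\alpha,\beta,\gamma$ agree: the same expansion applies, and these cases in fact only invoke Equations~\ref{eq:condition_1} and~\ref{eq:condition_2}. When $\alpha=\beta=\gamma$ the claim reduces to $\lvert P'_\alpha \rvert \equiv \lvert P_\alpha \rvert \pmod 2$, which is immediate from Equation~\ref{eq:condition_1}; when exactly two indices coincide it reduces to the two-row identity $\lvert P'_\alpha \wedge P'_\gamma \rvert \equiv \lvert P_\alpha \wedge P_\gamma \rvert \pmod 2$, which is the four-term sub-expansion already handled above. I do not expect any genuine obstacle: the only real work is the careful bookkeeping of the eight-term expansion and checking that the one surviving combination matches Equation~\ref{eq:condition_3} on the nose.
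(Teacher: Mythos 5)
Your proposal is correct and follows essentially the same route as the paper's own proof: the same eight-term expansion of $\lvert (P_\alpha \oplus z_\alpha \bs y) \wedge (P_\beta \oplus z_\beta \bs y) \wedge (P_\gamma \oplus z_\gamma \bs y)\rvert$, elimination of four terms via Equations~\ref{eq:condition_1} and~\ref{eq:condition_2}, recombination of the three surviving terms into the left-hand side of Equation~\ref{eq:condition_3} for $\alpha<\beta<\gamma$, and a separate check of the degenerate cases using only Equations~\ref{eq:condition_1} and~\ref{eq:condition_2}. The only cosmetic difference is that the paper handles the coincident-index cases by letting the bracketed expression collapse (two of its three terms cancel), whereas you reduce them to the lower-order identities for $\lvert P'_\alpha\rvert$ and $\lvert P'_\alpha \wedge P'_\gamma\rvert$; these are equivalent.
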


The \texttt{TODD} algorithm exploits this theorem for optimizing the number of $T$ gates as follows.
Let $P$ be a parity table of size $n \times m$, let $\bs z = P_{:,i} \oplus P_{:,j}$ where $i \neq j$ and let $\chi$ be a matrix that contains a row equal to $P_\alpha$ for every $\alpha$ and a row equal to 
$$z_\alpha(P_\beta \wedge P_\gamma) \oplus z_\beta(P_\alpha \wedge P_\gamma) \oplus z_\gamma(P_\alpha \wedge P_\beta)$$
for every triple $\alpha, \beta, \gamma$ satisfying $0 \leq \alpha < \beta < \gamma < n$.
A vector $\bs y$ satisfying the conditions~\ref{eq:condition_2} and~\ref{eq:condition_3} of Theorem~\ref{thm:todd} can be found by computed the nullspace of $\chi$, i.e.\ if $\bs y$ satisfies $\chi \bs y = \bs 0$ then $\bs y$ also satisfies the conditions~\ref{eq:condition_2} and~\ref{eq:condition_3} of Theorem~\ref{thm:todd}.
If $\bs y$ satisfies $\chi \bs y = \bs 0$ and $y_i \oplus y_j = 1$, then the number of columns of $P$ can be reduced.
Let $$P' =
\begin{cases}
    P \oplus \bs z \bs y^T &\text{if $\lvert \bs y \rvert \equiv 0 \pmod{2}$},\\ 
    \begin{bmatrix} P \oplus \bs z \bs y^T & \bs z \end{bmatrix} &\text{otherwise}.\\ 
\end{cases}
$$
Then, as stated by Theorem~\ref{thm:todd}, $P$ and $P'$ are equivalent.
Moreover, we have $P'_{:,i} = P'_{:,j}$ because $\bs z = P_{:,i} \oplus P_{:,j}$, therefore these two columns can be removed from $P'$, which results in $P'$ having at least one less column than $P$.

If there doesn't exist any $\bs y$ satisfying $\chi \bs y = \bs 0$ and $y_i \oplus y_j = 1$, then a different $\bs z$ vector can be considered.
In the worst case the \texttt{TODD} algorithm performs this search for all the vectors $\bs z$ in the set $\{P_{:,i} \oplus P_{:j} \mid 0 \leq i < j < n\}$, which contains $\mathcal{O}(m^2)$ elements.
The matrix $\chi$ is composed of $m$ columns and $\mathcal{O}(n^3)$ rows.
Performing a Gaussian elimination in order to find some $\bs y$ satisfying $\chi \bs y = \bs 0$ induces a complexity of $\mathcal{O}(n^3m^2)$.
This procedure must be performed for the $\mathcal{O}(m^2)$ $\bs z$ vectors evaluated.
And everytime a simplification is found the algorithm restart the same procedure on the new parity table $P' = P \oplus \bs z\bs y^T$, this happens at most $m$ times.
Thus, the overall complexity of the \texttt{TODD} algorithm is $\mathcal{O}(n^3m^5)$.
The important worst-case complexity of the \texttt{TODD} algorithm makes it rapidly impractical for instances of increasing size.
We will demonstrate how this complexity can be reduced to $\mathcal{O}(n^4m^3)$ by providing an alternative and more efficient way of checking whether or not there exists a vector $\bs y$ satisfying the conditions of Theorem~\ref{thm:todd}, and therefore avoiding the expensive Gaussian elimination required to solve the system $\chi \bs y = \bs 0$.

First, we can notice that Equation~\ref{eq:condition_3} of Theorem~\ref{thm:todd} can be greatly simplified in some cases, depending on the value of $\bs z$.
For instance, if $z_\alpha = 1$ for some $\alpha$ and $z_\beta = 0$ for all $\beta \neq \alpha$, then Equation~\ref{eq:condition_3} is equal to
\begin{equation}\label{eq:simplified_condition_3}
    \lvert z_\alpha(P_\beta \wedge P_\gamma) \wedge \bs y \rvert \equiv 0 \pmod{2}.
\end{equation}
In this specific case, a vector $\bs y$ satisfies Equations~\ref{eq:condition_2} and~\ref{eq:condition_3} of Theorem~\ref{thm:todd} if it satisfies $\chi \bs y = \bs 0$ where $\chi$ contains a row equal to Equation~\ref{eq:condition_2} for all $\alpha$ and a row equal to Equation~\ref{eq:simplified_condition_3} for all $\alpha, \beta, \gamma$ satisfying $\alpha \neq \beta \neq \gamma$, $\beta < \gamma$ and $z_\alpha = 1$.
In total, $\chi$ contains only $\mathcal{O}(n^2)$ rows instead of $\mathcal{O}(n^3)$ rows in the original $\texttt{TODD}$ algorithm, and so the nullspace of $\chi$ can be computed in $\mathcal{O}(n^2m^2)$ operations.
In the more general case where $\lvert \bs z \rvert > 1$, we can perform a change of basis to always end up in this specific case where $\lvert \bs z \rvert = 1$.
A change of basis induces a complexity of $\mathcal{O}(nm)$ because $\mathcal{O}(n)$ additions must be performed between the rows of $P$.
This must be done for all the $\mathcal{O}(m^2)$ $\bs z$ vectors evaluated, and the whole procedure is repeated $\mathcal{O}(m)$ times.
Thus, this version of the $\texttt{TODD}$ algorithm has a complexity of $\mathcal{O}(n^2m^5)$, which is better than the $\mathcal{O}(n^3m^5)$ complexity of the original $\texttt{TODD}$ algorithm.

Instead of performing a change of basis, we can rely on the following theorem to efficiently find the vectors $\bs z $ and $\bs y$ satisfying the Equations~\ref{eq:condition_2} and~\ref{eq:condition_3} of Theorem~\ref{thm:todd}.

\begin{theorem}\label{thm:fasttodd}
    Let $P$ be a parity table of size $n \times m$, and let $\bs z$ and $\bs y$ be vectors of size $n$ and $m$ respectively and such that $\lvert \bs y \rvert \equiv 0 \pmod{2}$.
    Let $L$ and $X$ be matrices with rows labelled by $(\alpha\beta)$ such that
    \begin{align}
        L_{\alpha\beta} &= P_\alpha \wedge P_\beta
    \end{align}
    and 
    \begin{equation}
        X_{\alpha\beta,\gamma} = z_\alpha \delta_{\beta\gamma} \oplus z_\beta \delta_{\alpha\gamma}
    \end{equation}
    for all $\alpha, \beta, \gamma$ satisfying $0 \leq \alpha \leq \beta < n$ and $0 \leq \gamma < n$, and where $\delta$ is the Kronecker delta defined as follows:
    \begin{equation}
        \delta_{\alpha\beta} =
        \begin{cases}
            0 &\text{if $\alpha \neq \beta$},\\
            1 &\text{if $\alpha = \beta$}.
        \end{cases}
    \end{equation}
    There exists $\bs y'$ such that $L\bs y \oplus X\bs y' = \bs 0$ if and only if the following conditions are satisfied:
    \begin{align}
        \lvert P_\alpha \wedge \bs y \rvert &\equiv 0 \pmod{2} \label{eq:condition_2_b} \\
        \lvert \left[ z_\alpha(P_\beta \wedge P_\gamma) \oplus z_\beta(P_\alpha \wedge P_\gamma) \oplus z_\gamma(P_\alpha \wedge P_\beta)\right] \wedge \bs y \rvert &\equiv 0 \pmod{2} \label{eq:condition_3_b}
    \end{align}
    for all $0 \leq \alpha \leq \beta \leq \gamma < n$.
\end{theorem}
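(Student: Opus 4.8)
The plan is to read the matrix identity $L\bs y \oplus X\bs y' = \bs 0$ off row by row, writing $b$ for the symmetric array with $b_{\alpha\beta}\equiv \lvert P_\alpha\wedge P_\beta\wedge\bs y\rvert\pmod 2$, so that the rows of $L\bs y$ are exactly the entries of $b$. Since $X_{\alpha\alpha,\gamma}=z_\alpha\delta_{\alpha\gamma}\oplus z_\alpha\delta_{\alpha\gamma}=0$, the diagonal rows $(\alpha\alpha)$ of the identity read $\lvert P_\alpha\wedge\bs y\rvert\equiv 0$, i.e.\ Equation~\ref{eq:condition_2_b}, which is therefore forced in both directions and pins the diagonal of $b$ to zero. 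On an off-diagonal row, $(X\bs y')_{\alpha\beta}=z_\alpha y'_\beta\oplus z_\beta y'_\alpha$ — which is symmetric in $\alpha,\beta$ — so the identity on the off-diagonal rows is equivalent to $b_{\alpha\beta}=z_\alpha y'_\beta\oplus z_\beta y'_\alpha$ for all $\alpha\neq\beta$. Thus what must be shown is that the symmetric, zero-diagonal array $b$ has this form for some $\bs y'$ if and only if Equations~\ref{eq:condition_2_b} and~\ref{eq:condition_3_b} hold; along the way I will use $\lvert(a\bs u\oplus b\bs v\oplus c\bs w)\wedge\bs y\rvert\equiv a\lvert\bs u\wedge\bs y\rvert+b\lvert\bs v\wedge\bs y\rvert+c\lvert\bs w\wedge\bs y\rvert\pmod 2$ for scalars $a,b,c$, so that Equation~\ref{eq:condition_3_b} is simply $z_\alpha b_{\beta\gamma}\oplus z_\beta b_{\alpha\gamma}\oplus z_\gamma b_{\alpha\beta}\equiv 0$.

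For the forward ("only if") direction, assume such a $\bs y'$ exists. Given indices $\alpha,\beta,\gamma$, take the three equalities $b_{\alpha\beta}=z_\alpha y'_\beta\oplus z_\beta y'_\alpha$, $b_{\alpha\gamma}=z_\alpha y'_\gamma\oplus z_\gamma y'_\alpha$, $b_{\beta\gamma}=z_\beta y'_\gamma\oplus z_\gamma y'_\beta$, multiply them by $z_\gamma$, $z_\beta$, $z_\alpha$ respectively and XOR: each of the three monomials appearing on the right occurs twice, so the right side vanishes and we are left with $z_\gamma b_{\alpha\beta}\oplus z_\beta b_{\alpha\gamma}\oplus z_\alpha b_{\beta\gamma}\equiv 0$, which is Equation~\ref{eq:condition_3_b}. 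Equation~\ref{eq:condition_2_b} is necessary as already observed, and the instances of Equation~\ref{eq:condition_3_b} with a repeated index collapse to Equation~\ref{eq:condition_2_b} (e.g.\ for $\alpha=\beta$ the bracket reduces to $z_\gamma P_\alpha$), so both conditions follow.

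For the reverse ("if") direction I would write down $\bs y'$ by hand. Assuming $\bs z\neq\bs 0$, fix $k$ with $z_k=1$ and set $y'_k:=0$ and $y'_\gamma:=b_{k\gamma}=\lvert P_k\wedge P_\gamma\wedge\bs y\rvert$ for $\gamma\neq k$; then verify $L\bs y\oplus X\bs y'=\bs 0$ on each of the three kinds of rows. Diagonal rows hold by Equation~\ref{eq:condition_2_b}. A row for a pair $\{k,\beta\}$ contributes $b_{k\beta}\oplus(z_k y'_\beta\oplus z_\beta y'_k)=b_{k\beta}\oplus y'_\beta=0$ by construction (using $z_k=1$, $y'_k=0$, and $b_{\beta k}=b_{k\beta}$). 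A row $(\alpha\beta)$ with $\alpha,\beta\neq k$ requires $b_{\alpha\beta}=z_\alpha b_{k\beta}\oplus z_\beta b_{k\alpha}$, which is exactly Equation~\ref{eq:condition_3_b} for the distinct triple $\{k,\alpha,\beta\}$ once we divide by $z_k=1$ and use that $z_\alpha b_{\beta\gamma}\oplus z_\beta b_{\alpha\gamma}\oplus z_\gamma b_{\alpha\beta}$ is symmetric under permutations of $\{\alpha,\beta,\gamma\}$. I expect the only non-routine step to be guessing this closed form for $\bs y'$; everything after is a short case check on the index pattern. The one thing to watch is the degenerate case $\bs z=\bs 0$, where $X=0$ and the stated equivalence would force $L\bs y=\bs 0$ — strictly stronger than Equations~\ref{eq:condition_2_b} and~\ref{eq:condition_3_b} — so I would either add the hypothesis $\bs z\neq\bs 0$ or exclude that case, which is harmless since $\bs z=\bs 0$ leaves the parity table unchanged.
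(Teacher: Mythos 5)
Your proof is correct, but it takes a genuinely different and more direct route than the paper's. The paper first settles the case $\lvert \bs z\rvert = 1$ (where it constructs the same kind of explicit $\bs y'$, namely $y'_\beta = L_{\alpha\beta}\bs y$), and then spends the bulk of the argument reducing the general case to that one via a change of basis $B$ with $\lvert B\bs z\rvert = 1$: it verifies that an elementary row operation on $(P,\bs z)$ preserves both the two conditions and the solvability of $L\bs y \oplus X\bs y' = \bs 0$, with an explicit transformation rule for $\bs y'$. You avoid all of that bookkeeping by staying in the original basis: recasting both the linear system and the second condition in terms of the symmetric array $b_{\alpha\beta} = \lvert P_\alpha \wedge P_\beta \wedge \bs y\rvert$, getting necessity from the three-term cancellation $z_\gamma b_{\alpha\beta}\oplus z_\beta b_{\alpha\gamma}\oplus z_\alpha b_{\beta\gamma}=0$, and getting sufficiency from the closed form $y'_\gamma = b_{k\gamma}$ for any fixed $k$ with $z_k=1$. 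This is shorter and, in my view, more transparent; what the paper's basis-change machinery buys is reuse, since essentially the same invariance lemmas are recycled in its Theorems 5 and 10. Your observation about $\bs z = \bs 0$ is also a genuine (if minor) point: the equivalence as stated fails there (take $P_0=(1,1,0,0)$, $P_1=(1,0,1,0)$, $\bs y = \bs 1$), and the paper's proof silently excludes it too, since no $B$ with $\lvert B\bs z\rvert=1$ exists; in the algorithm this case never arises because such $\bs z$ corresponds to a duplicate or all-zero column, but adding the hypothesis $\bs z\neq\bs 0$ as you suggest is the clean fix.
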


\begin{proof}
    In the following we will use the labels $(\beta\alpha)$ and $(\alpha\beta)$ to refer to the same unique row, for instance $L_{\beta\alpha} = L_{\alpha\beta}$.
    For all $\alpha$ we have $X_{\alpha\alpha} = \bs 0$, which implies that
    \begin{equation}
    \begin{aligned}
        L_{\alpha\alpha} \bs y \oplus X_{\alpha\alpha}\bs y' &= L_{\alpha\alpha} \bs y \equiv \lvert P_\alpha \wedge \bs y \rvert \pmod{2} \\
    \end{aligned}
    \end{equation}
    for all $\alpha$.
    Therefore we have 
    \begin{equation}
        L_{\alpha\alpha}\bs y \oplus X_{\alpha\alpha}\bs y' = 0 \iff \lvert P_\alpha \wedge \bs y \rvert \equiv 0 \pmod{2}
    \end{equation}
    for all $\alpha$.
    It remains to show that there exists $\bs y'$ satisfying $L_{\alpha\beta}\bs y \oplus X_{\alpha\beta}\bs y' = 0$ for all $\alpha \neq \beta$ if and only if 
    \begin{equation}\label{eq:4.1}
        \lvert \left[ z_\alpha (P_\beta \wedge P_\gamma) \oplus z_\beta (P_\alpha \wedge P_\gamma) \oplus z_\gamma (P_\alpha \wedge P_\beta)\right] \wedge \bs y \rvert \equiv 0 \pmod{2}
    \end{equation}
    holds for all $0 \leq \alpha < \beta < \gamma < n$.

    In the case where $\lvert \bs z \rvert = 1$, let $\alpha$ be such that $z_\alpha = 1$.
    Then, for such vector $\bs z$, Equation~\ref{eq:4.1} is equivalent to
    \begin{equation}\label{eq:4.2}
        \lvert P_\beta \wedge P_\gamma \wedge \bs y \rvert \equiv 0 \pmod{2}
    \end{equation}
    for all $\beta, \gamma$ satisfying $0 \leq \beta < \gamma < n$ and $\alpha \neq \beta \neq \gamma$.
    We also have $X_{\beta\gamma}\bs y' = 0$ for any $\bs y'$ and for all $\beta, \gamma$ satisfying $\alpha \neq \beta \neq \gamma$ because $X_{\beta\gamma} = \bs 0$ for all $\beta, \gamma$ satisfying $z_\beta = 0, z_\gamma = 0$.
    We then have
    \begin{equation}
    \begin{aligned}
        L_{\beta\gamma}\bs y \oplus X_{\beta\gamma}\bs y' &= L_{\beta\gamma}\bs y \equiv \lvert P_\beta \wedge P_\gamma \wedge \bs y \rvert \pmod{2}
    \end{aligned}
    \end{equation}
    for all $\beta, \gamma$ satisfying $0 \leq \beta < \gamma < n$ and $\alpha \neq \beta \neq \gamma$.
    And therefore
    \begin{equation}
        L_{\beta\gamma}\bs y \oplus X_{\beta\gamma}\bs y' = \bs 0 \iff \lvert P_\beta \wedge P_\gamma \wedge \bs y \rvert \equiv 0 \pmod{2}
    \end{equation}
    for all $\beta, \gamma$ satisfying $0 \leq \beta < \gamma < n$ and $\alpha \neq \beta \neq \gamma$.
    Moreover, the definition of the matrix $X$ implies the following equality for all $\beta, \gamma$:
    \begin{align}\label{eq:4.4}
        X_{\beta\gamma}\bs y' = y'_\beta z_\gamma \oplus y'_\gamma z_\beta
    \end{align}
    which entails
    \begin{equation}\label{eq:4.3}
        L_{\alpha\beta}\bs y \oplus X_{\alpha\beta}\bs y' = L_{\alpha\beta}\bs y \oplus y'_\beta
    \end{equation}
    for all $\beta$ satisfying $\alpha \neq \beta$.
    Therefore, if $\bs y'$ satisfies $y'_\beta = L_{\alpha\beta}\bs y$ for all $\beta$ such that $\alpha \neq \beta$, then $L\bs y \oplus X\bs y' = \bs 0$ and so Theorem~\ref{thm:fasttodd} is true in the case where $\lvert \bs z \rvert = 1$.\

    Let $B$ be a full rank binary matrix of size $n\times n$ which represents a change of basis, and let $\tilde{L}$ and $\tilde{X}$ be matrices constructed in the same way as $L$ and $X$ but with respect to $BP$ and $B\bs z$.
    The proof of Theorem~\ref{thm:fasttodd} can be completed by proving the two following propositions.
    First, if $\bs y$ satisfies Equations~\ref{eq:condition_2_b} and~\ref{eq:condition_3_b} for $P$ and $\bs z$ then $\bs y$ also satisfies Equations~\ref{eq:condition_2_b} and~\ref{eq:condition_3_b} for $BP$ and $B\bs z$.
    Second, if there exists $\bs y'$ such that $L\bs y \oplus X\bs y' = \bs 0$ then there exists $\tilde{\bs y}'$ such that $\tilde{L}\bs y \oplus \tilde{X}\tilde{\bs y}' = \bs 0$.
    Indeed, if these two propositions are true, then if $\bs y$ satisfies Equations~\ref{eq:condition_2_b} and~\ref{eq:condition_3_b} for $P$ and $\bs z$ it also satisfies Equations~\ref{eq:condition_2_b} and~\ref{eq:condition_3_b} for $BP$ and $B\bs z$, where $B$ is chosen such that $\lvert B\bs z\rvert = 1$.
    Then, as previously demonstrated, there exists a vector $\tilde{\bs y}'$ such that $\tilde{L}\bs y \oplus \tilde{X}\tilde{\bs y}' = \bs 0$, which would imply that there exists a vector $\bs y'$ such that $L\bs y \oplus X\bs y' = \bs 0$.

    Let $\tilde{\bs z}$ and $\tilde{P}$ be such that $\tilde{z}_\alpha = z_\alpha \oplus z_\beta$ and $\tilde{P}_\alpha = P_\alpha \oplus P_\beta$ for some fixed $\alpha, \beta$ satisfying $\alpha \neq \beta$ and $\tilde{z}_\gamma = z_\gamma$, $\tilde{P}_\gamma = P_\gamma$ for all $\gamma$ such that $\gamma \neq \alpha$.
    If $\bs y$ satisfies Equations~\ref{eq:condition_2_b} and~\ref{eq:condition_3_b}, then we have
    \begin{equation}
    \begin{aligned}
        \lvert \tilde{P}_\alpha \wedge \bs y \rvert &= \lvert (P_\alpha \oplus P_\beta) \wedge \bs y \rvert \\
        &\equiv \lvert P_\alpha \wedge \bs y \rvert + \lvert P_\beta \wedge \bs y \rvert \pmod{2} \\
        &\equiv 0 \pmod{2}
    \end{aligned}
    \end{equation}
    and
    \begin{equation}
    \begin{aligned}
        &\lvert [\tilde{z}_\alpha(\tilde{P}_\beta \wedge \tilde{P}_\gamma) \oplus \tilde{z}_\beta(\tilde{P}_\alpha \wedge \tilde{P}_\gamma) \oplus \tilde{z}_\gamma(\tilde{P}_\alpha \wedge \tilde{P}_\beta)] \wedge \bs y \rvert \\
        &= \lvert \left[ (z_\alpha \oplus z_\beta)(P_\beta \wedge P_\gamma) \oplus z_\beta((P_\alpha \oplus P_\beta) \wedge P_\gamma) \oplus z_\gamma((P_\alpha \oplus P_\beta) \wedge P_\beta)\right] \wedge \bs y \rvert \\
        &= \lvert \left[ z_\alpha(P_\beta \wedge P_\gamma) \oplus z_\beta(P_\beta \wedge P_\gamma) \oplus z_\beta(P_\alpha \wedge P_\gamma) \oplus z_\beta(P_\beta \wedge P_\gamma) \oplus z_\gamma(P_\alpha \wedge P_\beta) \oplus z_\gamma P_\beta \right] \wedge \bs y \rvert \\
        &\equiv \lvert \left[ z_\alpha(P_\beta \wedge P_\gamma) \oplus z_\beta(P_\alpha \wedge P_\gamma) \oplus z_\gamma(P_\alpha \wedge P_\beta) \right] \wedge \bs y \rvert + z_\gamma\lvert P_\beta \wedge \bs y \rvert \pmod{2} \\
        &\equiv 0 \pmod{2}
    \end{aligned}
    \end{equation}
    and so $\bs y$ also satisfies Equations~\ref{eq:condition_2_b} and~\ref{eq:condition_3_b} for $\tilde{P}, \tilde{\bs z}$.

    Let $\tilde{L}$ and $\tilde{X}$ be matrices constructed in the same way as $L$ and $X$ but with respect to $\tilde{P}$ and $\tilde{\bs z}$.
    Let's assume that $L\bs y \oplus X\bs y' = \bs 0$, we will show that it implies $\tilde{L}\bs y \oplus \tilde{X}\tilde{\bs y}'=\bs 0$, where $\tilde{\bs y}'$ satisfies $\tilde{y}_\alpha' = y_\alpha' \oplus y_\beta'$ and $\tilde{y}_\gamma' = y_\gamma'$ for all $\gamma$ such that $\gamma \neq \alpha$.
    By using Equation~\ref{eq:4.4} we can deduce that
    \begin{equation}
        \tilde{X}_{\gamma\gamma'}\tilde{\bs y}' = \tilde{y}'_\gamma \tilde{z}_{\gamma'} \oplus \tilde{y}'_{\gamma'} \tilde{z}_{\gamma} = y'_\gamma z_{\gamma'} \oplus y'_{\gamma'} z_{\gamma} = X_{\gamma\gamma'} \bs y'
    \end{equation}
    for all $\gamma, \gamma'$ satisfying $\gamma \neq \alpha$ and $\gamma' \neq \alpha$.
    This entails
    \begin{equation}
        \tilde{L}_{\gamma\gamma'}\bs y \oplus \tilde{X}_{\gamma\gamma'}\tilde{\bs y}' = L_{\gamma\gamma'}\bs y \oplus X_{\gamma\gamma'}\bs y' = 0
    \end{equation}
    for all $\gamma, \gamma'$ satisfying $\gamma \neq \alpha$ and $\gamma' \neq \alpha$.
    Furthermore, for all $\gamma$ such that $\gamma \neq \alpha$, we have
    \begin{equation}
    \begin{aligned}
        \tilde{L}_{\alpha\gamma}\bs y &\equiv \lvert \tilde{P}_\alpha \wedge \tilde{P}_\gamma \wedge \bs y \rvert \pmod{2} \\
        &\equiv \lvert (P_\alpha \oplus P_\beta) \wedge P_\gamma \wedge \bs y \rvert \pmod{2} \\
        &\equiv \lvert P_\alpha \wedge P_\gamma\wedge \bs y \rvert + \lvert P_\beta \wedge P_\gamma \wedge \bs y \rvert \pmod{2} \\
        &\equiv L_{\alpha\gamma}\bs y + L_{\beta\gamma}\bs y \pmod{2} \\
        &\equiv X_{\alpha\gamma}\bs y' + X_{\beta\gamma}\bs y' \pmod{2} \\
    \end{aligned}
    \end{equation}
    which entails
    \begin{equation}
    \begin{aligned}
        \tilde{L}_{\alpha\gamma}\bs y \oplus \tilde{X}_{\alpha\gamma}\tilde{\bs y}' &= X_{\alpha\gamma}\bs y' \oplus X_{\beta\gamma}\bs y' \oplus \tilde{X}_{\alpha\gamma}\tilde{\bs y}' \\
        &= y'_\alpha z_\gamma \oplus y'_\gamma z_\alpha \oplus y'_\beta z_\gamma \oplus y'_\gamma z_\beta \oplus \tilde{y}_\alpha' \tilde{z}_\gamma \oplus \tilde{y}_\gamma' \tilde{z}_\alpha \\
        &= y'_\alpha z_\gamma \oplus y'_\gamma z_\alpha \oplus y'_\beta z_\gamma \oplus y'_\gamma z_\beta \oplus (y'_\alpha \oplus y'_\beta)z_\gamma \oplus y_\gamma' (z_\alpha \oplus z_\beta) \\
        &= 0
    \end{aligned}
    \end{equation}
    by relying on Equation~\ref{eq:4.4}.
    Finally, we have
    \begin{equation}
    \begin{aligned}
        \tilde{L}_{\alpha\alpha}\bs y \oplus \tilde{X}_{\alpha\alpha}\tilde{\bs y}'
        &= \tilde{L}_{\alpha\alpha}\bs y \oplus \tilde{y}'_\alpha \tilde{z}_\alpha \oplus \tilde{y}'_\alpha \tilde{z}_\alpha \\
        &\equiv \lvert \tilde{P}_\alpha \wedge \bs y \rvert \pmod{2} \\
        &\equiv \lvert (P_\alpha \oplus P_\beta) \wedge \bs y \rvert \pmod{2} \\
        &\equiv \lvert P_\alpha \wedge \bs y \rvert + \lvert P_\beta \wedge \bs y \rvert \pmod{2} \\
        &\equiv 0 \pmod{2}.
    \end{aligned}
    \end{equation}
    Thus,
    \begin{equation}
        L\bs y \oplus X\bs y'=\bs 0 \implies \tilde{L}\bs y \oplus \tilde{X}\tilde{\bs y}'=\bs 0
    \end{equation}
    which concludes the proof of Theorem~\ref{thm:fasttodd}.
\end{proof}

Theorem~\ref{thm:fasttodd} can be exploited to design an algorithm equivalent to the \texttt{TODD} algorithm, but which has a lower complexity.
Before introducing this algorithm, we demonstrate how the key mechanism utilized by the \texttt{TODD} algorithm to reduce the $T$-count can be improved.
The conditions given by Equations~\ref{eq:condition_2} and~\ref{eq:condition_3} of Theorem~\ref{thm:todd} are sufficient for Equation~\ref{eq:p_equality} to hold but they are not necessary.
The following theorem gives necessary and sufficient conditions for Equation~\ref{eq:p_equality} to be satisfied.

\begin{theorem}\label{thm:todd_iff}
    Let $P$ be a parity table of size $n \times m$ and $P' = P \oplus \bs z \bs y^T$ where $\bs z$ and $\bs y$ are vectors of size $n$ and $m$ respectively and such that $\lvert \bs y \rvert \equiv 0 \pmod{2}$.
    And let $L$ and $X$ be matrices and $\bs v$ be a vector, all with rows labelled by $(\alpha\beta)$ such that
    \begin{align}
        L_{\alpha\beta} &= P_\alpha \wedge P_\beta \\
        X_{\alpha\beta,\gamma} &= z_\alpha \delta_{\beta\gamma} \oplus z_\beta \delta_{\alpha\gamma} \\
        v_{\alpha\beta} &= z_\alpha \wedge z_\beta 
    \end{align}
    for all $\alpha, \beta, \gamma$ satisfying $0 \leq \alpha \leq \beta < n$ and $0 \leq \gamma < n$, and where $\delta$ is the Kronecker delta defined as follows:
    \begin{equation}
        \delta_{\alpha\beta} =
        \begin{cases}
            0 &\text{if $\alpha \neq \beta$},\\
            1 &\text{if $\alpha = \beta$}.
        \end{cases}
    \end{equation}
    Then, the equality
    \begin{equation}\label{eq:iff_0}
        \lvert P'_\alpha \wedge P'_\beta \wedge P'_\gamma \rvert \equiv \lvert P_\alpha \wedge P_\beta \wedge P_\gamma \rvert \pmod{2}
    \end{equation}
    holds for all $0 \leq \alpha \leq \beta \leq \gamma < n$ if and only if there exists a vector $\bs y'$ and a Boolean $b$ such that $L\bs y \oplus X\bs y' \oplus b\bs v = \bs 0$.
\end{theorem}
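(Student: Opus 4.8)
The plan is to reduce Equation~\ref{eq:iff_0} to the vanishing of a single correction term and then imitate the proof of Theorem~\ref{thm:fasttodd}. Expanding $P'_\alpha\wedge P'_\beta\wedge P'_\gamma=(P_\alpha\oplus z_\alpha\bs y)\wedge(P_\beta\oplus z_\beta\bs y)\wedge(P_\gamma\oplus z_\gamma\bs y)$, taking Hamming weights modulo~$2$, and using $\lvert\bs y\rvert\equiv 0\pmod{2}$ to discard the $z_\alpha z_\beta z_\gamma\lvert\bs y\rvert$ term exactly as in the proof of Theorem~\ref{thm:todd}, one sees that, writing $w_{\alpha\beta}:=\lvert P_\alpha\wedge P_\beta\wedge\bs y\rvert\bmod 2$ (so that $w_{\alpha\alpha}=\lvert P_\alpha\wedge\bs y\rvert\bmod 2$), Equation~\ref{eq:iff_0} for a fixed triple $0\le\alpha\le\beta\le\gamma<n$ is equivalent to $\Delta_{\alpha\beta\gamma}=0$, where
\[
\Delta_{\alpha\beta\gamma}:=z_\gamma w_{\alpha\beta}\oplus z_\beta w_{\alpha\gamma}\oplus z_\alpha w_{\beta\gamma}\oplus z_\beta z_\gamma w_{\alpha\alpha}\oplus z_\alpha z_\gamma w_{\beta\beta}\oplus z_\alpha z_\beta w_{\gamma\gamma}.
\]
On the other side, the row of $L\bs y\oplus X\bs y'\oplus b\bs v$ labelled by $(\alpha\beta)$ equals $w_{\alpha\beta}\oplus z_\alpha y'_\beta\oplus z_\beta y'_\alpha\oplus bz_\alpha z_\beta$, so the right-hand side of the theorem asserts the existence of a pair $(\bs y',b)$ with $w_{\alpha\beta}\equiv z_\alpha y'_\beta\oplus z_\beta y'_\alpha\oplus bz_\alpha z_\beta\pmod{2}$ for all $\alpha\le\beta$, the diagonal case $\alpha=\beta$ reading $w_{\alpha\alpha}\equiv bz_\alpha$. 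With this dictionary the ``if'' direction is immediate: substituting these expressions for $w_{\alpha\beta}$ and $w_{\alpha\alpha}$ into $\Delta_{\alpha\beta\gamma}$, the $\bs y'$-contributions occur in cancelling pairs (for instance $z_\alpha z_\gamma y'_\beta$ appears exactly twice) and the $b$-contributions amount to six copies of $bz_\alpha z_\beta z_\gamma$, so $\Delta_{\alpha\beta\gamma}=0$; the subcases where two or three of $\alpha,\beta,\gamma$ coincide collapse to trivial identities such as $z_\gamma bz_\alpha=z_\alpha bz_\gamma$.

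The ``only if'' direction carries the content, and I would establish it as in Theorem~\ref{thm:fasttodd}: first the base case $\lvert\bs z\rvert=1$, then a change of basis for general $\bs z\ne\bs 0$ (assuming $\bs z\ne\bs 0$, since otherwise $P'=P$). Assume $z_\alpha=1$ and $z_\gamma=0$ for all $\gamma\ne\alpha$. Then for $\beta\ne\alpha$ the row $(\alpha\beta)$ of $L\bs y\oplus X\bs y'\oplus b\bs v$ is $w_{\alpha\beta}\oplus y'_\beta$, the row $(\alpha\alpha)$ is $w_{\alpha\alpha}\oplus b$, and every row $(\beta\gamma)$ with $\beta,\gamma\ne\alpha$ (including $\beta=\gamma$) is simply $w_{\beta\gamma}$; choosing $b=w_{\alpha\alpha}$ and $y'_\beta=w_{\alpha\beta}$ for $\beta\ne\alpha$ annihilates the first two families, while every remaining row vanishes because $w_{\beta\gamma}=\Delta_{\alpha\beta\gamma}=0$ by hypothesis --- the diagonal rows $\beta=\gamma$ relying on the degenerate triples $(\alpha,\beta,\beta)$, which is exactly why Equation~\ref{eq:iff_0} must be assumed for all $\alpha\le\beta\le\gamma$ rather than only for strict triples.

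For general $\bs z\ne\bs 0$ I would pick an invertible $B$ with $\lvert B\bs z\rvert=1$. Equation~\ref{eq:iff_0} holding for all $\alpha\le\beta\le\gamma$ merely says that $P$ and $P'$ have the same signature tensor; since applying $B$ to $P$ and to $P'=P\oplus\bs z\bs y^T$ (note $BP'=BP\oplus(B\bs z)\bs y^T$) transforms both signature tensors by one and the same linear map, this equality --- hence the hypothesis --- is preserved, so by the base case the analogous system for $(BP,B\bs z)$ has a solution. It then remains to transport that solution back through each elementary operation $P_\alpha\mapsto P_\alpha\oplus P_\beta$, $z_\alpha\mapsto z_\alpha\oplus z_\beta$, keeping $b$ fixed and replacing $y'_\alpha$ by $y'_\alpha\oplus y'_\beta$, exactly as in Theorem~\ref{thm:fasttodd}. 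The main obstacle is this transport step now that the extra $b\bs v$ term is present: for the rows $(\gamma\gamma')$ and $(\alpha\gamma)$ with $\gamma,\gamma'\ne\alpha$ the verification is the additive one from Theorem~\ref{thm:fasttodd} together with $\tilde v_{\alpha\gamma}=v_{\alpha\gamma}\oplus v_{\beta\gamma}$, so those rows stay zero; the delicate row is $(\alpha\alpha)$, where $\tilde v_{\alpha\alpha}=\tilde z_\alpha=z_\alpha\oplus z_\beta$ and $\tilde L_{\alpha\alpha}\bs y\equiv w_{\alpha\alpha}\oplus w_{\beta\beta}\pmod{2}$, so one needs $w_{\alpha\alpha}\oplus w_{\beta\beta}\oplus b(z_\alpha\oplus z_\beta)\equiv 0$ --- and this holds precisely because the solution being transported satisfies $w_{\alpha\alpha}\equiv bz_\alpha$ and $w_{\beta\beta}\equiv bz_\beta$. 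Hence $b$ may legitimately be kept fixed throughout the change of basis; composing the elementary operations and invoking invertibility of $B$ for the converse implication then closes the proof.
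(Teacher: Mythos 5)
Your proof is correct and follows essentially the same route as the paper's: reduce Equation~\ref{eq:iff_0} to the vanishing of the degree-three correction term $\Delta_{\alpha\beta\gamma}$, settle the base case $\lvert\bs z\rvert=1$ by the explicit choice $b=w_{\alpha\alpha}$, $y'_\beta=w_{\alpha\beta}$, and transport solutions through elementary changes of basis with $b$ held fixed, exactly as in Theorem~\ref{thm:fasttodd}. The one genuine streamlining is your direct substitution argument for the ``if'' direction, which works for arbitrary $\bs z$ and bypasses the change of basis for that half; you are also right to make explicit the basis-invariance of Equation~\ref{eq:iff_0} and the degenerate case $\bs z=\bs 0$, both of which the paper leaves implicit.
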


\begin{proof}
    In the same way as in the proof of Theorem~\ref{thm:fasttodd}, we will use the labels $(\beta\alpha)$ and $(\alpha\beta)$ to refer to the same unique row, for instance $L_{\beta\alpha} = L_{\alpha\beta}$.
    Analogously to Equation~\ref{eq:tohpe_proof}, we have
    \begin{equation}
    \begin{aligned}\label{eq:iff_1}
        \lvert P'_\alpha \wedge P'_\beta \wedge P'_\gamma \rvert &\equiv \lvert P_\alpha \wedge P_\beta \wedge P_\gamma \rvert + z_\gamma\lvert P_\alpha \wedge P_\beta \wedge \bs y \rvert + z_\beta\lvert P_\alpha \wedge P_\gamma \wedge \bs y \rvert + z_\alpha\lvert P_\beta \wedge P_\gamma \wedge \bs y \rvert \\
          & \quad + z_\beta z_\gamma\lvert P_\alpha \wedge \bs y \rvert + z_\alpha z_\gamma\lvert P_\beta \wedge \bs y \rvert + z_\alpha z_\beta\lvert P_\gamma \wedge \bs y \rvert + z_\alpha z_\beta z_\gamma\lvert \bs y \rvert \pmod{2}
    \end{aligned}
    \end{equation}
    for all $\alpha, \beta, \gamma$ satisfying $0 \leq \alpha \leq \beta \leq \gamma < n$.
    Because $\lvert \bs y \rvert \equiv 0 \pmod{2}$, we can deduce from Equation~\ref{eq:iff_1} that Equations~\ref{eq:iff_0} holds if and only if the following equation is satisfied
    \begin{equation}
    \begin{aligned}\label{eq:iff_2}
        \lvert \left[ z_\gamma (P_\alpha \wedge P_\beta) \oplus z_\beta (P_\alpha \wedge P_\gamma) \oplus z_\alpha (P_\beta \wedge P_\gamma) \oplus z_\beta z_\gamma P_\alpha \oplus z_\alpha z_\gamma P_\beta \oplus z_\alpha z_\beta P_\gamma \right] \wedge \bs y \rvert \equiv 0 \pmod{2}
    \end{aligned}
    \end{equation}
    for all $\alpha, \beta, \gamma$ satisfying $0 \leq \alpha \leq \beta \leq \gamma < n$.

    Let $\bs z$ be such that $\lvert \bs z \rvert = 1$, and let $\alpha$ be such that $z_\alpha = 1$.
    Then, for such vector $\bs z$, Equation~\ref{eq:iff_2} can be rewritten as
    \begin{equation}\label{eq:iff_5}
    \begin{aligned}
        \lvert P_\beta \wedge P_\gamma \wedge \bs y \rvert \equiv 0 \pmod{2}
    \end{aligned}
    \end{equation}
    where $\beta, \gamma$ are satisfying $\beta \neq \alpha$, $\gamma \neq \alpha$, and as
    \begin{equation}\label{eq:iff_4}
    \begin{aligned}
        \lvert P_\gamma \wedge \bs y \rvert \equiv 0 \pmod{2}
    \end{aligned}
    \end{equation}
    where $\beta, \gamma$ are satisfying $\beta = \alpha$ and $\gamma \neq \alpha$.
    Note that Equation~\ref{eq:iff_2} is necessarily satisfied in the case where $\alpha = \beta = \gamma$, and that if Equation~\ref{eq:iff_5} is satisfied then Equation~\ref{eq:iff_4} is also satisfied.
    Therefore, proving Theorem~\ref{thm:todd_iff} in the case where $\lvert\bs z\rvert = 1$ can then be done by showing that there exists a vector $\bs y'$ and a Boolean $b$ such that $L\bs y \oplus X\bs y' \oplus b\bs v = \bs 0$ if and only if Equation~\ref{eq:iff_5} is satisfied.
    By definition we have $X_{\beta\gamma} = \bs 0$ and $v_{\beta\gamma} = 0$ for all $\beta, \gamma$ satisfying $\beta \neq \alpha$ and $\gamma \neq \alpha$.
    Hence, 
    \begin{equation}
        L_{\beta\gamma}\bs y \oplus X_{\beta\gamma}\bs y' \oplus b v_{\beta\gamma} = L_{\beta\gamma}\bs y = \lvert P_\beta \wedge P_\gamma \wedge \bs y\rvert \pmod{2}
    \end{equation}
    for all $\beta, \gamma$ satisfying $\beta \neq \alpha$ and $\gamma \neq \alpha$,
    and so if $L\bs y \oplus X\bs y' \oplus b\bs v = \bs 0$ then $L_{\beta\gamma}\bs y = 0$ which imply that Equation~\ref{eq:iff_5} is satisfied.
    Conversely, if Equation~\ref{eq:iff_5} is satisfied then we must have 
    \begin{equation}
        L_{\beta\gamma}\bs y \oplus X_{\beta\gamma}\bs y' \oplus b v_{\beta\gamma} = L_{\beta\gamma}\bs y = 0
    \end{equation}
    for all $\beta, \gamma$ satisfying $\beta \neq \alpha$ and $\gamma \neq \alpha$ and for any vector $\bs y'$ and Boolean $b$.
    Moreover we have
    \begin{equation}
        L_{\alpha\alpha}\bs y \oplus X_{\alpha\alpha}\bs y' \oplus b v_{\alpha\alpha} = L_{\alpha\alpha}\bs y \oplus b
    \end{equation}
    because $X_{\alpha\alpha} = \bs 0$ and $v_{\alpha\alpha} = 1$.
    And 
    \begin{equation}
        L_{\alpha\beta}\bs y \oplus X_{\alpha\beta}\bs y' \oplus b v_{\alpha\beta} = L_{\alpha\beta}\bs y \oplus y'_{\beta}
    \end{equation}
    where $\beta \neq \alpha$ by using Equation~\ref{eq:4.4} and because $v_{\alpha\beta} = 0$.
    Let $\bs y'$ be such that $y'_\beta = L_{\alpha\beta}\bs y$ for all $\beta \neq \alpha$ and $b$ be such that $b = L_{\alpha\alpha}\bs y$, then we have $L\bs y \oplus X\bs y' \oplus b\bs v = \bs 0$.
    Thus, we proved that
    \begin{equation}
        L\bs y \oplus X\bs y' \oplus b\bs v = \bs 0 \iff \lvert P_\beta \wedge P_\gamma \wedge \bs y \rvert \equiv 0 \pmod{2}
    \end{equation}
    for all $\beta, \gamma$ satisfying $\beta \neq \alpha$ and $\gamma \neq \alpha$, and so Theorem~\ref{thm:todd_iff} is true in the case where $\lvert \bs z \rvert = 1$.

    Let $B$ be a full rank binary matrix of size $n\times n$ which represents a change of basis, and let $\tilde{L}$, $\tilde{X}$, $\tilde{\bs v}$ be constructed in the same way as $L$, $X$ and $\bs v$ but with respect to $BP$ and $B\bs z$.
    The proof of Theorem~\ref{thm:todd_iff} can be completed by proving that if there exists a vector $\bs y'$ and a Boolean $b$ such that $L\bs y \oplus X\bs y' \oplus b \bs v = \bs 0$ then there exists a vector $\tilde{\bs y}'$ and a Boolean $\tilde{b}$ such that $\tilde{L}\bs y \oplus \tilde{X}\tilde{\bs y}' \oplus \tilde{b}\tilde{\bs v} = \bs 0$.
    Suppose this proposition to be true and let $B$ be such that $\lvert B\bs z\rvert = 1$.
    Then, as previously demonstrated, Equation~\ref{eq:iff_0} holds for $BP$ and $B\bs z$ if and only if there exists a vector $\tilde{\bs y}'$ and a Boolean $\tilde{b}$ such that $\tilde{L}\bs y \oplus \tilde{X}\tilde{\bs y}' \oplus \tilde{b}\tilde{\bs v} = \bs 0$, which would imply that there exists a vector $\bs y'$ and a Boolean $b$ such that $L\bs y \oplus X\bs y' \oplus b\bs v = \bs 0$ if and only if Equation~\ref{eq:iff_0} holds for $P$ and $\bs z$.

    Let $\tilde{\bs z}$ and $\tilde{P}$ be such that $\tilde{z}_\alpha = z_\alpha \oplus z_\beta$ and $\tilde{P}_\alpha = P_\alpha \oplus P_\beta$ for some fixed $\alpha, \beta$ satisfying $\alpha \neq \beta$ and $\tilde{z}_\gamma = z_\gamma$, $\tilde{P}_\gamma = P_\gamma$ for all $\gamma$ such that $\gamma \neq \alpha$.
    Let $\tilde{L}$, $\tilde{X}$ and $\tilde{\bs v}$ be constructed in the same way as $L$, $X$ and $\bs v$ but with respect to $\tilde{P}$ and $\tilde{\bs z}$.
    Let's assume that $L\bs y \oplus X\bs y' \oplus b\bs v = \bs 0$, we will show that it implies $\tilde{L}\bs y \oplus \tilde{X} \tilde{\bs y}' \oplus b \tilde{\bs v} = \bs 0$, where $\tilde{\bs y}'$ satisfies $\tilde{y}'_\alpha = y'_\alpha \oplus y'_\beta$ and $\tilde{y}'_\gamma = y'_\gamma$ for all $\gamma$ such that $\gamma \neq \alpha$.
    By using Equation~\ref{eq:4.4} we can deduce that
    \begin{equation}\label{eq:iff_6}
        \tilde{X}_{\gamma\gamma'}\tilde{\bs y}' = \tilde{y}'_\gamma \tilde{z}_{\gamma'} \oplus \tilde{y}'_{\gamma'} \tilde{z}_{\gamma} = y'_\gamma z_{\gamma'} \oplus y'_{\gamma'} z_{\gamma} = X_{\gamma\gamma'} \bs y'
    \end{equation}
    for all $\gamma, \gamma'$ satisfying $\gamma \neq \alpha$ and $\gamma' \neq \alpha$.
    And from the definitions of the vectors $\tilde{\bs v}$ and $\bs v$, we can deduce that
    \begin{equation}\label{eq:iff_7}
        \tilde{v}_{\gamma\gamma'} = \tilde{z}_\gamma \wedge \tilde{z}_\gamma' = z_\gamma \wedge z_\gamma' = v_{\gamma\gamma'}
    \end{equation}
    for all $\gamma, \gamma'$ satisfying $\gamma \neq \alpha$ and $\gamma' \neq \alpha$.
    Equations~\ref{eq:iff_6} and~\ref{eq:iff_7} imply that
    \begin{equation}
        \tilde{L}_{\gamma\gamma'}\bs y \oplus \tilde{X}_{\gamma\gamma'} \tilde{\bs y}' \oplus b\tilde{v}_{\gamma\gamma'} = L_{\gamma\gamma'}\bs y \oplus X_{\gamma\gamma'}\bs y' \oplus b v_{\gamma\gamma'}= 0
    \end{equation}
    for all $\gamma, \gamma'$ satisfying $\gamma \neq \alpha$ and $\gamma' \neq \alpha$.
    Furthermore, for all $\gamma$ such that $\gamma \neq \alpha$, we have
    \begin{equation}
    \begin{aligned}
        \tilde{L}_{\alpha\gamma}\bs y &\equiv \lvert \tilde{P}_\alpha \wedge \tilde{P}_\gamma \wedge \bs y \rvert \pmod{2} \\
        &\equiv \lvert (P_\alpha \oplus P_\beta) \wedge P_\gamma \wedge \bs y \rvert \pmod{2} \\
        &\equiv \lvert P_\alpha \wedge P_\gamma\wedge \bs y \rvert + \lvert P_\beta \wedge P_\gamma \wedge \bs y \rvert \pmod{2} \\
        &\equiv L_{\alpha\gamma}\bs y + L_{\beta\gamma}\bs y \pmod{2} \\
        &\equiv X_{\alpha\gamma}\bs y' + b v_{\alpha\gamma} + X_{\beta\gamma}\bs y' + b v_{\beta\gamma} \pmod{2}
    \end{aligned}
    \end{equation}
    which entails
    \begin{equation}
    \begin{aligned}
        &\tilde{L}_{\alpha\gamma}\bs y \oplus \tilde{X}_{\alpha\gamma}\tilde{\bs y}' \oplus b \tilde{v}_{\alpha\gamma} \\
        &= X_{\alpha\gamma}\bs y' \oplus X_{\beta\gamma}\bs y' \oplus \tilde{X}_{\alpha\gamma}\tilde{\bs y}'\oplus b v_{\alpha\gamma} \oplus b v_{\beta\gamma} \oplus b \tilde{v}_{\alpha\gamma} \\
        &= y'_\alpha z_\gamma \oplus y'_\gamma z_\alpha \oplus y'_\beta z_\gamma \oplus y'_\gamma z_\beta \oplus \tilde{y}_\alpha' \tilde{z}_\gamma \oplus \tilde{y}_\gamma' \tilde{z}_\alpha \oplus b (z_\alpha \wedge z_\gamma) \oplus b (z_\beta \wedge z_\gamma) \oplus b (\tilde{z}_\alpha \wedge z_\gamma) \\
        &= y'_\alpha z_\gamma \oplus y'_\gamma z_\alpha \oplus y'_\beta z_\gamma \oplus y'_\gamma z_\beta \oplus (y'_\alpha \oplus y'_\beta)z_\gamma \oplus y_\gamma' (z_\alpha \oplus z_\beta)\\
        & \quad \oplus b (z_\alpha \wedge z_\gamma) \oplus b (z_\beta \wedge z_\gamma) \oplus b ((z_\alpha \oplus z_\beta) \wedge z_\gamma) \\
        &= 0
    \end{aligned}
    \end{equation}
    by relying on Equation~\ref{eq:4.4}.
    Finally, we have
    \begin{equation}
    \begin{aligned}
        \tilde{L}_{\alpha\alpha}\bs y \oplus \tilde{X}_{\alpha\alpha}\tilde{\bs y}' \oplus b\tilde{v}_{\alpha\alpha}
        &= \tilde{L}_{\alpha\alpha'}\bs y \oplus \tilde{y}'_\alpha \tilde{z}_\alpha \oplus \tilde{y}'_\alpha \tilde{z}_\alpha \oplus b\tilde{z}_\alpha \\
        &= \tilde{L}_{\alpha\alpha'}\bs y \oplus b(z_\alpha \oplus z_\beta) \\
        &\equiv \lvert \tilde{P}_\alpha \wedge \bs y \rvert + bv_{\alpha\alpha} + bv_{\beta\beta} \pmod{2} \\
        &\equiv \lvert (P_\alpha \oplus P_\beta) \wedge \bs y \rvert + bv_{\alpha\alpha} + bv_{\beta\beta}  \pmod{2} \\
        &\equiv \lvert P_\alpha \wedge \bs y \rvert + \lvert P_\beta \wedge \bs y \rvert + bv_{\alpha\alpha} + bv_{\beta\beta}  \pmod{2} \\
        &= L_{\alpha\alpha} \oplus bv_{\alpha\alpha} \oplus L_{\beta\beta} \oplus bv_{\beta\beta} \\
        &= 0 .
    \end{aligned}
    \end{equation}
    Thus,
    \begin{equation}
        L\bs y \oplus X\bs y' \oplus b \bs v = \bs 0 \implies \tilde{L}\bs y \oplus \tilde{X}\tilde{\bs y}' \oplus b \tilde{v} = \bs 0
    \end{equation}
    which concludes the proof of Theorem~\ref{thm:todd_iff}.
\end{proof}

Consider the algorithm whose pseudo-code is given in Algorithm~\ref{alg:fast_todd} and which is built upon Theorem~\ref{thm:todd_iff}.
The algorithm starts by performing a call to the \texttt{TOHPE} algorithm (Algorithm~\ref{alg:tohpe}).
The reason for calling the \texttt{TOHPE} algorithm is that it is more efficient to first search for a vector $\bs y$ satisfying $L\bs y = \bs 0$ and exploit Theorem~\ref{thm:tohpe} than to search for vectors $\bs y$, $\bs y'$ and a Boolean $b$ satisfying $L\bs y \oplus X\bs y' \oplus b \bs v = \bs 0$, where $L, X, \bs v$ are defined as in Theorem~\ref{thm:todd_iff}.
The algorithm then proceeds by creating the matrix $L$ and the set $Z$ which contains all the vectors $\bs z$ that can potentially be used to transform $P$ via Theorem~\ref{thm:todd_iff} and reduce its number of columns.
For each vector $\bs z \in Z$, the algorithm constructs the set $S^{(\bs z)}$ of pairs of indices $\{i, j\}$ satisfying $P_{:,i} \oplus P_{:,j} = \bs z$ in the case where $i\neq j$ or satisfying $P_{:,i} = \bs z$ in the case where $i = j$, as well as the matrix $X^{(\bs z)}$ and the vector $\bs v^{(\bs z)}$ which are constructed in the same way as in Theorem~\ref{thm:todd_iff}.
The set of vectors $\bs y$ satisfying $L\bs y \oplus X^{(\bs z)}\bs y' \oplus b \bs v^{(\bs z)} = \bs 0$ for some vector $\bs y'$ and some Boolean $b$ are stored in $N^{(\bs z)}$.
The algorithm then computes the pair of vectors $\bs z, \bs y$ where $\bs z \in Z$ and $\bs y \in N^{(\bs z)}$ for which the number of duplicated and all-zero columns in $P'$ is maximized, where $P'$ is defined as follows:
\begin{equation}
P' =
\begin{cases}
    P \oplus \bs z \bs y^T &\text{if $\lvert \bs y \rvert \equiv 0 \pmod{2}$},\\ 
    \begin{bmatrix} P \oplus \bs z \bs y^T & \bs z \end{bmatrix} &\text{otherwise}.\\ 
\end{cases}
\end{equation}
This number is given by the following objective function:
\begin{equation}
-\lvert \bs y \rvert \pmod{2} + \sum_{\{i, j\} \in S^{(\bs z)}} 2 (y_i \oplus y_j) + \delta_{ij}\left[y_i + 2(y_i \oplus 1)(\lvert \bs y \rvert \pmod{2})\right]
\end{equation}
which is the same as the one used in Algorithm~\ref{alg:tohpe}.
After computing the vectors $\bs z$ and $\bs y$ maximizing this function, Algorithm~\ref{alg:fast_todd} apply the transformation on the parity table and a recursive call is performed if the number of columns in the parity table has successfully been reduced, otherwise the parity table is returned.

\begin{algorithm}[t]
    \caption{Faster version of the third order duplicate and destroy algorithm}
    \label{alg:fast_todd}
	\SetAlgoLined
	\SetArgSty{textnormal}
	\SetKwFunction{proc}{FastTODD}
	\SetKwInput{KwInput}{Input}
	\SetKwInput{KwOutput}{Output}
    \KwInput{A parity table $P$ of size $n \times m$.}
    \KwOutput{An equivalent parity table with an optimized number of columns.}
	\SetKwProg{Fn}{procedure}{}{}
    \Fn{\proc{$P$}}{
        $P \leftarrow \texttt{TOHPE}(P)$ \\
        $Z \leftarrow \{P_{:,i} \oplus P_{:,j} \mid 0 \leq i < j < m \} \cup \{P_{:,i} \mid 0 \leq i < m \}$ \\
        $L \leftarrow$ matrix whose rows are forming the set $\{P_i \wedge P_j \mid 0 \leq i \leq j < n\}$ labelled by $ij$\\
        \ForAll{$\bs z \in Z$} {
            $S^{(\bs z)} \leftarrow \{\{i, j\} \mid P_{:,i} \oplus P_{:,j} = \bs z\} \cup \{\{i, i\} \mid P_{:,i} = \bs z\}$ \\
            $X^{(\bs z)} \leftarrow$ matrix such that $X^{(\bs z)}_{\alpha\beta, \gamma} = z_\alpha \delta_{\beta\gamma} \oplus z_\beta \delta_{\alpha\gamma}$ \\
            $\bs v^{(\bs z)} \leftarrow$ vector such that $v_{\alpha\beta} = z_\alpha \wedge z_\beta$ \\
            $N^{(\bs z)} \leftarrow$ generators of the set $\{\bs y \mid L\bs y \oplus X^{(\bs z)}\bs y' \oplus b \bs v^{(\bs z)} = \bs 0\}$
        }
        $\displaystyle \bs z, \bs y \leftarrow \!\!\!\argmax_{\bs z \in Z, \bs y \in N^{(\bs z)}} \set[\Big]{-\lvert \bs y \rvert \pmod{2} +\!\!\!\sum_{\{i, j\} \in S^{(\bs z)}} 2 (y_i \oplus y_j) + \delta_{ij}\left[y_i + 2(y_i \oplus 1)(\lvert \bs y \rvert \pmod{2})\right]}$ \\

        $P' \leftarrow P \oplus \bs z \bs y^T$ \\
        \If{$\lvert \bs y \rvert \equiv 1 \pmod{2}$}{
            $P' \leftarrow \begin{bmatrix} P' & \bs z \end{bmatrix}$ \\
        }
        $P' \leftarrow P'$ with all its duplicated and all-zero columns removed \\
        \If{$\text{size}(P') < \text{size}(P)$}{
            \Return $\texttt{FastTODD}(P')$ \\
        }
        \Return $P$ \\
	}
\end{algorithm}

\paragraph{Complexity analysis.}
Performing a call to the \texttt{TOHPE} algorithm induces a complexity of $\mathcal{O}(n^2m^3)$.
The set $Z$ and the sets $S^{(\bs z)}$ can be created with $\mathcal{O}(nm^2)$ operations.
We can take advantage of the fact that $L$ doesn't depend on $\bs z$ to compute $N^{(\bs z)}$ efficiently.
Let $\tilde{L}$ be the reduced column echelon form of $L$ and let $B$ be the matrix such that $\tilde{L}B = L$.
Let $\tilde{X}^{(\bs z)}$ be such that
\begin{equation}
    \tilde{X}^{(\bs z)}_{:,\ell} = X_{:,\ell} \bigoplus_{j \in J^{(\ell)}} \tilde{L}_{:,j}
\end{equation}
where $J^{(\ell)}$ is defined as follows:
\begin{equation}
    J^{(\ell)} = \{j \mid \exists i \text{ such that } X^{(\bs z)}_{i,\ell} = 1, \tilde{L}_{i, j} = 1, \tilde{L}_{i, k} = 0, \forall k \neq j \}.
\end{equation}
And let $\tilde{\bs v}^{(\bs z)}$ be such that
\begin{equation}
    \tilde{\bs v}^{(\bs z)} = \bs v \bigoplus_{j \in J} \tilde{L}_{:,j}
\end{equation}
where $J$ is defined as follows:
\begin{equation}
    J = \{j \mid \exists i \text{ such that } v_i = 1, \tilde{L}_{i, j} = 1, \tilde{L}_{i, k} = 0, \forall k \neq j \}.
\end{equation}
If there exists no vector $\bs y$ such that $\bs y \neq \bs 0$ and $L \bs y = \bs 0$ (which is the case after having executed the \texttt{TOHPE} algorithm, except potentially for $\bs y = \bs 1$ which can be ignored as it cannot be used to reduce the number of columns in the parity table), then there exists a vector $\bs y \neq \bs 0$, a vector $\bs y'$ and a Boolean $b$ such that $L\bs y \oplus X^{(\bs z)}\bs y' \oplus b \bs v^{(\bs z)} = \bs 0$ if and only if there exists a vector $\tilde{\bs y}'$ and a bolean $\tilde{b}$ such that $\tilde{X}^{(\bs z)} \tilde{\bs y}' \oplus \tilde{b} \tilde{\bs v}^{(\bs z)} = \bs 0$.
If the equation $\tilde{X}^{(\bs z)} \tilde{\bs y}' \oplus \tilde{b} \tilde{\bs v}^{(\bs z)} = \bs 0$ is satisfied then the associated vector $\bs y$ is equal to 
\begin{equation}\label{eq:iff_8}
    \bs y = \begin{bmatrix} \tilde{B}^{(X)} & \tilde{B}^{(\bs v)} \end{bmatrix}\begin{bmatrix} \tilde{\bs y}' \\ \tilde{b} \end{bmatrix}
\end{equation}
where $\tilde{B}^{(X)}$ is defined as follows:
\begin{equation}
    \tilde{B}^{(X)}_{:,\ell} = \bigoplus_{j \in J^{(\ell)}} \tilde{L}_{:,j}
\end{equation}
and $\tilde{B}^{(\bs v)}$ is defined as follows:
\begin{equation}
    \tilde{B}^{(\bs v)} = \bigoplus_{j \in J} \tilde{L}_{:,j}.
\end{equation}

Constructing the matrix $\tilde{X}^{(\bs z)}$ can be done in $\mathcal{O}(n^4)$ operations because $X^{(\bs z)}$ has $n$ columns and $\mathcal{O}(n^2)$ rows, and $\lvert X^{(\bs z)}_{:,\ell} \rvert < n$ which imply that the set $J^{(\ell)}$ contains no more than $n$ elements.
Constructing the vector $\tilde{\bs v}^{(\bs z)}$ can also be done in $\mathcal{O}(n^4)$ operations because $\bs v^{(\bs z)}$ has $\mathcal{O}(n^2)$ rows and the set $J$ contains $\mathcal{O}(n^2)$ elements.
Performing a Gaussian elimination on the matrix $\begin{bmatrix} \tilde{X}^{(\bs z)} & \tilde{\bs v}^{(\bs z)} \end{bmatrix}$ to compute $\tilde{\bs y}'$ and $\tilde{b}$ requires $\mathcal{O}(n^4)$ operations because the matrix has $\mathcal{O}(n)$ columns and $\mathcal{O}(n^2)$ rows.
The matrix $\tilde{B}^{(X)}$ and the vector $\tilde{B}^{(\bs v)}$ can be computed with the same computational complexity as the matrix $\tilde{X}^{(\bs z)}$ and the vector $\tilde{\bs v}^{(\bs z)}$.
The right nullspace of $\begin{bmatrix} \tilde{X}^{(\bs z)} & \tilde{\bs v}^{(\bs z)} \end{bmatrix}$ is generated by $\mathcal{O}(n)$ pairs of vector and Boolean $\tilde{\bs y}', \tilde{b}$.
For each one of these pairs, the associated vector $\bs y \in N^{(\bs z)}$ for which there exists a vector $\bs y'$ and a Boolean $b$ satisfying $L\bs y \oplus X^{(\bs z)}\bs y' \oplus b \bs v^{(\bs z)} = \bs 0$ can be computed in $\mathcal{O}(n^3)$ by using Equation~\ref{eq:iff_8}, for a total complexity of $\mathcal{O}(n^4)$.
This must be done for all the $\mathcal{O}(m^2)$ vectors $\bs z$, which implies a complexity of $\mathcal{O}(n^4m^2)$ for computing all the sets $N^{(\bs z)}$.
The vectors $\bs z$ and $\bs y$ satisfying the argmax function can be computed in $\mathcal{O}(n^3m^2)$ operations because the union of all the $S^{(\bs z)}$ sets contains $\mathcal{O}(m^2)$ elements, computing the value $\lvert \bs y \rvert \pmod{2}$ requires $\mathcal{O}(n^2)$ operations and there are $\mathcal{O}(nm^2)$ vectors $\bs y$.
Updating the parity table $P$ induces $\mathcal{O}(nm)$ operations.
Finally, the algorithm is performing no more than $m$ recursive calls.
Thus, the overall worst-case complexity of Algorithm~\ref{alg:fast_todd} is $\mathcal{O}(n^4m^3)$.

This complexity is significantly better than the $\mathcal{O}(n^3m^5)$ complexity of the original \texttt{TODD} algorithm.
Note that $n \leq m$, otherwise the instance of the problem can be simplified by eliminating a row of the parity table.
Indeed, if $n > m$, then a change of basis can be performed onto the parity table $P$ so that there exists a row $\alpha$ satisfying $P_\alpha = \bs 0$, this row can then be removed from the parity table $P$ which leads to a smaller instance of the problem.

\section{Benchmarks}\label{sec:bench}

In this section we evaluate the performances of the algorithms we presented, and we compare them to state-of-the-art alternatives.
The benchmarks are performed on a set of circuits commonly used to compare the performances of $T$-count optimizers, which were obtained from References~\cite{amyGithub} and~\cite{reversibleBenchmarks}.
We also include a circuit implementing the block cipher DEFAULT, as given in Reference~\cite{default}.

In Subsection~\ref{subsec:bench_tohpe_todd}, we benchmark the $T$-count optimizers presented in Section~\ref{sec:beyond_merging}.
We consider the case where ancillary qubits can be used, as well as the case where no ancillary qubits are used.
Then, in Subsection~\ref{sub:bench_gf}, we compare the performance of our algorithms with state-of-the-art methods for the optimization of multiplication circuits in finite fields.

\subsection{\texttt{TOHPE} and \texttt{FastTODD} algorithms}\label{subsec:bench_tohpe_todd}

In this subsection we compare the performances of the \texttt{TOHPE} procedure (Algorithm~\ref{alg:tohpe}), the \texttt{FastTODD} procedure (Algorithm~\ref{alg:fast_todd}), the $\texttt{PHAGE}$ procedure of Reference~\cite{de2020fast}, the \texttt{TODD} procedure of Reference~\cite{heyfron2018efficient} and the deep reinforcement learning method of Reference~\cite{ruiz2024quantum} named AlphaTensor-Quantum.
The \texttt{PHAGE} procedure was implemented in Haskell by the authors of the method~\cite{stompCode}, while the \texttt{TOHPE}, \texttt{FastTODD} and \texttt{TODD} procedures were implemented in the Rust programming language.
Our implementations of the \texttt{TOHPE} and \texttt{FastTODD} procedures used for the benchmarks are open source~\cite{github}.

The execution times of AlphaTensor-Quantum are not reported in the benchmarks because they were not provided in Reference~\cite{ruiz2024quantum}.
However, due to the nature of the method, we can expect it to have much longer execution times than all the other methods.
Also, some circuits had to be split into several parts for optimizing them with AlphaTensor-Quantum in a reasonable time.

In Reference~\cite{ruiz2024quantum}, it is proposed to optimize the number of $T$ gates by implementing $CCZ$ gates using a gadgetization technique presented in Reference~\cite{jones2013low}.
This gadgetization technique uses a ancillary qubit to implement the $CCZ$ with 4 $T$ gates instead of $7$ $T$ gates in the case where no ancillary qubits are used.
Using this gadgetization technique can therefore lead to a better $T$-count.
However, if the purpose of optimizing the number of $T$ gates is to use fewer resources to implement them using magic states distillation protocols, then this approach can be counterproductive.
Indeed, as shown in Reference~\cite{campbell2017unified}, an implementation of a phase polynomial associated with a third order homogeneous weighted polynomial (as in Equation~\ref{eq:third_order_homogeneous_polynomial}) can benefit from a free round of distillation with quadratic error suppression.
By using an ancillary qubit, the gadgetization technique for implementing the $CCZ$ gate breaks this important property of the weighted polynomial.
Such protocols for implementing and distilling a phase polynomial in a single step are refered to as synthillation protocols.
The simplest example of this is that the $CCZ$ gate can be implemented more efficiently by using the corresponding synthillation protocol rather than by distilling $4$ magic states and then implementing the $CCZ$ gate using the gadgetization technique.
Concretely, for distilling $n$ $CCZ$ gates with quadratic error suppression, the synthillation protocol consumes $\mathcal{O}(6n)$ magic states, whereas using the gadgetization technique would consume $\mathcal{O}(12n)$ magic states by using the Bravyi-Haah protocol (which is the magic state distillation protocol with quadratic error suppression that consumes the fewest number of magic states, up to a small constant)~\cite{bravyi2012magic}.
That is why, in order to maintain consistent metrics and to assign the same cost to each $T$ gate, we won't consider the gadgetization technique of Reference~\cite{jones2013low} for implementing the $CCZ$ gates in the benchmarks.

Also, the only gate count metric considered in the benchmarks will be the $T$-count.
It is important to note that all the methods considered in the benchmarks for optimization of the number of $T$ gates can significantly increase the number of CNOT gates in the circuit.
If necessary, a phase polynomial synthesis algorithm can be used to perform the synthesis of the phase polynomial produced by the $T$-count optimizer with an optimized number of CNOT gates~\cite{amy2018controlled, vandaele2022phase}.

We first consider the case where internal Hadamard gates are gadgetized in in Subsection~\ref{sub:bench_tohpe_todd_ancilla}, and then the case where no ancillary qubits are used in Subsection~\ref{sub:bench_tohpe_todd_no_ancilla}.

\subsubsection{With ancillary qubits}\label{sub:bench_tohpe_todd_ancilla}
\begin{table}[t]
\resizebox{1.0\columnwidth}{!}{
\begin{tabular}{lrrrrrrrrrrrr} 
        \toprule
         & \multicolumn{3}{c}{Pre-optimization} & \texttt{Alpha}~\cite{ruiz2024quantum} & \multicolumn{2}{c}{\texttt{PHAGE}~\cite{de2020fast}} & \multicolumn{2}{c}{\texttt{TOHPE}} & \multicolumn{2}{c}{$\texttt{FastTODD}$} & \multicolumn{2}{c}{\texttt{TODD}~\cite{heyfron2018efficient}} \\
        \cmidrule(lr){2-4} \cmidrule(lr){5-5} \cmidrule(lr){6-7} \cmidrule(lr){8-9} \cmidrule(lr){10-11} \cmidrule{12-13}
    Circuit & $n$ & $h$ & $T$-count & $T$-count & $T$-count & t (s) & $T$-count & t (s) & $T$-count & t (s) & $T$-count & t (s) \\
        \midrule
        Adder$_8$ & 24 & 37 & 173 & 139 & 173 & 25 & \bf 119 & 0 & 119 & 1 & 128 & 5580 \\
        Barenco Tof$_3$ & 5 & 3 & 16 & 13 & 13 & 0 & 13 & 0 & 13 & 0 & 14 & 0 \\
        Barenco Tof$_4$ & 7 & 7 & 28 & 23 & 26 & 1 & 23 & 0 & 23 & 0 & 24 & 0 \\
        Barenco Tof$_5$ & 9 & 11 & 40 & 33 & 39 & 11 & 33 & 0 & 33 & 0 & 34 & 0 \\
        Barenco Tof$_{10}$ & 19 & 31 & 100 & 83 & 100 & 27 & 83 & 0 & 83 & 0 & 84 & 237 \\
        CSLA MUX$_3$ & 15 & 6 & 62 & 39 & 46 & 39 & 39 & 0 & 39 & 0 & 42 & 6 \\
        CSUM MUX$_9$ & 30 & 12 & 84 & 71 & 84 & 28 & 71 & 0 & 71 & 0 & 72 & 87 \\
        DEFAULT  & 640 & 11936 & 39744 & - & - & - & \bf 38638 & - & 38638 & - & 39744 & - \\
        Grover$_5$ & 9 & 68 & 166 & 152 & 166 & 26 & \bf 143 & 0 & 143 & 3 & 144 & 6749 \\
        Ham$_{15}$ (high) & 20 & 331 & 1019 & 773 & 1019 & 37 & 691 & 2 & \bf 643 & - & 1001 & - \\
        Ham$_{15}$ (low) & 17 & 29 & 97 & \bf 73 & 97 & 26 & 77 & 0 & 77 & 0 & 76 & 453 \\
        Ham$_{15}$ (med) & 17 & 54 & 212 & 156 & 212 & 28 & 147 & 0 & \bf 137 & 19 & 148 & 21472 \\
        HWB$_{6}$ & 7 & 20 & 75 & 51 & 68 & 117 & 51 & 0 & 51 & 0 & 51 & 29 \\
        HWB$_{8}$ & 12 & 1103 & 3517 & - & 3517 & 228 & \bf 2763 & 169 & 2763 & - & 3517 & - \\
        HWB$_{10}$ & 16 & 5191 & 15891 & - & 15891 & 10422 & \bf 12845 & 51781 & 12845 & - & 15891 & - \\
        HWB$_{11}$ & 15 & 14441 & 44500 & - & - & - & \bf 42643 & - & 42643 & - & 44500 & - \\
        Mod Adder$_{1024}$ & 28 & 304 & 1011 & 762 & 1010 & 36 & 657 & 2 & \bf 575 & - & 957 & - \\
        Mod Mult$_{55}$ & 9 & 3 & 35 & 17& 21 & 0 & 17 & 0 & 17 & 0 & 18 & 0 \\
        Mod Red$_{21}$ & 11 & 17 & 73 & 51 & 72 & 25 & 51 & 0 & 51 & 0 & 53 & 21 \\
        Mod5$_4$ & 5 & 0 & 8 & 7 & 7 & 0 & 7 & 0 & 7 & 0 & 8 & 0 \\
        QCLA Adder$_{10}$ & 36 & 25 & 162 & 135 & 159 & 26 & 113 & 0 & \bf 109 & 4 & 111 & 4216 \\
        QCLA Com$_7$ & 24 & 18 & 95 & 59 & 91 & 28 & 59 & 0 & 59 & 0 & 60 & 149 \\
        QCLA Mod$_7$ & 26 & 58 & 237 & 199 & 237 & 26 & 167 & 0 & \bf 159 & 41 & 168 & 60415 \\
        QFT$_{4}$ & 5 & 38 & 67 & 53 & 55 & 33 & 53 & 0 & 53 & 0 & 55 & 54 \\
        RC Adder$_6$ & 14 & 10 & 47 & 37 & 44 & 32 & 37 & 0 & 37 & 0 & 38 & 2 \\
        Tof$_3$ & 5 & 2 & 15 & 13 & 13 & 0 & 13 & 0 & 13 & 0 & 14 & 0 \\
        Tof$_4$ & 7 & 4 & 23 & 19 & 20 & 0 & 19 & 0 & 19 & 0 & 20 & 0 \\
        Tof$_5$ & 9 & 6 & 31 & 25 & 28 & 1 & 25 & 0 & 25 & 0 & 26 & 0 \\
        Tof$_{10}$ & 19 & 16 & 71 & 55 & 69 & 26 & 55 & 0 & 55 & 0 & 56 & 24 \\
        VBE Adder$_3$ & 10 & 4 & 24 & 19 & 22 & 1 & 19 & 0 & 19 & 0 & 20 & 0 \\
        \bottomrule
\end{tabular}}
\caption{Comparison of different procedures for the optimization of the number of $T$ gates.
    $n$ represents to the number of input qubits in the circuits.
    $h$ denotes the number of internal Hadamard gates in the input circuits.
    All internal Hadamard gates were gadgetized, resulting in $h$ ancillary qubits.
    The $T$-count after optimization and the execution time is reported for each procedure.
    A blank entry in the execution time indicates that the execution couldn’t be carried out in less than a day.
    In such cases, the reported $T$-count is the one obtained after 24 hours of execution.
    A blank entry for the AlphaTensor-Quantum column indicates that no data were reported by the authors on the corresponding circuit.
    A blank entry in the $T$-count for the \texttt{PHAGE} procedure indicates that the execution was halted because the memory usage exceeded 8 terabytes.}\label{tab:bench_gadgetization}
\end{table}

We compare the performances of the algorithms in the case where all the internal Hadamard gates have been gadgetized, as explained in Section~\ref{sub:t_count_hadamard_free}.
All the circuits were pre-optimized using the \texttt{FastTMerge} procedure of Reference~\cite{vandaele2024optimalnumberparametrizedrotations}, followed by the \texttt{InternalHOpt} procedure of Reference~\cite{vandaele2024optimal} to minimize the number of internal Hadamard gates.
This pre-optimization is important as reducing the number of internal Hadamard gates leads to fewer ancillary qubits when these Hadamard gates are gadgetized, and therefore improve the efficiency of the $T$-count optimizers.

The benchmark results are presented in Table~\ref{tab:bench_gadgetization}.
As expected from the complexity analysis of the algorithms, we can notice that the execution times of the \texttt{FastTODD} procedure are much lower than the execution times of the \texttt{TODD} procedure.
Moreover, the \texttt{FastTODD} procedure is providing the best-known $T$-count on every circuits, except for the ``Ham$_{15}$ (low)" circuit.
While the \texttt{TODD} procedure and the \texttt{FastTODD} procedure are based on the same mechanism for reducing the $T$-count, the modifications that we made in the \texttt{FastTODD} procedure are resulting in significantly better $T$-count reduction for some circuits.
For instance, for the ``Mod Adder$_{1024}$" circuit, the \texttt{FastTODD} procedure generates a circuit with a $T$-count that is $40\%$ lower than the $T$-count of the circuit generated by the \texttt{TODD} procedure.
This $40\%$ improvement remains consistent even if we allow the algorithms to terminate by not stopping them after 24 hours.

The \texttt{TOHPE} procedure is also achieving an equivalent or better $T$-count than the state-of-the-art algorithms for all circuits except the ``Ham$_{15}$ (low)" circuit.
Moreover, on every circuit, the execution times of the \texttt{TOHPE} and \texttt{FastTODD} procedures are lower than the execution times of the other procedure (for the ``HWB$_{8}$" and ``HWB$_{10}$" circuits, the \texttt{PHAGE} procedure have lower execution times but but does not improve the $T$-count at all for these circuits).
This demonstrates that our algorithms are faster than state-of-the-art $T$-count optimizers and are producing circuits with a lower or equivalent $T$-count.

\subsubsection{Without ancillary qubits}\label{sub:bench_tohpe_todd_no_ancilla}
\begin{table}[t]
\resizebox{1.0\columnwidth}{!}{
\begin{tabular}{lrrrrrrrrrrrrrrr} 
        \toprule
        & \multicolumn{2}{c}{Pre-optimization} && \multicolumn{2}{c}{\texttt{PHAGE}~\cite{de2020fast}} && \multicolumn{2}{c}{\texttt{TOHPE}} && \multicolumn{2}{c}{$\texttt{FastTODD}$} && \multicolumn{2}{c}{\texttt{TODD}~\cite{heyfron2018efficient}} \\
        \cmidrule(lr){2-3} \cmidrule(lr){5-6} \cmidrule(lr){8-9} \cmidrule(lr){11-12} \cmidrule(lr){14-15}
        Circuit & $n$ & $T$-count && $T$-count & t (s) && $T$-count & t (s) && $T$-count & t (s) && $T$-count & t (s) \\
        \midrule
        Adder$_8$ & 24 & 173 && 172 & 17 && 173 & 0 && \bf 170 & 0 && 172 & 3 \\
        Barenco Tof$_3$ & 5 & 16 && 16 & 0 && 16 & 0 && 16 & 0 && 16 & 0 \\
        Barenco Tof$_4$ & 7 & 28 && 28 & 0 && 28 & 0 && 28 & 0 && 28 & 0 \\
        Barenco Tof$_5$ & 9 & 40 && 40 & 0 && 40 & 0 && 40 & 0 && 40 & 0 \\
        Barenco Tof$_{10}$ & 19 & 100 && 100 & 3 && 100 & 0 && 100 & 0 && 100 & 0 \\
        CSLA MUX$_3$ & 15 & 62 && 50 & 2 && \bf 49 & 0 && 49 & 0 && 50 & 0 \\
        CSUM MUX$_9$ & 30 & 84 && 84 & 79 && \bf 73 & 0 && 73 & 0 && 76 & 7 \\
        DEFAULT & 640 & 39744 && - & - && \bf 39666 & - && 39666 & - && 39744 & - \\
        Grover$_5$ & 9 & 166 && 166 & 0 && 166 & 0 && 166 & 0 && 166 & 0 \\
        Ham$_{15}$ (high) & 20 & 1019 && 1019 & 5 && 1019 & 0 && 1019 & 0 && 1019 & 1 \\
        Ham$_{15}$ (low) & 17 & 97 && 94 & 0 && 94 & 0 && 94 & 0 && 94 & 0 \\
        Ham$_{15}$ (med) & 17 & 212 && 212 & 3 && 212 & 0 && 212 & 0 && 212 & 1 \\
        HWB$_{6}$ & 7 & 75 && 75 & 0 && 75 & 0 && 75 & 0 && 75 & 0 \\
        HWB$_{8}$ & 12 & 3517 && 3508 & 32 && 3498 & 0 && \bf 3487 & 0 && 3493 & 2 \\
        HWB$_{10}$ & 16 & 15891 && 15858 & 392 && 15741 & 1 && \bf 15693 & 2 && 15698 & 14 \\
        HWB$_{11}$ & 15 & 44500 && 44479 & 1418 && 44336 & 3 && \bf 44188 & 5 && 44191 & 28 \\
        HWB$_{12}$ & 20 & 85611 && 85588 & 8153 && 85362 & 10 && \bf 85233 & 16 && 85239 & 135 \\
        Mod Adder$_{1024}$ & 28 & 1011 && 1010 & 5636 && \bf 1009 & 0 && 1009 & 0 && 1010 & 13 \\
        Mod Mult$_{55}$ & 9 & 35 && 28 & 0 && 28 & 0 && 28 & 0 && 28 & 0 \\
        Mod Red$_{21}$ & 11 & 73 && 73 & 0 && 73 & 0 && 73 & 0 && 73 & 0 \\
        QCLA Adder$_{10}$ & 36 & 162 && 161 & 158 && 161 & 0 && 161 & 0 && 161 & 53 \\
        QCLA Com$_7$ & 24 & 95 && 95 & 16 && 95 & 0 && 95 & 0 && 95 & 2 \\
        QCLA Mod$_7$ & 26 & 237 && 237 & 64 && 237 & 0 && 237 & 0 && 237 & 7 \\
        QFT$_{4}$ & 5 & 67 && 67 & 0 && 67 & 0 && 66 & 0 && 66 & 0 \\
        RC Adder$_6$ & 14 & 47 && 47 & 0 && 47 & 0 && 47 & 0 && 47 & 0 \\
        Tof$_3$ & 5 & 15 && 15 & 0 && 15 & 0 && 15 & 0 && 15 & 0 \\
        Tof$_4$ & 7 & 23 && 23 & 0 && 23 & 0 && 23 & 0 && 23 & 0 \\
        Tof$_5$ & 9 & 31 && 31 & 0 && 31 & 0 && 31 & 0 && 31 & 0 \\
        Tof$_{10}$ & 19 & 71 && 71 & 6 && 71 & 0 && 71 & 0 && 71 & 0 \\
        VBE Adder$_3$ & 10 & 24 && 24 & 0 && 24 & 0 && 24 & 0 && 24 & 0 \\
        \bottomrule
\end{tabular}}
\caption{Comparison of different procedures for the optimization of the number of $T$ gates without using any ancillary qubits.
    The value $n$ corresponds to the number of qubits in the circuits.
    The $T$-count after optimization and the execution time is reported for each procedure.
    A blank entry in the execution time indicates that the execution couldn’t be carried out in less than a day.
    In such cases, the reported $T$-count is the one obtained after 24 hours of execution.
    A blank entry in the $T$-count for the \texttt{PHAGE} procedure indicates that the execution was halted because the memory usage exceeded 8 terabytes.}\label{tab:bench_no_ancilla}
\end{table}

The benchmarks results in the case where the internal Hadamard gates are not gadgetized are presented in Table~\ref{tab:bench_no_ancilla}.
All the circuits were pre-optimized using the \texttt{FastTMerge} procedure of Reference~\cite{vandaele2024optimalnumberparametrizedrotations}.
Then the circuits were divided into Hadamard-free subcircuits interposed with Clifford circuits, as described in Subsection~\ref{sub:t_count_hadamard_free} and using Algorithm~\ref{alg:grouping} to minimize the number of Hadamard-free subcircuits.

Similarly to the case where the internal Hadamard gates are gadgetized, the \texttt{FastTODD} procedure provides the best-known $T$-count for all the circuits.
Also, the \texttt{TOHPE} and \texttt{FastTODD} procedures have the lowest execution times.
However, we can notice that the number of $T$ gates was not reduced at all for multiple quantum circuits, and the $T$-count reduction achieved for the other circuits was not very substantial.
This indicates the importance of the gadgetization of internal Hadamard gates for lowering the number of $T$ gates once the \texttt{FastTMerge} procedure described in Reference~\cite{vandaele2024optimalnumberparametrizedrotations} has been applied.
A possible direction for future research work would be to develop effective strategies for optimizing the number of $T$ gates by exploiting the gadgetization of internal Hadamard gates in cases where the number of ancillary at disposal qubits is limited.

\subsection{Galois field multiplier circuits}\label{sub:bench_gf}
\begin{table}[t]
\centering
\begin{tabular}{lrrrrrrrr} 
        \toprule
         & & \multicolumn{2}{c}{\texttt{Alpha}~\cite{ruiz2024quantum}} & Ref.~\cite{vandaele2025quantum} & \multicolumn{2}{c}{$\texttt{FastTODD}$} & \multicolumn{2}{c}{\makecell{Ref.~\cite{vandaele2025quantum} \\ \& \texttt{FastTODD}}} \\
        \cmidrule(lr){3-4} \cmidrule(lr){5-5} \cmidrule(lr){6-7} \cmidrule(lr){8-9}
        Circuit & $T$-count & $T$-count & $CCZ$ & $CCZ$ & $T$-count & t (s) & $T$-count & t (s) \\
        \midrule
        GF$(2^2)$ Mult & 18 & 17 & 3 & 3 & 18 & 0 & 17 & 0 \\
        GF$(2^3)$ Mult & 45 & 29 & 6 & 6 & 36 & 0 & \bf 23 & 0 \\
        GF$(2^4)$ Mult & 68 & \bf 39 & 9 & 9 & 49 & 0 & 43 & 0 \\
        GF$(2^5)$ Mult & 115 & \bf 59 & \bf 13 & 14 & 81 & 0 & 61 & 0 \\
        GF$(2^6)$ Mult & 150 & \bf 77 & 18 & 18 & 113 & 0 & 83 & 0 \\
        GF$(2^7)$ Mult & 217 & \bf 104 & \bf 22 & 23 & 155 & 1 & 111 & 0 \\
        GF$(2^8)$ Mult & 264 & \bf 123 & 29 & \bf 27 & 205 & 1 & 135 & 0 \\
        GF$(2^9)$ Mult & 351 & \bf 161 & \bf 35 & 38 & 257 & 5 & 185 & 2 \\
        GF$(2^{10})$ Mult & 410 & \bf 196 & 46 & \bf 42 & 315 & 17 & 207 & 3 \\
        GF$(2^{16})$ Mult & 1040 & - & - & \bf 81 & 797 & 320 & \bf 425 & 40 \\
        GF$(2^{32})$ Mult & 4128 & - & - & \bf 243 & 3213 & 49386 & \bf 1255 & 16485\\
        GF$(2^{64})$ Mult & 16448 & - & - & \bf 729 & 12503 & - & \bf 3817 & - \\
        GF$(2^{128})$ Mult & 65664 & - & - & \bf 2187 & 50887 & - & \bf 11545 & - \\
        GF$(2^{256})$ Mult & 262400 & - & - & \bf 6561 & 220442 & - & \bf 34731 & - \\
        GF$(2^{512})$ Mult & 1048576 & - & - & \bf 19683 & 1036270 & - & \bf 91931 & - \\
        \bottomrule
\end{tabular}
\caption{Comparison of different procedures for the optimization of the number of $T$ and $CCZ$ gates in GF($2^n$) multiplier circuits.
    The $CCZ$ columns refers to the number of $CCZ$ (or Toffoli) gates in the optimized circuit.
    A blank entry in the execution time for the \texttt{FastTODD} procedure indicates that the execution couldn’t be carried out in less than a day.
    In such cases, the reported $T$-count is the one obtained after 24 hours of execution.
    A blank entry for the AlphaTensor-Quantum column indicates that no data were reported by the authors on the corresponding circuit.}\label{tab:bench_gf}
\end{table}

In this subsection, we compare the performances of AlphaTensor-Quantum~\cite{ruiz2024quantum}, the \texttt{FastTODD} procedure, and the synthesis algorithm of Reference~\cite{vandaele2025quantum} for optimizing the number of $CCZ$ and $T$ gates in GF($2^n$) multiplier circuits.
The circuits were pre-optimized using the \texttt{FastTMerge} procedure of Reference~\cite{vandaele2024optimalnumberparametrizedrotations}, followed by the \texttt{InternalHOpt} procedure of Reference~\cite{vandaele2024optimal} to minimize the number of internal Hadamard gates.
After this optimization, the number of internal Hadamard gates is reduced to zero, thus eliminating the need for Hadamard gate gadgetization and ancillary qubits.

The benchmarks results are presented in Table~\ref{tab:bench_gf}.
We can notice that AlphaTensor-Quantum finds the best number of $CCZ$ gates in the case where $n$ is equal to $5$, $7$ and $9$.
This demonstrates that, while the method of Reference~\cite{vandaele2025quantum} generates circuits with a subquadratic number of $CCZ$ gates, the optimized circuit may not have an optimal number of $CCZ$ gates (at least in the case where $n$ is not a power of 2).
The major drawback of AlphaTensor-Quantum is its large complexity, which imply that it can only be used on small circuits.
Conversely, the \texttt{FastTODD} algorithms was able to optimize the number of $T$ gates even on the largest circuit, although it did not terminate within 24 hours.

The benchmarks results are a good demonstration of the advantages and disadvantages of the AlphaTensor-Quantum method and the \texttt{FastTODD} procedure.
Moreover, the results indicate that these two algorithms are not really in direct competition but rather represent complementary approaches.
One of the differences between the two methods is that AlphaTensor-Quantum performs a complete resynthesis of the Hadamard-free parts of the quantum circuit.
This implies that AlphaTensor-Quantum cannot be used to improve upon existing quantum circuits that may be already well optimized.
For instance, whether or not the number of the number of $T$ gates in the ``GF($2^n$) Mult" input circuit is well optimized will not change the circuit produced by AlphaTensor-Quantum.
The main advantage of this resynthesis approach of AlphaTensor-Quantum is that it can be used to discover radically different circuits that have much lower number of $T$ gates.
We can clearly see that this is the case for the GF($2^n$) multiplier circuits: the input circuit contains $\mathcal{O}(n^2)$ but AlphaTensor-Quantum finds a circuit with a number of $CCZ$ gates similar to the method of Reference~\cite{vandaele2025quantum} which produces a circuit with $\mathcal{O}(n^{1.58})$ $CCZ$ gates.
This demonstrates that AlphaTensor-Quantum could be used to give valuable insights for the design of efficient synthesis methods.
On the contrary, the \texttt{FastTODD} procedure does not perform well on the ``GF($2^n$) Mult" circuits containing $\mathcal{O}(n^2)$ $CCZ$ gates.
However, the \texttt{FastTODD} procedure can be used to optimize the number of $T$ gates in the circuit produced by the method given in Reference~\cite{vandaele2025quantum}.
The \texttt{FastTODD} procedure could also be used to improve the solution of AlphaTensor-Quantum.
For instance, as shown in Table~\ref{tab:bench_gf}, some of the circuits produced by AlphaTensor-Quantum contain an even number of $T$ gates.
In such cases, the parity table associated with the circuit satisfies all the conditions of Theorem~\ref{thm:tohpe}.
Consequently, applying the \texttt{FastTODD} procedure on these circuits will necessarily reduce the number of $T$ gates.

\section{Extension to higher levels of the Clifford hierarchy}\label{sec:higher_orders}

In this section, we extend the results of Section~\ref{sec:beyond_merging} for the optimization of the number of $R_Z(\pi/2^d)$ gates in $\{\mathrm{CNOT}$, $R_Z(\pi/2^d)$, $R_Z(2\pi/2^d)\}$ circuits, for any non-negative integer $d$.
These gates appear in various quantum algorithms, such as Shor's algorithm~\cite{shor1994algorithms}, and there exists distillation protocols for implementing them fault tolerantly~\cite{duclos2015reducing, campbell2018magic}.
Optimizing the number of $R_Z(\pi/2^d)$ gates may increase the number of $R_Z(2\pi/2^d)$ gates.
However, this is motivated by the fact that the $R_Z(\pi/2^d)$ gate is typically more costly to implement than the $R_Z(2\pi/2^d)$ gate.

We formalize the problem of $R_Z(\pi/2^d)$ gates optimization in $\{\mathrm{CNOT}, R_Z(\pi/2^d), R_Z(2\pi/2^d)\}$ circuits in Subsection~\ref{sub:preliminaries_d}.
Then, in Subsection~\ref{sub:d_tohpe}, we extend Theorem~\ref{thm:tohpe_upper_bound} to give an upper bound on the number of $R_Z(\pi/2^d)$ gates in $\{\mathrm{CNOT}, R_Z(\pi/2^d), R_Z(2\pi/2^d)\}$ circuits, which can be satisfied in polynomial time by generalizing Algorithm~\ref{alg:tohpe}.
Finally, in Subsection~\ref{sub:d_fast_todd}, we show how we can construct algorithms that are similar to Algorithm~\ref{alg:fast_todd} for optimizing the number of $R_Z(\pi/2^d)$ gates by providing a generalization of Theorem~\ref{thm:todd_iff}.

\subsection{Symmetric tensor rank decomposition problem}\label{sub:preliminaries_d}

In this subsection we formalize the problem of $R_Z(\pi/2^d)$ gates optimization in $\{\mathrm{CNOT}$, $R_Z(\pi/2^d)$, $R_Z(2\pi/2^d)\}$ circuits for any non-negative integer $d$, by showing its equivalence to the following symmetric tensor rank decomposition problem.

\begin{problem}[$d$-STR]\label{pb:str}
    Let $\mathcal{A} \in \mathbb{Z}_2^{(n, \ldots, n)}$ be a symmetric tensor of order $d$ such that 
    \begin{equation}
        \mathcal{A}_{\alpha_1, \ldots, \alpha_d} = \mathcal{A}_{\beta_1, \ldots, \beta_d}
    \end{equation}
    for all $\alpha_1, \ldots, \alpha_d$ and $\beta_1, \ldots, \beta_d$ such that the set equality $\{\alpha_i, \ldots, \alpha_d\} = \{\beta_i, \ldots, \beta_d\}$ is satisfied.
    Find a Boolean matrix $P$ of size $n \times m$ such that
    \begin{equation}
        \mathcal{A}_{\alpha_1,\ldots,\alpha_d} = \big\lvert \bigwedge_{i=1}^d P_{\alpha_i} \big\rvert \pmod{2}
    \end{equation}
    for all $\alpha_1, \ldots, \alpha_d$ satisfying $0 \leq \alpha_1 \leq \ldots \leq \alpha_d < n$, with minimal $m$.
\end{problem}

We now prove the equivalence between the problem of minimizing the number of $R_Z(\pi/2^d)$ gates in a $\{\mathrm{CNOT}$, $R_Z(\pi/2^d)$, $R_Z(2\pi/2^d)\}$ circuit for any non-negative integer $d$ and the $(d+1)$-STR problem.
The following theorem was first proven in Reference~\cite{amy2019t} by showing the equivalence between the problem of optimizing $R_Z(\pi/2^d)$ gates and the minimum distance decoding problem for the punctured Reed-Muller code of order $n-d-2$ and length $2^n - 1$, which is equivalent to the $d$-STR problem~\cite{seroussi1983maximum}.
We provide a more straightforward proof of this theorem in Appendix~\cite{vandaele2025quantum}.
\begin{theorem}\label{thm:d_str}
    Let $d$ be a non-negative integer.
    The $R_Z(\pi/2^d)$-count optimization problem over the $\{\mathrm{CNOT}, R_Z(\pi/2^d), R_Z(2\pi/2^d)\}$ gate set and the $(d+1)$-STR problem are equivalent.
\end{theorem}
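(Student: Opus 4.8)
The plan is to establish the equivalence in two directions by exhibiting explicit, polynomial-time translations between the two problems, exactly as was done in Subsection~\ref{sub:weighted_polynomial} for the case $d=2$. First I would set up the general machinery: a $\{\mathrm{CNOT}, R_Z(\pi/2^d), R_Z(2\pi/2^d)\}$ circuit acting on $n$ qubits maps a basis state $\lvert \bs x\rangle$ to $U_f\lvert g(\bs x)\rangle$ where $g$ is linear reversible (implementable with CNOTs) and $U_f = \sum_{\bs x} \zeta^{f(\bs x)}\lvert \bs x\rangle\langle \bs x\rvert$ with $\zeta = e^{i\pi/2^d}$ and $f:\mathbb{Z}_2^n \to \mathbb{Z}_{2^{d+1}}$. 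The key is the generalized form of Equation~\ref{eq:weighted_polynomial_eq} (whose proof is promised in Section~\ref{sub:preliminaries_d}): for a single parity, $a(x_1\oplus\cdots\oplus x_n) \equiv a\sum_{k=1}^{d+1} (-2)^{k-1}\!\!\sum_{\alpha_1<\cdots<\alpha_k} x_{\alpha_1}\cdots x_{\alpha_k} \pmod{2^{d+1}}$. Summing this over all parities of a phase polynomial shows that the order-$(d+1)$ monomial coefficients of $f$ are precisely $c_{\alpha_1,\ldots,\alpha_{d+1}} = \sum_i a_i y^{(i)}_{\alpha_1}\cdots y^{(i)}_{\alpha_{d+1}} \pmod 2 = \lvert P_{\alpha_1}\wedge\cdots\wedge P_{\alpha_{d+1}}\rvert \pmod 2$, which is exactly the signature tensor $\mathcal{A}$.

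Next I would argue the reduction in each direction. For the forward direction: given an instance of the $R_Z(\pi/2^d)$-count problem, i.e.\ a weighted polynomial $f$, extract its order-$(d+1)$ part to get a symmetric tensor $\mathcal{A}\in\mathbb{Z}_2^{(n,\ldots,n)}$; this is a valid $(d+1)$-STR instance. A parity table $P$ solving the $(d+1)$-STR problem yields a phase polynomial $p$ whose weighted polynomial $f'$ has the same order-$(d+1)$ coefficients as $f$, so $f - f'$ has all its top-order coefficients even; I would then need the analogue of the fact that $U_{2f}$ is Clifford --- more precisely, that a weighted polynomial all of whose order-$(d+1)$ coefficients vanish mod~2 corresponds to an operator implementable over $\{\mathrm{CNOT}, R_Z(2\pi/2^d)\}$, hence can be appended without using any $R_Z(\pi/2^d)$ gate. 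This is the multilevel generalization of the statement that an even weighted polynomial is Clifford, and follows because $R_Z(\pi/2^{d-1}) = R_Z(2\pi/2^d)$ and the order-$m$ monomial carries weight $2^{m-1}$, so eliminating order-$(d+1)$ parities leaves only rotations of angle a multiple of $2\pi/2^d$. Thus the number of $R_Z(\pi/2^d)$ gates equals the number of columns $m$ of $P$, and minimizing one minimizes the other. For the backward direction: given a $(d+1)$-STR instance $\mathcal{A}$, form the weighted polynomial $f$ whose order-$(d+1)$ coefficients are the entries of $\mathcal{A}$ and whose lower-order coefficients are zero; then a minimal circuit for the $R_Z(\pi/2^d)$-count problem on $f$ gives, via the circuit--phase-polynomial correspondence, a parity table whose column count equals its $R_Z(\pi/2^d)$-count, and this is a minimal solution to $\mathcal{A}$.

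The main obstacle I expect is carefully handling the lower-order part of $f$ and the Kronecker-delta-style coincidences among indices when $\alpha_1 \le \cdots \le \alpha_{d+1}$ are not all distinct: Equation~\ref{eq:weighted_polynomial_eq} and its generalization only directly control the strictly-increasing coefficients, and one must verify that the repeated-index entries of the tensor are consistent (the symmetry hypothesis $\mathcal{A}_{\alpha_1,\ldots,\alpha_d} = \mathcal{A}_{\beta_1,\ldots,\beta_d}$ when the multisets agree is exactly what makes this work, but it needs to be checked against the identity $\lvert P_\alpha \wedge P_\alpha \wedge \cdots\rvert = \lvert P_\alpha \wedge \cdots\rvert$). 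A secondary point to get right is that every step --- extracting $\mathcal{A}$ from $f$, synthesizing the Clifford-like remainder $f - f'$, and the circuit--polynomial translations --- is polynomial-time and does not alter the quantity being optimized, so the equivalence is a genuine reduction in both directions and not merely a correspondence of optimal values. Once these bookkeeping issues are dispatched, the equivalence follows, and Theorem~\ref{thm:tohpe}, Theorem~\ref{thm:todd_iff}, and the upper-bound arguments of Section~\ref{sec:beyond_merging} all transfer to the order-$d$ setting, which is what Subsections~\ref{sub:d_tohpe} and~\ref{sub:d_fast_todd} will exploit.
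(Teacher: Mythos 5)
Your overall route -- circuit-to-phase-polynomial correspondence, the expansion $x_1\oplus\cdots\oplus x_n=\sum_{k}\sum_{\alpha_1<\cdots<\alpha_k}(-2)^{k-1}\prod_i x_{\alpha_i}$ truncated mod $2^{d+1}$, and the identification of monomial coefficients mod $2$ with $\lvert\bigwedge_i P_{\alpha_i}\rvert$ -- is the same as the paper's. But the forward direction rests on a false lemma: it is not true that a weighted polynomial whose order-$(d+1)$ coefficients all vanish mod $2$ is implementable over $\{\mathrm{CNOT},R_Z(2\pi/2^d)\}$. Take $d=2$ and $f(\bs x)=x_1$: every order-$3$ coefficient is zero, yet $f$ is a single $T$ gate and is not Clifford. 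Your justification (``the order-$m$ monomial carries weight $2^{m-1}$, so eliminating order-$(d+1)$ parities leaves only rotations of angle a multiple of $2\pi/2^d$'') conflates monomial coefficients of the weighted polynomial with parity weights of the phase polynomial; the rotations in the circuit are attached to parities, and expanding an odd-coefficient lower-order monomial into parities produces odd-weight parities requiring $R_Z(\pi/2^d)$ gates. Consequently, a parity table matching only the strictly-increasing order-$(d+1)$ coefficients of $f$ does not implement $f$ up to a $\{\mathrm{CNOT},R_Z(2\pi/2^d)\}$ operator, and the column count of such a table is not the $R_Z(\pi/2^d)$-count.

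The correct statement, which the paper proves, is that $p$ and $p'$ are equivalent up to $\{\mathrm{CNOT},R_Z(2\pi/2^d)\}$ if and only if the coefficients of \emph{every} order $1\le k\le d+1$ agree mod $2$, and the whole point of indexing the tensor by non-decreasing tuples $0\le\alpha_1\le\cdots\le\alpha_{d+1}<n$ is that a tuple with repeated indices collapses, via $\lvert P_\alpha\wedge P_\alpha\wedge\cdots\rvert=\lvert P_\alpha\wedge\cdots\rvert$, to a lower-order wedge, so the single tensor equality over all such tuples enforces all $d+1$ families of conditions at once. You do flag this in your final paragraph, but you treat it as a consistency check on ``coincidences among indices'' rather than as the mechanism that controls the lower-order part of $f-f'$; as written, your main argument would go through without it and would be wrong. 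To repair the proof you must replace the false lemma with the all-orders condition and show that the repeated-index entries of $\mathcal{A}$ are exactly the lower-order coefficients mod $2$ -- at which point the equality of optimal values, and the polynomial-time translations in both directions, follow as you describe.
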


\subsection{$R_Z(\pi/2^d)$-count upper bound in $\{\mathrm{CNOT}, R_Z(\pi/2^d), R_Z(2\pi/2^d)\}$ circuits}\label{sub:d_tohpe}

In this section we present an upper bound for the number of $R_Z(\pi/2^d)$ gates in a $\{\mathrm{CNOT},$ $R_Z(\pi/2^d),$ $R_Z(2\pi/2^d)\}$ circuit, and we present a method for satisfying this upper bound in polynomial time.
We first prove the following theorem, which is a generalization of Theorem~\ref{thm:tohpe}.

\begin{theorem}\label{thm:d_tohpe}
    Let $d$ be a non-negative integer, let $P$ be a parity table of size $n \times m$ and let $P' = P \oplus \bs z \bs y^T$ where $\bs z$ and  $\bs y$ are vectors of size $n$ and $m$ respectively such that
    \begin{align}
        \lvert \bs y \rvert &\equiv 0 \pmod{2} \label{eq:d_condition_1} \\
        \big\lvert \bigwedge_{i=1}^{d} P_{\alpha_i} \wedge \bs y \big\rvert &\equiv 0 \pmod{2} \label{eq:d_condition_2}
    \end{align}
    for all $0 \leq \alpha_1 \leq \ldots \leq \alpha_d < n$.
    Then we have
    \begin{equation}
        \big\lvert \bigwedge_{i=1}^{d+1} P'_{\alpha_i} \big\rvert \equiv \big\lvert \bigwedge_{i=1}^{d+1} P_{\alpha_i} \big\rvert \pmod{2}
    \end{equation}
    for all $0 \leq \alpha_1 \leq \ldots \leq \alpha_{d+1} < n$.
\end{theorem}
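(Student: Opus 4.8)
The plan is to follow the proof of Theorem~\ref{thm:tohpe} (the computation in Equation~\ref{eq:tohpe_proof}), but carried out for a $(d{+}1)$-fold AND rather than a triple one. Writing $P'_{\alpha_i} = P_{\alpha_i} \oplus z_{\alpha_i}\bs y$, using that $\wedge$ distributes over $\oplus$, that $z_{\alpha_i}\bs y \in \{\bs 0, \bs y\}$ and $\bs y \wedge \bs y = \bs y$, and that $\lvert \bs u \oplus \bs v\rvert \equiv \lvert\bs u\rvert + \lvert\bs v\rvert \pmod 2$ for any vectors $\bs u, \bs v$, I would expand
\[
    \Big\lvert \bigwedge_{i=1}^{d+1} P'_{\alpha_i} \Big\rvert \equiv \Big\lvert \bigwedge_{i=1}^{d+1} P_{\alpha_i} \Big\rvert + \sum_{\emptyset \neq S \subseteq \{1,\ldots,d+1\}} \Big( \prod_{i \in S} z_{\alpha_i}\Big) \Big\lvert \Big( \bigwedge_{i \notin S} P_{\alpha_i}\Big) \wedge \bs y \Big\rvert \pmod 2,
\]
valid for every $0 \leq \alpha_1 \leq \ldots \leq \alpha_{d+1} < n$, where the empty AND is read as the all-ones vector (and the scalar $\prod_{i\in S} z_{\alpha_i}\in\{0,1\}$ is pulled out of the weight). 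It then suffices to show that every summand indexed by a nonempty $S$ vanishes modulo $2$.

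For this I would split on $k = \lvert S\rvert$. When $k = d+1$ the summand equals $\big(\prod_{i=1}^{d+1} z_{\alpha_i}\big)\lvert \bs y\rvert$, which is even by condition~\ref{eq:d_condition_1}. When $1 \leq k \leq d$, the multiset $\{\alpha_i : i \notin S\}$ has cardinality $d+1-k$ with $1 \leq d+1-k \leq d$; since $P_\alpha \wedge P_\alpha = P_\alpha$, I may pad this multiset with repeated indices, keeping it non-decreasing, up to size exactly $d$ without changing the AND, so condition~\ref{eq:d_condition_2} (assumed for all non-decreasing $d$-tuples, repetitions allowed) forces $\big\lvert \big(\bigwedge_{i \notin S} P_{\alpha_i}\big)\wedge \bs y\big\rvert \equiv 0 \pmod 2$. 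Hence all nonempty-$S$ summands drop out and $\big\lvert \bigwedge_{i=1}^{d+1} P'_{\alpha_i}\big\rvert \equiv \big\lvert \bigwedge_{i=1}^{d+1} P_{\alpha_i}\big\rvert \pmod 2$, as claimed.

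The argument is essentially bookkeeping, and I expect the only point requiring genuine care to be the padding step: although condition~\ref{eq:d_condition_2} is phrased with exactly $d$ indices, idempotency of $\wedge$ makes it equivalent to the same statement for every number $r \leq d$ of distinct $P_\alpha$ factors, and this is precisely what kills the summands with $1 \leq \lvert S\rvert \leq d$. As a sanity check, for $d = 2$ the expansion above has eight terms and collapses exactly to Equation~\ref{eq:tohpe_proof}, with conditions~\ref{eq:tohpe_condition_2} and~\ref{eq:tohpe_condition_3} playing the role of the $r = 1$ and $r = 2$ instances of condition~\ref{eq:d_condition_2}.
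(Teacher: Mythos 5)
Your proposal is correct and follows essentially the same route as the paper's proof: the paper also expands $\bigwedge_{i=1}^{d+1}(P_{\alpha_i}\oplus z_{\alpha_i}\bs y)$ over all $2^{d+1}$ choices (indexed there by a vector $\bs k\in\mathbb{Z}_2^{d+1}$ rather than a subset $S$), kills the all-$\bs y$ term with condition~\ref{eq:d_condition_1}, and kills each mixed term either because some $z_{\alpha_i}=0$ or by padding the remaining $P$-factors with repeated indices so that condition~\ref{eq:d_condition_2} applies. The padding step you flag as the delicate point is handled identically (and equally implicitly) in the paper.
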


\begin{proof}
    For all $\alpha_1, \dots, \alpha_{d+1}$ satisfying $0 \leq \alpha_1 \leq \ldots \leq \alpha_{d+1} < n$, we have:
    \begin{equation}
    \begin{aligned}
        \big\lvert \bigwedge_{i=1}^{d+1} P'_{\alpha_i} \big\rvert &= \big\lvert \bigwedge_{i=1}^{d+1} (P_{\alpha_i} \oplus z_{\alpha_i}\bs y) \big\rvert \\
        &= \big\lvert \bigoplus_{k_{d+1}=0}^{1} \ldots \bigoplus_{k_1=0}^{1} \left[ \bigwedge_{i=1}^{d+1} (k_i P_{\alpha_i} \vee (1 - k_i) z_{\alpha_i} \bs y)\right] \big\rvert \\
        &\equiv \sum_{k_{d+1}=0}^{1} \ldots \sum_{k_1=0}^{1} \big\lvert \left[ \bigwedge_{i=1}^{d+1} (k_i P_{\alpha_i} \vee (1 - k_i) z_{\alpha_i} \bs y) \right] \big\rvert \pmod{2}.\\
    \end{aligned}
    \end{equation}
    The expression 
    \begin{equation}
    \begin{aligned}\label{eq:d_expression_1}
        \big\lvert \bigwedge_{i=1}^{d+1} (k_i P_{\alpha_i} \vee (1 - k_i) z_{\alpha_i} \bs y) \big\rvert \pmod{2} \\
    \end{aligned}
    \end{equation}
    is equal to
    \begin{equation}
    \begin{aligned}\label{eq:d_expression_2}
        \big\lvert \bigwedge_{i=1}^{d+1} z_{\alpha_i} \bs y \big\rvert \pmod{2} \\
    \end{aligned}
    \end{equation}
    in the case where $k_{d+1} = \ldots = k_1 = 0$, which is equal to $0$ because Equation~\ref{eq:d_condition_1} is satisfied.
    Moreover, in the case where $\bs k \in \mathbb{Z}_2^{d+1}$ and the equations $k_{d+1} = \ldots = k_1 = 1$ and $k_{d+1} = \ldots = k_1 = 0$ are not satisfied, Expression~\ref{eq:d_expression_2} is equal to 0 if there exists $i$ such that $k_i = 0$ and $z_{\alpha_i} = 0$.
    Otherwise it is equal to 
    \begin{equation}
    \begin{aligned}\label{eq:d_expression_3}
        \big\lvert \bigwedge_{i=1}^{d} P_{\beta_i} \wedge \bs y \big\rvert \pmod{2} \\
    \end{aligned}
    \end{equation}
    for some $\beta_1, \ldots, \beta_d$ satisfying $0 \leq \beta_1 \leq \ldots \leq \beta_d < n$.
    Expression~\ref{eq:d_expression_3} is equal to Equation~\ref{eq:d_condition_2} and is therefore equal to $0$.
    Thus,
    \begin{equation}
    \begin{aligned}
        \big\lvert \bigwedge_{i=1}^{d+1} (k_i P_{\alpha_i} \vee (1 - k_i) z_{\alpha_i} \bs y) \big\rvert \equiv 0 \pmod{2} \\
    \end{aligned}
    \end{equation}
    for all $\bs k \in \mathbb{Z}_2^{d+1}$ such that $k_i = 0$ for some $i$.
    Finally, in the case where $k_{d+1} = \ldots = k_1 = 1$, Expression~\ref{eq:d_expression_1} is equal to 
    \begin{equation}
    \begin{aligned}\label{eq:d_expression_4}
        \big\lvert \bigwedge_{i=1}^{d+1} P_{\alpha_i} \big\rvert \pmod{2} \\
    \end{aligned}
    \end{equation}
    Therefore,
    \begin{equation}
    \begin{aligned}
        \big\lvert \bigwedge_{i=1}^{d+1} P'_{\alpha_i} \big\rvert &\equiv \sum_{k_{d+1}=0}^{1} \ldots \sum_{k_1=0}^{1} \big\lvert \left[ \bigwedge_{i=1}^{d+1} (k_i P_{\alpha_i} \vee (1 - k_i) z_{\alpha_i} \bs y) \right] \big\rvert \pmod{2} \\
        &\equiv \big\lvert \bigwedge_{i=1}^{d+1} P_{\alpha_i} \big\rvert \pmod{2}.
    \end{aligned}
    \end{equation}\\
\end{proof}

Based on Theorem~\ref{thm:d_tohpe}, we can prove the following subadditivity theorem which is a generalization of Theorem~\ref{thm:subadditivity}.
\begin{theorem}\label{thm:d_subadditivity}
    Let $d$ be a non-negative integer, let $U_1 \in \mathcal{D}^C_{d+1}$, and let $U_2 \in \mathcal{D}_{d+1}$ where $\mathcal{D}_{d+1}$ is the diagonal subgroup of the $(d+1)$th level of the Clifford hierarchy, and $\mathcal{D}_{d+1}^C$ is the subgroup of $\mathcal{D}_{d+1}$ which can be implemented using only $C^{\otimes d}Z$ gates.
    If $\tau[U_1], \tau[U_2] > 0$, then $\tau[U_1 \otimes U_2] < \tau[U_1] + \tau[U_2]$ where $\tau[U]$ is the optimal number of $R_Z(\pi/2^d)$ gates required to implement $U$ without ancillary qubits over the $\{\mathrm{CNOT}$, $R_Z(\pi/2^d)$, $R_Z(2\pi/2^d)\}$ gate set.
\end{theorem}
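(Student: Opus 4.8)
The plan is to mimic the proof of Theorem~\ref{thm:subadditivity} almost verbatim, replacing order-$3$ objects by order-$(d+1)$ objects and invoking Theorem~\ref{thm:d_tohpe} in place of Theorem~\ref{thm:tohpe}. First I would take parity tables $P$ and $Q$ realizing $U_1$ and $U_2$ with $\tau[U_1]$ and $\tau[U_2]$ columns respectively, and set $W = \begin{bmatrix} P & Q \end{bmatrix}$. Because $U_1 \in \mathcal{D}^C_{d+1}$ is implementable using only $C^{\otimes d}Z$ gates, its associated symmetric tensor is nonzero only in its ``full'' entries, i.e.\ $\lvert \bigwedge_{i=1}^{k} P_{\alpha_i} \rvert \equiv 0 \pmod 2$ for every $k \leq d$ and every choice of indices $\alpha_1,\ldots,\alpha_k$. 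This is exactly the hypothesis~\eqref{eq:d_condition_2} of Theorem~\ref{thm:d_tohpe} needed to transform $P$.

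Next I would pick columns $P_{:,i}$ of $P$ and $Q_{:,j}$ of $Q$ arbitrarily, set $\bs z = P_{:,i} \oplus Q_{:,j}$, and define
$$
P' =
\begin{cases}
    P \oplus \bs z \bs 1^T &\text{if } \tau[U_1] \equiv 0 \pmod 2,\\
    \begin{bmatrix} P \oplus \bs z \bs 1^T & \bs z \end{bmatrix} &\text{otherwise},
\end{cases}
$$
with $\bs 1$ the all-ones vector of the appropriate length. Since $\lvert \bs 1 \rvert = \tau[U_1] + (\tau[U_1]\bmod 2) \equiv 0 \pmod 2$ and the structure of $P$ ensures~\eqref{eq:d_condition_2}, Theorem~\ref{thm:d_tohpe} gives $\lvert \bigwedge_{i=1}^{d+1} P'_{\alpha_i} \rvert \equiv \lvert \bigwedge_{i=1}^{d+1} P_{\alpha_i} \rvert \pmod 2$ for all indices. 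Then $W' = \begin{bmatrix} P' & Q \end{bmatrix}$ has at most one more column than $W$ and still satisfies $\lvert \bigwedge_{i=1}^{d+1} W'_{\alpha_i} \rvert \equiv \lvert \bigwedge_{i=1}^{d+1} W_{\alpha_i} \rvert \pmod 2$, because adding columns to a parity table that are ``cancelling pairs'' does not change the parity of any wedge count — this is the same bookkeeping used to derive~\eqref{eq:w_equality}. The key observation, just as in Theorem~\ref{thm:subadditivity}, is that $P'_{:,i} = Q_{:,j}$ (the $\bs z \bs 1^T$ shift turns the $i$th column of $P$ into $Q_{:,j}$), so this duplicated pair can be deleted from $W'$, netting at least one column saved: $W'$ has at most $\tau[U_1] + \tau[U_2] - 1$ columns. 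Since $W'$ realizes $U_1 \otimes U_2$ up to a $\{\mathrm{CNOT}, R_Z(2\pi/2^d)\}$ operator, we conclude $\tau[U_1 \otimes U_2] \leq \tau[U_1] + \tau[U_2] - 1 < \tau[U_1] + \tau[U_2]$.

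The only genuinely new point compared with the $d=3$ case is confirming that membership in $\mathcal{D}^C_{d+1}$ — being implementable by $C^{\otimes d}Z$ gates only — really does force all lower-order wedge counts of $P$ to vanish; the rest is a straightforward transcription. I expect this to be routine: a $C^{\otimes d}Z$ gate corresponds to a single parity whose weighted-polynomial contribution is a pure degree-$(d+1)$ homogeneous monomial, so by the parity-table/tensor correspondence established in Subsection~\ref{sub:preliminaries_d} all its order-$k$ coefficients for $k \leq d$ are even, and this property is closed under the tensor products that build up $U_1$. The main (very minor) obstacle is being careful about the parity of $\tau[U_1]$ when choosing between the two cases for $P'$ and about index conventions, but these are exactly the subtleties already handled in the proof of Theorem~\ref{thm:subadditivity}.
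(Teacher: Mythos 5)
Your proposal is correct and follows essentially the same route as the paper's proof: form $W = \begin{bmatrix} P & Q \end{bmatrix}$, use that $U_1 \in \mathcal{D}^C_{d+1}$ forces the order-$\leq d$ wedge counts of $P$ to vanish so that $\bs y = \bs 1$ (padded with an extra column when $\tau[U_1]$ is odd) satisfies the hypotheses of Theorem~\ref{thm:d_tohpe} with $\bs z = P_{:,i} \oplus Q_{:,j}$, and then delete the duplicated pair $P'_{:,i} = Q_{:,j}$. The paper likewise treats the vanishing of the lower-order wedge counts as immediate from membership in $\mathcal{D}^C_{d+1}$, so your added justification of that point is a harmless elaboration rather than a divergence.
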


\begin{proof}
    Let $W = \begin{bmatrix} P & Q \end{bmatrix}$ be a parity table such that $P$ and $Q$ are the parity tables associated with the implementation of $U_1$ and $U_2$ and which have $\tau[U_1]$ and $\tau[U_2]$ columns respectively.
    Then, because $U_1 \in \mathcal{D}^C_{d+1}$, $P$ satisfies
    \begin{equation}
        \big\lvert \bigwedge_{i=1}^d P_{\alpha_i} \big\rvert \equiv 0 \pmod{2}
    \end{equation}
    for all $\alpha_1, \ldots, \alpha_d$ satisfying $0 \leq \alpha_1 < \ldots \leq \alpha_d < n$ where $n$ is the number of qubits on which $U_1$ is acting.
    Let $\bs z = P_{:,i} \oplus Q_{:,j}$ for any $i$ and $j$ satisfying $0\leq i < \tau[U_1]$, $0\leq j < \tau[U_2]$, and let $P'$ be a parity table such that
    $$
    P' = 
    \begin{cases}
        P \oplus \bs z \bs 1^T &\text{if $\tau[U_1] \equiv 0 \pmod{2}$},\\ 
        \begin{bmatrix} P \oplus \bs z \bs 1^T & \bs z \end{bmatrix} &\text{otherwise}.\\ 
    \end{cases}
    $$
    Then, as stated by Theorem~\ref{thm:d_tohpe}, the parity table $P'$ satisfies 
    \begin{equation}
        \big\lvert \bigwedge_{i=1}^{d+1} P'_{\alpha_i} \big\rvert \equiv \big\lvert \bigwedge_{i=1}^{d+1} P_{\alpha_i} \big\rvert \pmod{2}
    \end{equation}
    for all $0 \leq \alpha_1 \leq \ldots \leq \alpha_{d+1} < n$.
    And so the parity table $W' = \begin{bmatrix} P' & Q \end{bmatrix}$, which has at most one more column than $W$, also satisfies 
    \begin{equation}\label{eq:d_w_equality}
        \big\lvert \bigwedge_{i=1}^{d+1} W'_{\alpha_i} \big\rvert \equiv \big\lvert \bigwedge_{i=1}^{d+1} W_{\alpha_i} \big\rvert \pmod{2}
    \end{equation}
    for all $0 \leq \alpha_1 \leq \ldots \leq \alpha_{d+1} < n$.
    However, we can notice that $P'_{:,i} = Q_{:,j}$.
    Therefore, by removing these two columns from $W'$ Equation~\ref{eq:d_w_equality} still holds and $W'$ has at least one less column than $W$.
    The parity table $W'$ implements the unitary $U_1 \otimes U_2$ up to an operator implementable over the $\{\mathrm{CNOT},$ $R_Z(2\pi/2^d)\}$ gate set and has at most $\tau[U_1] + \tau[U_2] - 1$ columns, thus we have $\tau[U_1 \otimes U_2] \leq \tau[U_1] + \tau[U_2] - 1 < \tau[U_1] + \tau[U_2]$.
\end{proof}

Based on this subadditivity theorem and on Theorem~\ref{thm:d_tohpe}, we can derive the following upper bound on the number of $R_Z(\pi/2^d)$ gates in a $\{\mathrm{CNOT}$, $R_Z(\pi/2^d)$, $R_Z(2\pi/2^d)\}$ circuit.

\begin{theorem}\label{thm:d_tohpe_upper_bound}
    Let $d$ be a non-negative integer.
    The number of $R_Z(\pi/2^d)$ gates in an $n$-qubit $\{\mathrm{CNOT}$, $R_Z(\pi/2^d)$, $R_Z(2\pi/2^d)\}$ circuit can be upper bounded by 
    \begin{equation}
        2 \left\lfloor \sum_{i=1}^{d} \frac{{n \choose i}}{2} \right\rfloor + 1 \leq \sum_{i=0}^{d} {n \choose i}
    \end{equation}
    in polynomial time.
\end{theorem}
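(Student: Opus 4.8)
The plan is to lift the proof of Theorem~\ref{thm:tohpe_upper_bound} from order $2$ to order $d$, using Theorem~\ref{thm:d_tohpe} in place of Theorem~\ref{thm:tohpe}. By the equivalence with the $(d+1)$-STR problem established above, it suffices to show that any parity table $P$ of size $n\times m$ can be transformed, in polynomial time, into an equivalent one (implementing the same unitary up to a $\{\mathrm{CNOT}, R_Z(2\pi/2^d)\}$ operator) with at most $2\lfloor N/2\rfloor+1$ columns, where $N=\sum_{i=1}^{d}{n \choose i}$; the statement then follows since $2\lfloor N/2\rfloor+1\le N+1=\sum_{i=0}^{d}{n \choose i}$. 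First I would introduce the binary matrix $L$ whose rows are the vectors $\bigwedge_{i\in S}P_i$ as $S$ ranges over the nonempty subsets of $\{0,\ldots,n-1\}$ of size at most $d$. A multi-index $0\le\alpha_1\le\cdots\le\alpha_d<n$ with repetitions collapses to such a subset (the AND being idempotent), so $L$ has exactly $N$ rows, and $L\bs y=\bs 0$ is equivalent to Equation~\ref{eq:d_condition_2} holding for $\bs y$.

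The core is a single reduction step, in the spirit of the subadditivity theorem: whenever there is a nonzero $\bs y$ with $L\bs y=\bs 0$ such that $\bs y\ne\bs 1$ or $\lvert\bs y\rvert$ is even, $P$ admits an equivalent parity table with strictly fewer columns, computable in polynomial time. If $\bs y\ne\bs 1$, choose $i,j$ with $y_i=1$, $y_j=0$, set $\bs z=P_{:,i}\oplus P_{:,j}$, and let $P'=P\oplus\bs z\bs y^T$ when $\lvert\bs y\rvert$ is even and $P'=\begin{bmatrix}P\oplus\bs z\bs y^T & \bs z\end{bmatrix}$ otherwise. In the even case $P'$ is equivalent to $P$ directly by Theorem~\ref{thm:d_tohpe}. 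In the odd case one re-runs the $2^{d+1}$-term expansion from the proof of Theorem~\ref{thm:d_tohpe}, now keeping the appended coordinate: every mixed term still vanishes by Equation~\ref{eq:d_condition_2}, the first $m$ columns leave a residue $\big(\bigwedge_{i=1}^{d+1}z_{\alpha_i}\big)\lvert\bs y\rvert$, the appended column contributes $\bigwedge_{i=1}^{d+1}z_{\alpha_i}$, and their sum $\big(\bigwedge_{i=1}^{d+1}z_{\alpha_i}\big)(\lvert\bs y\rvert+1)$ is even, so $\big\lvert\bigwedge_{i=1}^{d+1}P'_{\alpha_i}\big\rvert\equiv\big\lvert\bigwedge_{i=1}^{d+1}P_{\alpha_i}\big\rvert\pmod 2$ for all indices. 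Since $P'_{:,i}=P_{:,j}=P'_{:,j}$, deleting these duplicated columns leaves strictly fewer columns than $P$. If instead $\bs y=\bs 1$ (so $\lvert\bs y\rvert$ is even), take $\bs z=P_{:,i}$ for any $i$ and $P'=P\oplus\bs z\bs 1^T$: this is equivalent by Theorem~\ref{thm:d_tohpe}, and its $i$-th column is now zero and removable.

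Then I would iterate. While $m>N+1$ we have $\dim\ker L\ge m-N\ge 2$, so $\ker L$ contains a nonzero vector distinct from $\bs 1$ and the step applies; this drives $m$ down to at most $N+1$. If at that point $N$ is odd (so $m=N+1$ is even), then $\dim\ker L\ge 1$ and any nonzero $\bs y\in\ker L$ is usable (if it equals $\bs 1$, its weight $m$ is even), so one more reduction gives $m\le N$. Hence the procedure terminates with at most $N+1$ columns when $N$ is even and at most $N$ when $N$ is odd, i.e.\ with at most $2\lfloor N/2\rfloor+1$ columns. Each iteration is dominated by a Gaussian elimination on the $N\times m$ matrix $L$ plus $\mathcal{O}(nm)$ bookkeeping, and there are at most $m$ iterations, so the whole procedure runs in polynomial time.

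The main obstacle I expect is the combination of the parity bookkeeping that sharpens the crude bound $N+1$ to $2\lfloor N/2\rfloor+1$ (the extra reduction when $N$ is odd) and the subcase in which $\lvert\bs y\rvert$ is odd, where Theorem~\ref{thm:d_tohpe} does not literally apply because its hypothesis $\lvert\bs y\rvert\equiv 0$ fails: there the invariance of $\big\lvert\bigwedge_{i=1}^{d+1}P_{\alpha_i}\big\rvert\pmod 2$ under $P\mapsto\begin{bmatrix}P\oplus\bs z\bs y^T & \bs z\end{bmatrix}$ must be re-derived from the expansion while carefully tracking the appended coordinate. Everything else — the row count of $L$, the existence of a suitable $\bs y$, and the column deletions — transcribes the $d=2$ argument of Theorem~\ref{thm:tohpe_upper_bound} essentially verbatim.
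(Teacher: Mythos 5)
Your proposal is correct and follows essentially the same route as the paper's proof: the same matrix $L$ with $N=\sum_{i=1}^{d}{n\choose i}$ rows, the same reduction step via $\bs z=P_{:,i}\oplus P_{:,j}$ and $P'=P\oplus\bs z\bs y^T$ (with the appended column when $\lvert\bs y\rvert$ is odd), the same special handling of $\bs y=\bs 1$, and the same parity bookkeeping that sharpens $N+1$ to $2\lfloor N/2\rfloor+1$. The only difference is that you justify the odd-$\lvert\bs y\rvert$ case by re-running the $2^{d+1}$-term expansion explicitly, whereas the paper leaves this to an implicit application of Theorem~\ref{thm:d_tohpe} to the padded table; your version is, if anything, slightly more self-contained.
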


\begin{proof}
    Let $U$ be a unitary gate implementable over the $\{\mathrm{CNOT}$, $R_Z(\pi/2^d)$, $R_Z(2\pi/2^d)\}$ gate set, let $P$ be a parity table of size $n \times m$ which implements $U$ up to an operator implementable over the $\{\mathrm{CNOT}$, $R_Z(2\pi/2^d)\}$ gate set, and let $L$ be a matrix with rows labelled by $(\alpha_1\ldots\alpha_k)$ such that
    \begin{align}
        L_{\alpha_1\ldots\alpha_k} &= \bigwedge_{i=1}^k P_{\alpha_i}
    \end{align}
    for all $\alpha_1, \ldots, \alpha_k$ satisfying $0 \leq \alpha_i < \ldots < \alpha_k < n$ and $k$ satisfying $1 \leq k \leq d$.
    If $P$ has strictly more than $\sum_{i=0}^{d} {n \choose i}$ columns then we can necessarily find a non-zero vector $\bs y$ satisfying $L \bs y = \bs 0$ and $\bs y \neq \bs 1$ because $L$ has $\sum_{i=1}^{d} {n \choose i}$ rows.
    Note that such vector $\bs y$ necessarily satisfies Equations~\ref{eq:d_condition_2} of Theorem~\ref{thm:d_tohpe}.
    We can then divide $P$ into two non-empty parity tables $P^{(1)}$ and $P^{(2)}$ where the column $P_{:,i}$ belongs to $P^{(1)}$ if and only if $y_i = 1$ and to $P^{(2)}$ otherwise.
    The parity tables $P^{(1)}$ and $P^{(2)}$ are implementations of some unitary gates $U_1 \in \mathcal{D}^C_{d+1}$ and $U_2 \in \mathcal{D}_{d+1}$ respectively.
    The subadditivity theorem (Theorem~\ref{thm:d_subadditivity}) can then be exploited to reduce the number of columns of $P$.
    Let $\bs z = P_{:,i} \oplus P_{:,j}$ where $i$ and $j$ are satisfying $y_i = 1$ and $y_j = 0$, and let 
    $$
    P' = 
    \begin{cases}
        P \oplus \bs z \bs y^T &\text{if $\lvert \bs y \rvert \equiv 0 \pmod{2}$},\\ 
        \begin{bmatrix} P \oplus \bs z \bs y^T & \bs z \end{bmatrix} &\text{otherwise}.\\ 
    \end{cases}
    $$
    By Theorem~\ref{thm:d_tohpe}, $P'$ implements the same unitary gate as $P$ up to an operator implementable over the $\{\mathrm{CNOT}$, $R_Z(2\pi/2^d)\}$ gate set.
    The parity table $P'$ has at most one more column than $P$ and we have $P'_{:,i} = P'_{:,j}$.
    Therefore, the columns $i$ and $j$ can be removed from $P'$, which entails that $P'$ has at least one less column than $P$.
    We showed that if the number of columns of $P$ is strictly greater than $\sum_{i=0}^{d} {n \choose i}$, then the number of columns of $P$ can be reduced by at least one in polynomial time.
    Moreover, if the number of columns of $P$ is equal to $\sum_{i=0}^{d} {n \choose i}$ and is even, then we can necessarily find a non-zero vector $\bs y$ satisfying $L \bs y = \bs 0$ because $L$ has $\sum_{i=1}^{d} {n \choose i}$ rows.
    If there exist $i$ such that $y_i = 0$ then we can reduced the number of columns of $P$ as described above.
    Otherwise we must have $\lvert \bs y \rvert \equiv 0 \pmod{2}$, and so $\bs y$ satisfies the Equations of Theorem~\ref{thm:d_tohpe}.
    Therefore, if $P' = P \oplus \bs z \bs y^T$ where $\bs z$ is equal to the $i$th column of $P$ for any $i$, then $P'$ implements the same unitary gate as $P$ up to an operator implementable over the $\{\mathrm{CNOT}$, $R_Z(2\pi/2^d)\}$ gate set.
    The parity table $P'$ has the same number of columns as $P$ but its $i$th column is equal to the null vector and can therefore be removed, which leads to a parity table containing $\sum_{i=1}^{d} {n \choose i}$ columns.

    The polynomial-time procedure described above to reduce the number of columns of $P$ can be repeated until $P$ has a number of columns lower or equal to
    \begin{equation}
        2 \left\lfloor \sum_{i=1}^{d} \frac{{n \choose i}}{2} \right\rfloor + 1 \leq \sum_{i=0}^{d} {n \choose i}.
    \end{equation}
\end{proof}

The overall complexity of the algorithm described in the proof of Theorem~\ref{thm:d_tohpe_upper_bound} is $\mathcal{O}(n^{2d + 1} m)$ where $m$ is the number of $R_Z(\pi/2^d)$ gates in the initial circuit.
For $d=0$, the algorithm yields a parity table which contains at most one column, which is optimal.
For $d=1$, the algorithm corresponds to Lempel's matrix factorization algorithm~\cite{lempel1975matrix}, which is optimal.
For $d=2$, the algorithm is similar to the \texttt{TOHPE} algorithm (Algorithm~\ref{alg:tohpe}).
Note that the \texttt{TOHPE} algorithm has a complexity of $\mathcal{O}(n^2 m^3)$ instead of $\mathcal{O}(n^5 m)$ because of the heuristic used by the algorithm to better optimize the number of $T$ gates.
The same heuristic can be used in the case where $d>2$, which would lead to an algorithm having a complexity of $\mathcal{O}(n^d m^3)$.

\subsection{Generalization of the \texttt{FastTODD} algorithm for optimizing $R_Z(\pi/2^d)$ gates}\label{sub:d_fast_todd}

In this subsection we generalize the \texttt{FastTODD} algorithm (Algorithm~\ref{alg:fast_todd}) for the optimization of the number of $R_Z(\pi/2^d)$ gates in a $\{\mathrm{CNOT}$, $R_Z(\pi/2^d)$, $R_Z(2\pi/2^d)\}$ circuit.

\begin{theorem}\label{thm:iff_higer_orders}
    Let $d$ be a non-negative integer, let $P$ be a parity table of size $n \times m$ and $P' = P \oplus \bs z \bs y^T$ where $\bs z$ and $\bs y$ are vectors of size $n$ and $m$ respectively and such that $\lvert \bs y \rvert \equiv 0 \pmod{2}$.
    And let $L$ and $X$ be matrices and $\bs v$ be a vector, all with rows labelled by $(\alpha_1 \ldots \alpha_k)$ such that
    \begin{align}
        L_{\alpha_1 \ldots \alpha_k} &= \bigwedge_{i=1}^{k} P_i \\
        X_{\alpha_1 \ldots \alpha_k,\beta_1 \ldots \beta_{k-1}} &= \bigvee_{i=1}^k z_{\alpha_i} \delta_{\alpha_1 \ldots \alpha_{i-1} \alpha_{i+1} \ldots \alpha_k, \beta_1 \ldots \beta_{k-1}} \\
        v_{\alpha_1 \ldots \alpha_k} &= \bigwedge_{i=1}^{k} z_{\alpha_i}
    \end{align}
    for all $\alpha_1, \ldots, \alpha_k$ and $\beta_1, \ldots, \beta_{k-1}$ satisfying $0 \leq \alpha_1 < \ldots < \alpha_k < n$ and $0 \leq \beta_1 < \ldots < \beta_{k-1} < n$ where $k$ satisfies $1 \leq k \leq d$, and where $\delta$ is defined as follows:
    \begin{equation}
        \delta_{\alpha_1 \ldots \alpha_{i-1} \alpha_{i+1} \ldots \alpha_k, \beta_1 \ldots \beta_{k-1}} =
        \begin{cases}
            1 &\text{if $\alpha_1 = \beta_1, \ldots, \alpha_{i-1} = \beta_{i-1}, \alpha_{i+1} = \beta_{i}, \ldots, \alpha_k = \beta_{k-1}$},\\
            0 &\text{otherwise}.
        \end{cases}
    \end{equation}
    All entries of $L, X$ and $\bs v$ are set to zero in the case where $d=0$ for $L$ and in the case where $d\leq 1$ for $X$ and $\bs v$.
    Then, the equality
    \begin{equation}\label{eq:iff_higer_orders_1}
        \big\lvert \bigwedge_{i=1}^{k} P'_{\alpha_i} \big\rvert \equiv \big\lvert \bigwedge_{i=1}^{k} P_{\alpha_i} \big\rvert \pmod{2}
    \end{equation}
    holds for all $0 \leq \alpha_1 < \ldots < \alpha_k < n$ and $1 \leq k \leq d+1$ if and only if there exists a vector $\bs y'$ and a Boolean $b$ such that $L\bs y \oplus X\bs y' \oplus b\bs v = \bs 0$.
\end{theorem}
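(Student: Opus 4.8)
The plan is to follow the proof of Theorem~\ref{thm:todd_iff} (the case $d=2$) and generalise each of its three ingredients: a multilinear expansion, a direct treatment of the case $\lvert\bs z\rvert=1$, and a reduction of the general case to $\lvert\bs z\rvert=1$ by a change of basis; throughout we assume $\bs z\neq\bs 0$. First I would expand $\bigl\lvert\bigwedge_{i=1}^{k}P'_{\alpha_i}\bigr\rvert=\bigl\lvert\bigwedge_{i=1}^{k}(P_{\alpha_i}\oplus z_{\alpha_i}\bs y)\bigr\rvert$ by distributing $\oplus$ through $\wedge$, exactly as in the proof of Theorem~\ref{thm:d_tohpe}. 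This produces, modulo $2$, a sum over subsets $S\subseteq\{1,\dots,k\}$ of the terms $\bigl(\prod_{i\notin S}z_{\alpha_i}\bigr)\bigl\lvert\bigwedge_{i\in S}P_{\alpha_i}\wedge\bs y\bigr\rvert$; the term $S=\{1,\dots,k\}$ equals $\bigl\lvert\bigwedge_{i}P_{\alpha_i}\bigr\rvert$, and the term $S=\emptyset$ vanishes since $\lvert\bs y\rvert\equiv 0\pmod 2$ (Equation~\ref{eq:d_condition_1}). Hence Equation~\ref{eq:iff_higer_orders_1} holds for all strictly increasing $k$-tuples and all $1\le k\le d+1$ if and only if the condition
\[
\bigoplus_{\emptyset\neq S\subsetneq\{1,\dots,k\}}\Bigl(\prod_{i\notin S}z_{\alpha_i}\Bigr)\Bigl\lvert\bigwedge_{i\in S}P_{\alpha_i}\wedge\bs y\Bigr\rvert\equiv 0\pmod 2
\]
holds for every such tuple; call this condition $(\star)$. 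Every parity appearing in $(\star)$ involves between $1$ and $d$ rows of $P$, so it is one of the entries of $L\bs y$.

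Next I would settle the case $\lvert\bs z\rvert=1$, say $z_\alpha=1$ and $z_\gamma=0$ for $\gamma\neq\alpha$. If a tuple avoids $\alpha$, then for every $S\subsetneq\{1,\dots,k\}$ the product $\prod_{i\notin S}z_{\alpha_i}$ has a zero factor, so $(\star)$ is vacuous there; if a tuple contains $\alpha$, the only surviving summand is the one for which $\{1,\dots,k\}\setminus S=\{j\}$ with $\alpha_j=\alpha$, so $(\star)$ reads $\bigl\lvert\bigwedge_{i\in T}P_i\wedge\bs y\bigr\rvert\equiv 0\pmod 2$ where $T$ is the tuple with $\alpha$ removed. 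Thus $(\star)$ is equivalent to: $\bigl\lvert\bigwedge_{i\in T}P_i\wedge\bs y\bigr\rvert\equiv 0\pmod 2$ for every nonempty $T\subseteq\{0,\dots,n-1\}\setminus\{\alpha\}$ with $\lvert T\rvert\le d$. On the linear-system side, when $\lvert\bs z\rvert=1$ one has $(X\bs y')_T=y'_{T\setminus\{\alpha\}}$ if $\alpha\in T$ and $0$ otherwise, and $(b\bs v)_T=b$ if $T=\{\alpha\}$ and $0$ otherwise; hence the row of $L\bs y\oplus X\bs y'\oplus b\bs v$ labelled by a set $T$ avoiding $\alpha$ equals $\bigl\lvert\bigwedge_{i\in T}P_i\wedge\bs y\bigr\rvert$, while each row labelled by a set $T$ containing $\alpha$ carries the variable $y'_{T\setminus\{\alpha\}}$ (and, if $T=\{\alpha\}$, also $b$), a variable that appears in no other row. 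Consequently the system $L\bs y\oplus X\bs y'\oplus b\bs v=\bs 0$ is solvable in $(\bs y',b)$ if and only if all rows avoiding $\alpha$ vanish, i.e.\ if and only if $(\star)$ holds; this proves the theorem when $\lvert\bs z\rvert=1$.

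To treat a general $\bs z\neq\bs 0$ I would show that both assertions ``$(\star)$ holds'' and ``$L\bs y\oplus X\bs y'\oplus b\bs v=\bs 0$ is solvable in $(\bs y',b)$'' are unchanged under an elementary row operation $P_\alpha\mapsto P_\alpha\oplus P_\beta$, $z_\alpha\mapsto z_\alpha\oplus z_\beta$ (other rows fixed), for any $\alpha\neq\beta$. Since these operations are involutions and generate $\mathrm{GL}_n(\mathbb F_2)$, one may then choose a basis change carrying $\bs z$ to a weight-one vector and invoke the previous paragraph. For $(\star)$, substituting the modified rows and expanding shows that the extra terms reduce, using $(\star)$ for $(P,\bs z)$ at the relevant tuples (possibly of smaller size) together with the parities $\bigl\lvert\bigwedge_{i\in T}P_i\wedge\bs y\bigr\rvert$ that $(\star)$ already forces to vanish, to zero. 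For the linear system, one takes $\tilde b=b$ and defines $\tilde{\bs y}'$ by the induced operation on its index sets, then verifies $\tilde L\bs y\oplus\tilde X\tilde{\bs y}'\oplus\tilde b\tilde{\bs v}=\bs 0$ row by row using the identity $(X\bs y')_T=\bigoplus_{i\in T}z_i\,y'_{T\setminus\{i\}}$ (the higher-order analogue of Equation~\ref{eq:4.4}) and the relation $\tilde L_{\{\alpha\}\cup T}\bs y=L_{\{\alpha\}\cup T}\bs y\oplus L_{\{\beta\}\cup T}\bs y$.

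The main obstacle I anticipate is this last step: the two invariance computations under a row operation are the genuinely lengthy part, because one must track how the index sets labelling $L$, $X$, $\bs v$ and $\bs y'$ transform and check that every cross term cancels modulo $2$ — the same phenomenon that makes the corresponding part of the proof of Theorem~\ref{thm:todd_iff} the longest. By contrast, once that case is available the multilinear expansion and the base case $\lvert\bs z\rvert=1$ are essentially bookkeeping.
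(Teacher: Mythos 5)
Your proposal follows the same three-step route as the paper's own proof: the multilinear expansion reducing the claim to a parity condition over proper subsets, the direct analysis of the $\lvert\bs z\rvert=1$ case via the observation that each row containing the special index carries a free variable $y'_{T\setminus\{\alpha\}}$ (or $b$) appearing nowhere else, and the reduction of general $\bs z$ to weight one by elementary row operations whose effect is checked on both sides of the equivalence. The remaining work is the mechanical verification of the two invariance computations, which you correctly identify together with the key identities needed; your explicit assumption $\bs z\neq\bs 0$ is indeed required (and is implicitly made by the paper as well, since the statement fails for $\bs z=\bs 0$ when $d\geq 1$).
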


\begin{proof}
    In the case where $k=1$, we have
    \begin{equation}
    \begin{aligned}
        \lvert P'_{\alpha_1} \rvert &= \lvert P_{\alpha_1} \oplus z_{\alpha_1}\bs y \rvert \\
        &\equiv \lvert P_{\alpha_1} \rvert + \lvert z_{\alpha_1} \bs y \rvert \pmod{2} \\
        &\equiv \lvert P_{\alpha_1} \rvert \pmod{2} \\
    \end{aligned}
    \end{equation}
    because $\lvert \bs y \rvert \equiv 0 \pmod{2}$.
    This proves Theorem~\ref{thm:iff_higer_orders} in the case where $d=0$, because in such case the equation $L\bs y \oplus X \bs y' \oplus b \bs v = \bs 0$ is always satisfied.
    We now prove Theorem~\ref{thm:iff_higer_orders} in the case where $d > 0$ and $\bs z$ is satisfying $\lvert \bs z \rvert = 1$.
    Without loss of generality we will assume that $z_{\alpha_1} = 1$ and $z_{\alpha_i} = 0$ for all $i$ such that $i \neq 1$, then
    \begin{equation}
    \begin{aligned}
        \big\lvert \bigwedge_{i=1}^{k} P'_{\alpha_i} \big\rvert &= \big\lvert (P_{\alpha_1} \oplus z_{\alpha_1}\bs y) \bigwedge_{i=2}^{k} P_{\alpha_i} \big\rvert \\
        &= \big\lvert (P_{\alpha_1} \oplus \bs y) \bigwedge_{i=2}^{k} P_{\alpha_i} \big\rvert \\
        &= \big\lvert \bigwedge_{i=1}^{k} P_{\alpha_i} \oplus \bigwedge_{i=2}^{k} P_{\alpha_i} \wedge \bs y \big\rvert \\
        &\equiv \big\lvert \bigwedge_{i=1}^{k} P_{\alpha_i} \big\rvert + \big\lvert \bigwedge_{i=2}^{k} P_{\alpha_i} \wedge \bs y \big\rvert \pmod{2} \\
    \end{aligned}
    \end{equation}
    where $\alpha_1, \ldots, \alpha_k$ are satisfying $0 \leq \alpha_1 < \ldots < \alpha_k < n$ and $k$ is satisfying $1 \leq k \leq d+1$.
    Therefore, proving Theorem~\ref{thm:iff_higer_orders} in the case where $d > 0$ and $\lvert \bs z \rvert = 1$ can be done by showing that there exists a vector $\bs y'$ and a Boolean $b$ such that $L\bs y \oplus X\bs y' \oplus b \bs v = \bs 0$ if and only if the equation
    \begin{equation}
    \begin{aligned}\label{eq:iff_higer_orders_0}
        \big\lvert \bigwedge_{i=2}^{k} P_{\alpha_i} \wedge \bs y \big\rvert &\equiv 0 \pmod{2} \\
    \end{aligned}
    \end{equation}
    holds for all $\alpha_2, \ldots, \alpha_k$ satisfying $\alpha_1 < \alpha_2 < \ldots < \alpha_{k} < n$ and $k$ satisfying $2 \leq k \leq d+1$.
    By definition we have $X_{\alpha_2\ldots\alpha_{k}} = \bs 0$ and $v_{\alpha_2\ldots\alpha_{k}} = 0$ for all $\alpha_2, \ldots, \alpha_k$ satisfying $0 \leq \alpha_1 < \ldots < \alpha_{k} < n$ and $k$ satisfying $2 \leq k \leq d+1$.
    Hence,
    \begin{equation}
        L_{\alpha_2\ldots\alpha_{k}} \bs y \oplus X_{\alpha_2\ldots\alpha_{k}} \bs y' \oplus b v_{\alpha_2\ldots\alpha_{k}} = L_{\alpha_2\ldots\alpha_{k}} \bs y = \big\lvert \bigwedge_{i=2}^{k} P_{\alpha_i} \wedge \bs y \big\rvert \pmod{2}
    \end{equation}
    and so if $L\bs y \oplus X\bs y' \oplus b \bs v = \bs 0$ then $L_{\alpha_2\ldots\alpha_{k}}\bs y = 0$ which imply that Equation~\ref{eq:iff_higer_orders_0} is satisfied.
    Conversely, if Equation~\ref{eq:iff_higer_orders_0} is satisfied then we must have 
    \begin{equation}
        L_{\alpha_2\ldots\alpha_{k}}\bs y \oplus X_{\alpha_2\ldots\alpha_{k}}\bs y' \oplus b \bs v_{\alpha_2\ldots\alpha_{k}} = L_{\alpha_2\ldots\alpha_{k}} = \bs 0
    \end{equation}
    for all $\alpha_2, \ldots, \alpha_k$ satisfying $\alpha_1 < \alpha_2 < \ldots < \alpha_{k} < n$ and $k$ satisfying $2 \leq k \leq d+1$.
    Moreover we have
    \begin{equation}
        L_{\alpha_1}\bs y \oplus X_{\alpha_1}\bs y' \oplus b v_{\alpha_1} = L_{\alpha_1}\bs y \oplus b
    \end{equation}
    because $X_{\alpha_1} = \bs 0$ and $v_{\alpha_1} = 1$.
    And
    \begin{equation}
        L_{\alpha_1\ldots\alpha_{k}}\bs y \oplus X_{\alpha_1\ldots\alpha_{k}}\bs y' \oplus b v_{\alpha_1\ldots\alpha_{k}} = L_{\alpha_1\ldots\alpha_{k}}\bs y \oplus y'_{\alpha_2 \ldots \alpha_{k}}
    \end{equation}
    for all $\alpha_2, \ldots, \alpha_k$ satisfying $\alpha_1 < \alpha_2 < \ldots < \alpha_{k} < n$ and $k$ satisfying $2 \leq k \leq d$, because $v_{\alpha_1 \ldots \alpha_{k}} = 0$ and
    \begin{equation}\label{eq:iff_higer_orders_2}
        X_{\alpha_1 \ldots \alpha_k}\bs y' = \bigoplus_{i=1}^k z_{\alpha_i} y'_{\alpha_1 \ldots \alpha_{i-1} \alpha_{i+1} \ldots \alpha_{k}}
    \end{equation}
    for all $\alpha_1, \ldots, \alpha_k$ satisfying $0 \leq \alpha_1 < \ldots < \alpha_{k} < n$ and $k$ satisfying $2 \leq k \leq d$ and where the rows of $\bs y'$ are labelled the same way as the columns of $X$.
    Let $\bs y'$ be such that $y'_{\alpha_2 \ldots \alpha_k} = L_{\alpha_1\ldots\alpha_{k}}\bs y$ for all $\alpha_2, \ldots, \alpha_k$ satisfying $\alpha_1 < \alpha_2 < \ldots < \alpha_{k} < n$ and $k$ satisfying $2 \leq k \leq d$ and $b$ be such that $b = L_{\alpha_1}\bs y$, then we have $L\bs y \oplus X\bs y' \oplus b\bs v = \bs 0$.
    Thus, we proved that
    \begin{equation}
        L\bs y \oplus X\bs y' \oplus b\bs v = \bs 0 \iff \big\lvert \bigwedge_{i=2}^{k} P_{\alpha_i} \wedge \bs y \big \rvert \equiv 0 \pmod{2}
    \end{equation}
    for all $\alpha_2, \ldots, \alpha_k$ satisfying $\alpha_1 < \alpha_2 < \ldots < \alpha_{k} < n$ and $k$ satisfying $2 \leq k \leq d+1$, and so Theorem~\ref{thm:iff_higer_orders} is true in the case where $\lvert \bs z \rvert = 1$.

    Let $B$ be a full rank binary matrix of size $n\times n$ which represents a change of basis, and let $\tilde{L}, \tilde{X}$ and $\tilde{\bs v}$ be constructed in the same way as $L, X$ and $\bs v$ but with respect to $BP$ and $B\bs z$.
    Notice that if Equation~\ref{eq:iff_higer_orders_1} is satisfied for $P$ and $\bs z$ then it is also satisfied for $BP$ and $B\bs z$.
    The proof of Theorem~\ref{thm:iff_higer_orders} can then be completed by proving that if there exists a vector $\bs y'$ and a Boolean $b$ such that $L\bs y \oplus X\bs y' \oplus b \bs v = \bs 0$ then there exists a vector $\tilde{\bs y}'$ and a Boolean $\tilde{b}$ such that $\tilde{L}\bs y \oplus \tilde{X}\tilde{\bs y}' \oplus \tilde{b}\tilde{\bs v} = \bs 0$.
    Suppose this proposition to be true and let $B$ be such that $\lvert B\bs z \rvert = 1$.
    Then, as previously demonstrated, Equation~\ref{eq:iff_higer_orders_1} holds for $BP$ and $B\bs z$ if and only if there exists a vector $\tilde{\bs y}'$ and a Boolean $\tilde{b}$ such that $\tilde{L}\bs y \oplus \tilde{X}\tilde{\bs y}' \oplus \tilde{b}\tilde{\bs v} = \bs 0$, which would imply that there exists a vector $\bs y'$ and a Boolean $b$ such that $L\bs y \oplus X\bs y' \oplus b\bs v = \bs 0$ if and only if Equation~\ref{eq:iff_higer_orders_1} holds for $P$ and $\bs z$.

    Let $\tilde{\bs z}$ and $\tilde{P}$ be such that $\tilde{z}_{\alpha_1} = z_{\alpha_1} \oplus z_{\alpha_2}$ and $\tilde{P}_{\alpha_1} = P_{\alpha_1} \oplus P_{\alpha_2}$ for some fixed $\alpha_1, \alpha_2$ satisfying $\alpha_1 \neq \alpha_2$ and $\tilde{z}_\beta = z_\beta$, $\tilde{P}_\beta = P_\beta$ for all $\beta$ such that $\beta \neq \alpha_1$.
    Without loss of generality, we will assume that $\alpha_1 = 0$ and $\alpha_2 = 1$.
    Indeed, all the other possibilities can be reduced to this case simply by permuting the rows of the parity table $P$ and the vector $\bs z$.
    Let $\tilde{L}$, $\tilde{X}$ and $\tilde{\bs v}$ be constructed in the same way as $L$, $X$ and $\bs v$ but with respect to $\tilde{P}$ and $\tilde{\bs z}$.
    We will now prove that if there exists a vector $\bs y'$ and a Boolean $b$ such that $L\bs y \oplus X\bs y' \oplus b\bs v = \bs 0$ then $\tilde{L}\bs y \oplus \tilde{X} \tilde{\bs y}' \oplus b \tilde{\bs v} = \bs 0$, where $\tilde{\bs y}'$ satisfies $\tilde{y}'_{\alpha_1\alpha_3\ldots\alpha_k} = y'_{\alpha_1\alpha_3\ldots\alpha_k} \oplus y'_{\alpha_2\ldots\alpha_k}$ and $\tilde{y}'_{\alpha_3\ldots\alpha_k} = y'_{\alpha_3\ldots\alpha_k}$ such that $\alpha_i \neq \alpha_1$ for all $i$.
    By using Equation~\ref{eq:iff_higer_orders_2} we can deduce that
    \begin{equation}\label{eq:iff_higer_orders_3}
        \tilde{X}_{\alpha_3\ldots \alpha_k}\tilde{\bs y}' = \bigoplus_{i=3}^{k} \tilde{z}_{\alpha_i} \tilde{y}'_{\alpha_3\ldots\alpha_{i-1}\alpha_{i+1}\ldots\alpha_k} = \bigoplus_{i=3}^{k} z_{\alpha_i} y'_{\alpha_3\ldots\alpha_{i-1}\alpha_{i+1}\ldots\alpha_k} = X_{\alpha_3\ldots \alpha_k}\bs y'
    \end{equation}
    for all $\alpha_3, \ldots, \alpha_k$ satisfying $\alpha_2 < \alpha_3 < \ldots < \alpha_k < n$ and $k$ satisfying $3 \leq k \leq d+2$, and where the rows of $\tilde{y}'$ and $y'$ are labelled in the same as the columns of $\tilde{X}$ and $X$.
    And from the definitions of the vectors $\tilde{\bs v}$ and $\bs v$, we can deduce that
    \begin{equation}\label{eq:iff_higer_orders_4}
        \tilde{v}_{\alpha_3\ldots\alpha_k} = \bigwedge_{i=3}^k \tilde{z}_{\alpha_i} = \bigwedge_{i=3}^k z_{\alpha_i} = v_{\alpha_3\ldots\alpha_k}
    \end{equation}
    for all $\alpha_3, \ldots, \alpha_k$ satisfying $\alpha_2 < \alpha_3 < \ldots < \alpha_k < n$ and $k$ satisfying $3 \leq k \leq d+2$.
    Equations~\ref{eq:iff_higer_orders_3} and~\ref{eq:iff_higer_orders_4} imply that
    \begin{equation}
        \tilde{L}_{\alpha_3\ldots \alpha_k}\bs y \oplus \tilde{X}_{\alpha_3\ldots \alpha_k} \tilde{\bs y}' \oplus b\tilde{v}_{\alpha_3\ldots \alpha_k} = L_{\alpha_3\ldots \alpha_k}\bs y \oplus X_{\alpha_3\ldots \alpha_k}\bs y' \oplus b v_{\alpha_3\ldots \alpha_k} = 0
    \end{equation}
    for all $\alpha_3, \ldots, \alpha_k$ satisfying $\alpha_2 < \alpha_3 < \ldots < \alpha_k < n$ and $k$ satisfying $3 \leq k \leq d+2$.
    Furthermore, we have
    \begin{equation}
    \begin{aligned}
        \tilde{L}_{\alpha_1\ldots\alpha_k}\bs y &\equiv \big\lvert \bigwedge_{i=1}^{k} \tilde{P}_{\alpha_i} \wedge \bs y \big\rvert \pmod{2} \\
        &\equiv \big\lvert (P_{\alpha_1} \oplus P_{\alpha_2}) \wedge \bigwedge_{i=2}^k P_{\alpha_k} \wedge \bs y \big\rvert \pmod{2} \\
        &\equiv \big\lvert \bigwedge_{i=1}^k P_{\alpha_k} \wedge \bs y \big\rvert + \big\lvert \bigwedge_{i=2}^k P_{\alpha_k} \wedge \bs y \big\rvert\pmod{2} \\
        &\equiv L_{\alpha_1\ldots\alpha_k}\bs y + L_{\alpha_2\ldots\alpha_k}\bs y \pmod{2} \\
        &\equiv X_{\alpha_1\ldots\alpha_k}\bs y' + b v_{\alpha_1\ldots\alpha_k} + X_{\alpha_2\ldots\alpha_k}\bs y' + b v_{\alpha_2\ldots\alpha_k} \pmod{2} \\
    \end{aligned}
    \end{equation}
    for all $\alpha_3, \ldots, \alpha_k$ satisfying $\alpha_2 < \alpha_3 < \ldots < \alpha_k < n$ and $k$ satisfying $3 \leq k \leq d$, which entails
{\allowdisplaybreaks
    \begin{align} \notag
        &\tilde{L}_{\alpha_1\ldots\alpha_k}\bs y \oplus \tilde{X}_{\alpha_1\ldots\alpha_k}\tilde{\bs y}' \oplus b \tilde{v}_{\alpha_1\ldots\alpha_k} \\ \notag
        &= X_{\alpha_1\ldots\alpha_k}\bs y' \oplus X_{\alpha_2\ldots\alpha_k}\bs y' \oplus \tilde{X}_{\alpha_1\ldots\alpha_k}\tilde{\bs y}'\oplus b v_{\alpha_1\ldots\alpha_k} \oplus b v_{\alpha_2\ldots\alpha_k} \oplus b \tilde{v}_{\alpha_1\ldots\alpha_k} \\ \notag
        &= \bigoplus_{i=1}^{k} z_{\alpha_i} y'_{\alpha_1\ldots\alpha_{i-1}\alpha_{i+1}\ldots\alpha_k} \oplus \bigoplus_{i=2}^{k} z_{\alpha_i} y'_{\alpha_2\ldots\alpha_{i-1}\alpha_{i+1}\ldots\alpha_k} \oplus \bigoplus_{i=1}^{k} \tilde{z}_{\alpha_i} \tilde{y}'_{\alpha_1\ldots\alpha_{i-1}\alpha_{i+1}\ldots\alpha_k} \\ \notag
        &\quad \oplus b \bigwedge_{i=1}^k z_{\alpha_i} \oplus b \bigwedge_{i=2}^k z_{\alpha_i} \oplus b \bigwedge_{i=1}^k \tilde{z}_{\alpha_i}\\ \notag
        &= \bigoplus_{i=1}^{k} z_{\alpha_i} y'_{\alpha_1\ldots\alpha_{i-1}\alpha_{i+1}\ldots\alpha_k} \oplus \bigoplus_{i=2}^{k} z_{\alpha_i} y'_{\alpha_2\ldots\alpha_{i-1}\alpha_{i+1}\ldots\alpha_k} \oplus (z_{\alpha_1} \oplus z_{\alpha_2})y'_{\alpha_2\ldots\alpha_k} \\ \notag
        &\quad \oplus \bigoplus_{i=2}^{k} z_{\alpha_i} \tilde{y}'_{\alpha_1\ldots\alpha_{i-1}\alpha_{i+1}\ldots\alpha_k} \oplus b \bigwedge_{i=1}^k z_{\alpha_i} \oplus b \bigwedge_{i=2}^k z_{\alpha_i} \oplus b \bigwedge_{i=2}^k z_{\alpha_i} \wedge (z_{\alpha_1} \oplus z_{\alpha_2}) \\
        &= \bigoplus_{i=2}^{k} z_{\alpha_i} (y'_{\alpha_1\ldots\alpha_{i-1}\alpha_{i+1}\ldots\alpha_k} \oplus y'_{\alpha_2\ldots\alpha_{i-1}\alpha_{i+1}\ldots\alpha_k}) \oplus z_{\alpha_2} y'_{\alpha_2\ldots\alpha_k} \\ \notag
        & \quad \oplus \bigoplus_{i=2}^{k} z_{\alpha_i} \tilde{y}'_{\alpha_1\ldots\alpha_{i-1}\alpha_{i+1}\ldots\alpha_k} \oplus b \bigwedge_{i=1}^k z_{\alpha_i} \oplus b \bigwedge_{i=2}^k z_{\alpha_i} \oplus b \bigwedge_{i=1}^k z_{\alpha_i} \oplus b \bigwedge_{i=2}^k z_{\alpha_i} \\ \notag
        &= \bigoplus_{i=2}^{k} z_{\alpha_i} (y'_{\alpha_1\ldots\alpha_{i-1}\alpha_{i+1}\ldots\alpha_k} \oplus y'_{\alpha_2\ldots\alpha_{i-1}\alpha_{i+1}\ldots\alpha_k}) \oplus z_{\alpha_2} y'_{\alpha_2\ldots\alpha_k} \oplus z_{\alpha_2} \tilde{y}'_{\alpha_1\alpha_{3}\ldots\alpha_k} \\ \notag
        &\quad \oplus \bigoplus_{i=3}^{k} z_{\alpha_i} \tilde{y}'_{\alpha_1\ldots\alpha_{i-1}\alpha_{i+1}\ldots\alpha_k} \\ \notag
        &= \bigoplus_{i=2}^{k} z_{\alpha_i} (y'_{\alpha_1\ldots\alpha_{i-1}\alpha_{i+1}\ldots\alpha_k} \oplus y'_{\alpha_2\ldots\alpha_{i-1}\alpha_{i+1}\ldots\alpha_k}) \oplus z_{\alpha_2} y'_{\alpha_2\ldots\alpha_k} \oplus z_{\alpha_2} (y'_{\alpha_1\alpha_{3}\ldots\alpha_k} \oplus y'_{\alpha_2\ldots\alpha_k} \\ \notag
        &\quad \oplus y'_{\alpha_{3}\ldots\alpha_k}) \oplus \bigoplus_{i=3}^{k} z_{\alpha_i} (y'_{\alpha_1\ldots\alpha_{i-1}\alpha_{i+1}\ldots\alpha_k} \oplus y'_{\alpha_2\ldots\alpha_{i-1}\alpha_{i+1}\ldots\alpha_k})\\ \notag
        &= 0.
    \end{align}
}
    Finally, we have
    \begin{equation}
    \begin{aligned}
        \tilde{L}_{\alpha_1}\bs y \oplus \tilde{X}_{\alpha_1}\tilde{\bs y}' \oplus b\tilde{v}_{\alpha_1}
        &= \tilde{L}_{\alpha_1}\bs y \oplus b\tilde{z}_{\alpha_1} \\
        &= \tilde{L}_{\alpha_1}\bs y \oplus b(z_{\alpha_1} \oplus z_{\alpha_2}) \\
        &\equiv \lvert \tilde{P}_{\alpha_1} \wedge \bs y \rvert + bv_{\alpha_1} + bv_{\alpha_2} \pmod{2} \\
        &\equiv \lvert (P_{\alpha_1} \oplus P_{\alpha_2}) \wedge \bs y \rvert + bv_{\alpha_1} + bv_{\alpha_2} \pmod{2} \\
        &\equiv \lvert P_{\alpha_1} \wedge \bs y \rvert + \lvert P_{\alpha_2} \wedge \bs y \rvert + bv_{\alpha_1} + bv_{\alpha_2} \pmod{2} \\
        &= L_{\alpha_1} \oplus bv_{\alpha_1} \oplus L_{\alpha_2} \oplus bv_{\alpha_2} \\
        &= 0.
    \end{aligned}
    \end{equation}
    Thus, we proved that
    \begin{equation}
        L\bs y \oplus X\bs y' \oplus b \bs v = \bs 0 \implies \tilde{L}\bs y \oplus \tilde{X}\tilde{\bs y}' \oplus b \tilde{v} = \bs 0
    \end{equation}
    which concludes the proof of Theorem~\ref{thm:iff_higer_orders}.
\end{proof}

\section{$R_Z(\pi/2^d)$-count upper bound in universal gate sets}\label{sec:k_upper_bound}

In this section we demonstrate an upper bound achievable in polynomial time for the $R_Z(\pi/2^d)$-count within a $\{\mathrm{CNOT},$ $H$, $R_Z(\pi/2^d)$, $R_Z(2\pi/2^d)\}$ circuit where $d$ is a non-negative integer.
Note that it corresponds to an upper bound for the number of $T$ gates in a Clifford$+T$ circuit in the case where $d=2$.
Our proof for the upper bound will partially rest on the following lemma, which has already been proven with a different approach in Reference~\cite{campbell2017unified} for the case where $d=2$.

\begin{lemma}\label{lem:d_upper_bound}
    Let $d$ be a non-negative integer and let $U \in \mathcal{D}_{d+1}$ act on $n$ qubits where $\mathcal{D}_{d+1}$ is the diagonal subgroup of the $(d+1)$th level of the Clifford hierarchy.
    Then $U$ can be decomposed into two unitary gates $U = \tilde{U} \tilde{U}'$ which can be found in polynomial time and such that $\tilde{U} \in \mathcal{D}_{d+1}$ is acting on $n-1$ qubits.
    Furthermore, we can find a $\{\mathrm{CNOT},$ $R_Z(\pi/2^d)$, $R_Z(2\pi/2^d)\}$ circuit implementing $\tilde{U}' \in \mathcal{D}_{d+1}$ with no more than
    \begin{equation}
        \sum_{i=0}^{d-1} {n-1 \choose i} + 1
    \end{equation}
    $R_Z(\pi/2^d)$ gates in polynomial time.
\end{lemma}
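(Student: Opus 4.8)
The plan is to peel off the dependence on one qubit directly at the level of the weighted polynomial. Write $U = \mathrm{diag}(\omega^{p(\bs x)})$ with $\omega = e^{i\pi/2^d}$ and $p$ a weighted polynomial of degree $\le d+1$ modulo $2^{d+1}$ on $\bs x = (x_0,\dots,x_{n-1})$, set $\bs x' = (x_0,\dots,x_{n-2})$, and use $x_{n-1}^2 = x_{n-1}$ to split $p$ uniquely as $p(\bs x) = p_0(\bs x') + x_{n-1}\,g(\bs x')$, where $p_0 = p|_{x_{n-1}=0}$ is a weighted polynomial of degree $\le d+1$ modulo $2^{d+1}$ and $g$ has degree $\le d$. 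Setting $\tilde U = \mathrm{diag}(\omega^{p_0(\bs x')})$ and $\tilde U' = \mathrm{diag}(\omega^{x_{n-1}g(\bs x')})$ yields $U = \tilde U\,\tilde U'$ with both factors diagonal (hence commuting), $\tilde U \in \mathcal{D}_{d+1}$ acting only on the first $n-1$ qubits, and $\tilde U' \in \mathcal{D}_{d+1}$ (one checks that $x_{n-1}g$ is a legitimate weighted polynomial of degree $\le d+1$ modulo $2^{d+1}$ because each degree-$k$ monomial of $g$ carries the weight $(-2)^k$). This decomposition is read off the parity table of $U$ in polynomial time: the columns with a $0$ on row $n-1$ give a parity table for $\tilde U$, the remaining columns one for $\tilde U'$.

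What remains is to bound the $R_Z(\pi/2^d)$-count of $\tilde U'$, and here I would first split off the part of $g$ that does not couple $x_{n-1}$ to $\bs x'$. Since a degree-$k$ monomial of $g$ carries the weight $(-2)^k$, we can write $g = c_0 - 2\,g'$, where $c_0 \in \mathbb{Z}_{2^{d+1}}$ is the constant term of $g$ and $g'$ is a (standard) weighted polynomial of degree $\le d$ modulo $2^d$ on the $n-1$ qubits with no constant term. Hence $\tilde U' = R_Z(c_0\pi/2^d)_{n-1}\cdot V$ with $V = \mathrm{diag}(\omega^{-2x_{n-1}g'})$; the rotation $R_Z(c_0\pi/2^d)$ on qubit $n-1$ costs at most one $R_Z(\pi/2^d)$ gate (none if $c_0$ is even), which is the ``$+1$'' of the bound.

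It then suffices to realise $V$ with at most $\sum_{i=0}^{d-1}\binom{n-1}{i}$ gates $R_Z(\pi/2^d)$. The key observation is that $V$ is essentially the degree-$\le d$ object $g'$ on $n-1$ qubits, multiplied by the extra variable $x_{n-1}$: since $\omega^{-2w} = (\omega^2)^{-w}$ with $\omega^2 = e^{i\pi/2^{d-1}}$, one has $V = C_{n-1}\!\left[\mathrm{diag}\big((\omega^2)^{-g'}\big)\right]$, i.e.\ $\mathrm{diag}((\omega^2)^{-g'})$ applied to the first $n-1$ qubits conditioned on the last one, and $\mathrm{diag}((\omega^2)^{-g'}) \in \mathcal{D}_d$ over those $n-1$ qubits (its base angle is $\pi/2^{d-1}$). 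I would run the polynomial-time procedure behind Theorem~\ref{thm:d_tohpe_upper_bound} with its parameters $(d,n)$ replaced by $(d-1,n-1)$ to synthesise $\mathrm{diag}((\omega^2)^{-g'})$ as a $\{\mathrm{CNOT},R_Z(\pi/2^{d-1}),R_Z(2\pi/2^{d-1})\}$ circuit using at most $\sum_{i=0}^{d-1}\binom{n-1}{i}$ gates $R_Z(\pi/2^{d-1}) = R_Z(2\pi/2^d)$, and then lift that circuit qubit-wise to a circuit for $V$ on $n$ qubits: conditioning an $R_Z(2\pi/2^d)$ rotation on $x_{n-1}$ is exactly what produces the $R_Z(\pi/2^d)$ gates, the shared rotation on qubit $n-1$ being absorbed into the already-present $R_Z(c_0\pi/2^d)_{n-1}$, while the $\mathrm{CNOT}$ layers and the coarser $R_Z(2\pi/2^{d-1})$ rotations lift without cost. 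Combined with the constant-part factor this gives the stated bound, and every step runs in polynomial time; the choice of which qubit to peel is immaterial, by a change of basis on the parity table (which leaves $\mathcal{D}_{d+1}$ invariant), and this is precisely what makes the lemma an inductive step.

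The step I expect to require the most care is this last lifting/counting argument — one must verify that conditioning the circuit for $\mathrm{diag}((\omega^2)^{-g'})$ on $x_{n-1}$ yields a circuit whose $R_Z(\pi/2^d)$-count is bounded by $\sum_{i=0}^{d-1}\binom{n-1}{i}$ itself (plus the one rotation already accounted for), rather than incurring an additional overhead, using Equation~\ref{eq:pp_wp_eq} and the identities underlying Theorem~\ref{thm:d_tohpe} and the construction of Section~\ref{sub:d_tohpe}. The special case $d=2$ of this estimate is the content of the corresponding result in Reference~\cite{campbell2017unified}, obtained there by a different route; the general-$d$ argument follows the same lines, drawing on the machinery of Sections~\ref{sub:preliminaries_d}–\ref{sub:d_fast_todd} and, in particular, on the generalised subadditivity statement (Theorem~\ref{thm:d_subadditivity}).
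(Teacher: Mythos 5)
Your overall strategy (peel off one qubit, invoke Theorem~\ref{thm:d_tohpe_upper_bound} at order $d$ on the remaining $n-1$ qubits, and pay one extra rotation on the peeled qubit) is the same as the paper's, but there is a genuine gap in the final counting step, and it is fatal for the decomposition you fix at the outset. By committing to $\tilde U' = \mathrm{diag}(\omega^{x_{n-1}g(\bs x')})$ you require $\tilde U'$ to reproduce \emph{exactly} the $x_{n-1}$-dependent part of the phase, and this operator can require far more than $\sum_{i=0}^{d-1}\binom{n-1}{i}+1$ expensive rotations. Concretely, conditioning a rotation $R_Z(2a\pi/2^d)$ on the parity $\chi_j$ by $x_{n-1}$ rests on the identity $2x_{n-1}\chi_j = x_{n-1} + \chi_j - (x_{n-1}\oplus\chi_j)$, which produces \emph{two} rotations of angle $\pm a\pi/2^d$ besides the mergeable one on $x_{n-1}$: one on $x_{n-1}\oplus\chi_j$ and one on $\chi_j$ itself. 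The latter does not touch qubit $n-1$, cannot be merged into the control rotation, and under your definition of $\tilde U'$ cannot be discarded either, so the honest count is $2\sum_{i=0}^{d-1}\binom{n-1}{i}+1$. That this is not mere slack in the argument is shown by $d=2$, $n=3$, $U=\mathrm{diag}(\omega^{4x_0x_1x_2-4x_0x_2-4x_1x_2})$: here $p_0=0$, so your $\tilde U'=U$, which is Clifford-equivalent to $CCZ$ and needs $7$ $T$ gates, whereas the lemma's bound is $\binom{2}{0}+\binom{2}{1}+1=4$.

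The paper escapes this by not prescribing $\tilde U'$ in advance. It only asks for a parity table $\tilde P'$, all of whose columns contain qubit $\beta$, that matches the entries of the order-$(d+1)$ signature tensor of $U$ which involve the index $\beta$; this is an order-$d$ problem on $n-1$ qubits, whence the $\sum_{i=0}^{d-1}\binom{n-1}{i}$ via Theorem~\ref{thm:d_tohpe_upper_bound}, plus at most one column to fix the parity of $\lvert P_\beta\rvert$ before setting row $\beta$ of $\tilde P'$ to $\bs 1$. The tensor entries of $\tilde P'$ that do \emph{not} involve $\beta$ are left unconstrained, and the discrepancy is pushed into $\tilde U$, which therefore acts only on the other $n-1$ qubits and whose cost is not charged to this lemma. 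In the example above this sends a $CS$-type factor on qubits $0,1$ into $\tilde U$ and leaves a four-column $\tilde U'$. Your proof can be repaired along exactly these lines --- absorb every rotation on a parity not containing $x_{n-1}$ into $\tilde U$ and redefine $\tilde U'$ accordingly --- but that redefinition is the substance of the lemma, not a detail of the lifting you deferred.
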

\begin{proof}
    Let $P$ be a parity table associated with an implementation of $U$, let $\beta$ be the qubit on which $\tilde{U}$ is not acting, and let $\tilde{P}'$ be a parity table such that
    \begin{equation}\label{eq:eq_1_d_upper_bound}
        \big\lvert \bigwedge_{i=1}^d \tilde{P}'_{\alpha_i} \big\rvert \equiv \big\lvert \bigwedge_{i=1}^d P_{\alpha_i} \wedge P_\beta \big\rvert \pmod{2}
    \end{equation}
    for all $\alpha_1, \ldots, \alpha_d$ satisfying $0 \leq \alpha_1 \leq \ldots \leq \alpha_d < n$ and $\alpha_i \neq \beta$ for all $i$.
    The row $\beta$ of $\tilde{P}'$ can be ignored, and, as stated by Theorem~\ref{thm:d_tohpe_upper_bound}, we can optimize the number of columns of $\tilde{P}'$ such that  it is lower or equal to
    \begin{equation}
        \sum_{i=0}^{d-1} {n-1 \choose i}
    \end{equation}
    and still satisfies Equation~\ref{eq:eq_1_d_upper_bound}.
    If we add a null column to $\tilde{P}'$ in the case where the number of columns of $\tilde{P}'$ is not equal to $\lvert P_\beta \rvert \pmod{2}$, and then set $\tilde{P}'_\beta = \bs 1$, then we have 
    \begin{equation}\label{eq:eq_2_d_upper_bound}
        \big\lvert \bigwedge_{i=1}^d \tilde{P}'_{\alpha_i} \wedge \tilde{P}'_\beta \big\rvert = \big\lvert \bigwedge_{i=1}^d \tilde{P}'_{\alpha_i} \big\rvert \equiv \big\lvert \bigwedge_{i=1}^d P_{\alpha_i} \wedge P_\beta \big\rvert \pmod{2}
    \end{equation}
    for all $0 \leq \alpha_1 \leq \ldots \leq \alpha_d < n$ and the number of columns of $\tilde{P}$ is at most 
    \begin{equation}
        \sum_{i=0}^{d-1} {n-1 \choose i} + 1.
    \end{equation}
    Finally, we can easily find a parity table $\tilde{P}$ such that $\tilde{P}_\beta = \bs 0$ and
    \begin{equation}\label{eq:eq_3_d_upper_bound}
        \big\lvert \bigwedge_{i=1}^{d+1} \tilde{P}_{\alpha_i} \big\rvert \equiv \big\lvert \bigwedge_{i=1}^{d+1} \tilde{P}'_{\alpha_i} \big\rvert + \big\lvert \bigwedge_{i=1}^{d+1} P_{\alpha_i} \big\rvert \pmod{2}
    \end{equation}
    for all $0 \leq \alpha_1 \leq \ldots \leq \alpha_d < n$, which implies its associated unitary gate $\tilde{U}$ doesn't act on qubit $\beta$.
    Thus, $\tilde{U}$ and $\tilde{U}'$ are forming a decomposition of $U$ up to an operator $V$ implementable over $\{\mathrm{CNOT}$, $R_Z(2\pi/2^d)\}$ gate set: $U = \tilde{U}\tilde{U}'V$, where $\tilde{U}'$ is the unitary gate associated with the parity table $\tilde{P}'$.
    The operator $V$ can then be merged with the unitary $\tilde{U}'$ to comply with the decomposition of $U$ established in Lemma~\ref{lem:d_upper_bound}.
\end{proof}

Based on Lemma~\ref{lem:d_upper_bound} and Theorem~\ref{thm:d_tohpe_upper_bound}, we can prove the following theorem which provides an upper bound for the number of $R_Z(\pi/2^d)$ gates in a $\{\mathrm{CNOT}$, $H$, $R_Z(\pi/2^d)$, $R_Z(2\pi/2^d)\}$ circuit.
Note that in the case where $d = 2$, we get an upper bound of $(n+1)(n+2h)/2 + 1$ for the number of $T$ gates in a Clifford$+T$ circuit, which is significantly better than the previously best-known ancilla-free upper bound of $\mathcal{O}(n^2h)$~\cite{amy2019t}, where $n$ is the number of qubits and $h$ is the number of Hadamard gates in the circuit.

\begin{theorem}\label{thm:upper_bound}
    Let $U$ be a unitary gate acting on $n$ qubits and implementable over the $\{\mathrm{CNOT}$, $H$, $R_Z(\pi/2^d)$, $R_Z(2\pi/2^d)\}$ gate set where $d$ is a non-negative integer and with $h$ internal Hadamard gates such that
    \begin{equation}\label{eq:d_u_decomposition}
        U = C_f U_0 \prod_{i=1}^h \left[ H_{\alpha_i} U_i \right] C_{init}
    \end{equation}
    where $C_f, C_{init}$ are implementable over the $\{\mathrm{CNOT}$, $H$, $R_Z(2\pi/2^d)\}$ gate set, $H_{\alpha_i}$ denotes the Hadamard gate applied on some qubit $\alpha_i$, and $U_0,\ldots, U_h$ are implementable over the $\{\mathrm{CNOT}$, $R_Z(\pi/2^d)$, $R_Z(2\pi/2^d)\}$ gate set.
    Then we can find a implementation of $U$ over the $\{\mathrm{CNOT}$, $H$, $R_Z(\pi/2^d)$, $R_Z(2\pi/2^d)\}$ gate set, in polynomial time, and such that the number of $R_Z(\pi/2^d)$ gates is lower or equal to
    \begin{equation}
        \sum_{i=0}^{d} {n \choose i} + h\left(\sum_{i=0}^{d-1} {n-1 \choose i} + 1\right).
    \end{equation}
\end{theorem}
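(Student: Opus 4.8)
The plan is to remove the $h$ internal Hadamard gates one at a time, working along the decomposition in Equation~\ref{eq:d_u_decomposition} from left to right, each time splitting off a cheap Hadamard-free remainder via Lemma~\ref{lem:d_upper_bound} and commuting the rest of the diagonal block past the next Hadamard gate. Concretely, set $U_0^{(0)} := U_0$ and process $i = 0, 1, \ldots, h-1$ in turn: apply Lemma~\ref{lem:d_upper_bound} to the diagonal unitary $U_i^{(i)} \in \mathcal{D}_{d+1}$ with the excluded qubit taken to be $\alpha_{i+1}$, obtaining in polynomial time a decomposition $U_i^{(i)} = \tilde{U}_i \tilde{U}_i'$ in which $\tilde{U}_i \in \mathcal{D}_{d+1}$ does not act on qubit $\alpha_{i+1}$ and $\tilde{U}_i'$ is realized by a $\{\mathrm{CNOT}, R_Z(\pi/2^d), R_Z(2\pi/2^d)\}$ circuit using at most $\sum_{j=0}^{d-1}{n-1 \choose j} + 1$ $R_Z(\pi/2^d)$ gates. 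Since diagonal operators commute and $\tilde{U}_i$ is supported away from $\alpha_{i+1}$, we have $U_i^{(i)} H_{\alpha_{i+1}} = \tilde{U}_i' \tilde{U}_i H_{\alpha_{i+1}} = \tilde{U}_i' H_{\alpha_{i+1}} \tilde{U}_i$, so $\tilde{U}_i'$ can be left in place immediately to the left of $H_{\alpha_{i+1}}$ while $\tilde{U}_i$ is absorbed into the next block: define $U_{i+1}^{(i+1)} := \tilde{U}_i U_{i+1}$, which is again a Hadamard-free $\{\mathrm{CNOT}, R_Z(\pi/2^d), R_Z(2\pi/2^d)\}$ circuit on $n$ qubits lying in $\mathcal{D}_{d+1}$.

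After these $h$ steps the circuit reads
\[
    U = C_f\, \tilde{U}_0'\, H_{\alpha_1}\, \tilde{U}_1'\, H_{\alpha_2} \cdots \tilde{U}_{h-1}'\, H_{\alpha_h}\, U_h^{(h)}\, C_{init},
\]
and the last Hadamard-free block $U_h^{(h)}$ is then re-implemented, in polynomial time, using at most $\sum_{j=0}^{d}{n \choose j}$ $R_Z(\pi/2^d)$ gates by Theorem~\ref{thm:d_tohpe_upper_bound}. Since $C_f$, $C_{init}$ and the Hadamard gates carry no $R_Z(\pi/2^d)$ gates, the $h$ remainders $\tilde{U}_0', \ldots, \tilde{U}_{h-1}'$ contribute at most $h(\sum_{j=0}^{d-1}{n-1 \choose j} + 1)$ and the final block contributes at most $\sum_{j=0}^{d}{n \choose j}$, which gives exactly the stated bound; the whole construction is $h+1$ polynomial-time steps, hence polynomial time, and the edge case $h=0$ is just Theorem~\ref{thm:d_tohpe_upper_bound} applied to $U_0$.

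The step I expect to demand the most care is the commutation-and-merging loop. For the commutation I must use that the $\tilde{U}_i$ returned by Lemma~\ref{lem:d_upper_bound} genuinely acts on only $n-1$ qubits — this is exactly what the lemma guarantees (its parity table has a zero row in position $\alpha_{i+1}$), and it matters that the $\{\mathrm{CNOT}, R_Z(2\pi/2^d)\}$ residual appearing in the lemma's construction is absorbed into $\tilde{U}_i'$ rather than into $\tilde{U}_i$, so that $\tilde{U}_i$ really is supported away from $\alpha_{i+1}$ and commutes with $H_{\alpha_{i+1}}$. For the merging I need $U_{i+1}^{(i+1)} = \tilde{U}_i U_{i+1}$ to remain in $\mathcal{D}_{d+1}$ (equivalently, implementable over the target gate set), so that the inductive hypothesis is maintained and Lemma~\ref{lem:d_upper_bound}, then finally Theorem~\ref{thm:d_tohpe_upper_bound}, can be invoked again; this follows because $\tilde{U}_i = U_i^{(i)} (\tilde{U}_i')^{-1}$ is a product of $\{\mathrm{CNOT}, R_Z(\pi/2^d), R_Z(2\pi/2^d)\}$ circuits — the gate set being closed under inverses, since $R_Z(-\pi/2^d)$ is an odd power of $R_Z(\pi/2^d)$ — and because the diagonal subgroup $\mathcal{D}_{d+1}$ of the Clifford hierarchy is closed under products. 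The remaining arguments are routine bookkeeping of the $R_Z(\pi/2^d)$-count, using throughout that the non-$R_Z(\pi/2^d)$ pieces are free.
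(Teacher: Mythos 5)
Your proposal is correct and follows essentially the same argument as the paper's proof: peel off a cheap factor of each Hadamard-free block via Lemma~\ref{lem:d_upper_bound} so that the remaining factor avoids the adjacent internal Hadamard qubit, commute it through, merge it into the neighbouring block, and bound the one leftover block with Theorem~\ref{thm:d_tohpe_upper_bound}. The only difference is cosmetic — you sweep from $U_0$ toward $U_h$ while the paper sweeps from $U_h$ toward $U_0$ — and your explicit checks (diagonal operators commute, the gate set is closed under inverses so the merged block stays in $\mathcal{D}_{d+1}$) are valid.
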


\begin{proof}
    As stated by Lemma~\ref{lem:d_upper_bound}, the unitary $U_h$ can be decomposed into two unitary gates: $U_h = \tilde{U}_h \tilde{U}'_h$, such that $\tilde{U}_h$ is not acting on qubit $\alpha_h$, which imply that $\tilde{U}_h$ commutes with the $h$th internal Hadamard gate.
    Let $U_i \tilde{U}_{i+1} = \tilde{U}_i \tilde{U}'_i$ for all $i$ satisfying $0 < i < h$ where $\tilde{U}_i \tilde{U}'_i$ is a decomposition of $U_i \tilde{U}'_{i+1}$ such as given by Lemma~\ref{lem:d_upper_bound} where $\tilde{U}'_i$ is not acting on qubit $\alpha_i$.
    Then Equation~\ref{eq:d_u_decomposition} is equivalent to 
    \begin{equation}
        U = C_f U_0\tilde{U}_1 \prod_{i=1}^h \left[ H_{\alpha_i} \tilde{U}'_i \right] C_{init}
    \end{equation}
    and, as stated by Lemma~\ref{lem:d_upper_bound}, we can find an implementation of $\tilde{U}'_i$ in polynomial time and with no more than
    \begin{equation}
        \sum_{k=0}^{d-1} {n-1 \choose k} + 1
    \end{equation}
    $R_Z(\pi/2^d)$ gates, for all $i$ satisfying $0 < i \leq h$.
    Moreover, as stated by Theorem~\ref{thm:d_tohpe_upper_bound}, we can find an implementation of $U_0 \tilde{U}_1$ in polynomial time and with no more than
    \begin{equation}
        \sum_{i=0}^{d} {n \choose i}
    \end{equation}
    $R_Z(\pi/2^d)$ gates.
    Thus, $U$ can be implemented in polynomial time over the $\{\mathrm{CNOT}$, $H$, $R_Z(\pi/2^d)$, $R_Z(2\pi/2^d)\}$ gate set and with a number of $R_Z(\pi/2^d)$ gates that is lower or equal to
    \begin{equation}
        \sum_{i=0}^{d} {n \choose i} + h\left(\sum_{i=0}^{d-1} {n-1 \choose i} + 1\right).
    \end{equation}
\end{proof}

\section{Conclusion}
We presented several polynomial-time algorithms for reducing the number of $T$ gates in a Clifford$+T$ circuit.
Benchmarks show that our algorithms consistently achieve the lowest execution times and provide the best $T$-count reduction on almost all the quantum circuits evaluated when compared to state-of-the-art $T$-count optimizers. 
As such, our algorithms not only achieve state-of-the-art $T$-count reduction but also offer much greater scalability, thereby allowing efficient $T$-count optimization on larger quantum circuits.

We proved that our algorithms are producing a circuit in which the number of $T$ gates is upper bounded by $(n^2 + n)/2 + 1$ when they are executed on a Hadamard-free circuit, where $n$ is the number of qubits.
It has been shown that there exists an asymptotic upper bound of $n^2/2 - 1$~\cite{cohen1992covering}.
The question of whether or not there exists a polynomial-time algorithm satisfying this asymptotic upper bound remains open.

We also demonstrated how the number of $T$ gates in a Clifford$+T$ circuit can be optimized such that it is lower or equal to $(n+1)(n+2h)/2 + 1$ where $n$ is the number of qubits and $h$ is the number of internal Hadamard gates in the circuit, without using any ancillary qubits and in polynomial time.
This reinforces the interdependence between the problems of optimizing the number of internal Hadamard gates and optimizing the number of $T$ gates.
As part of future work, it would be beneficial to more clearly determine the roles that internal Hadamard gates and ancillary qubits have in the $T$-count optimization problem. 

\section*{Acknowledgments}
We acknowledge funding from the Plan France 2030 through the projects NISQ2LSQ ANR-22-PETQ-0006 and EPIQ ANR-22-PETQ-007.

\bibliographystyle{quantum}
\bibliography{ref.bib}

\newpage
\appendix

\section{Proof of Theorem~\ref{thm:todd}}\label{app:todd}

\begin{proof}[Proof of Theorem~\ref{thm:todd}]
    Analogously to Equation~\ref{eq:tohpe_proof} and using Equations~\ref{eq:condition_1} and~\ref{eq:condition_2}, we have
    \begin{equation}
    \begin{aligned}
        &\lvert P'_\alpha \wedge P'_\beta \wedge P'_\gamma \rvert \\
        &\equiv \lvert P_\alpha \wedge P_\beta \wedge P_\gamma \rvert + z_\gamma\lvert P_\alpha \wedge P_\beta \wedge \bs y \rvert + z_\beta\lvert P_\alpha \wedge P_\gamma \wedge \bs y \rvert + z_\alpha\lvert P_\beta \wedge P_\gamma \wedge \bs y \rvert \\
          & \quad + z_\beta z_\gamma\lvert P_\alpha \wedge \bs y \rvert + z_\alpha z_\gamma\lvert P_\beta \wedge \bs y \rvert + z_\alpha z_\beta\lvert P_\gamma \wedge \bs y \rvert + z_\alpha z_\beta z_\gamma\lvert \bs y \rvert \pmod{2} \\
          &\equiv \lvert P_\alpha \wedge P_\beta \wedge P_\gamma \rvert + \lvert \left[ z_\alpha(P_\beta \wedge P_\gamma) \oplus z_\beta(P_\alpha \wedge P_\gamma) \oplus z_\gamma(P_\alpha \wedge P_\beta)\right] \wedge \bs y \rvert \pmod{2}
    \end{aligned}
    \end{equation}
    for all $\alpha, \beta, \gamma$ satisfying $0 \leq \alpha \leq \beta \leq \gamma < n$.
    In the case where $\alpha \neq \beta \neq \gamma$, we obtain 
    \begin{equation}
        \lvert P'_\alpha \wedge P'_\beta \wedge P'_\gamma \rvert \equiv \lvert P_\alpha \wedge P_\beta \wedge P_\gamma \rvert \pmod{2} \\
    \end{equation}
    by using Equation~\ref{eq:condition_3}.
    In the case where $\alpha = \beta$, we obtain
    \begin{equation}
    \begin{aligned}
        \lvert P'_\alpha \wedge P'_\beta \wedge P'_\gamma \rvert &\equiv \lvert P_\alpha \wedge P_\beta \wedge P_\gamma \rvert + \lvert \left[ z_\alpha(P_\alpha \wedge P_\gamma) \oplus z_\alpha(P_\alpha \wedge P_\gamma) \oplus z_\gamma P_\alpha \right] \wedge \bs y \rvert \pmod{2} \\
        &\equiv \lvert P_\alpha \wedge P_\beta \wedge P_\gamma \rvert \pmod{2} \\
    \end{aligned}
    \end{equation}
    by using Equation~\ref{eq:condition_2}.
    Finally, in the case where $\alpha = \beta = \gamma$, we obtain
    \begin{equation}
    \begin{aligned}
        \lvert P'_\alpha \wedge P'_\beta \wedge P'_\gamma \rvert &\equiv \lvert P_\alpha \wedge P_\beta \wedge P_\gamma \rvert + \lvert\left[z_\alpha P_\alpha \oplus z_\alpha P_\alpha \oplus z_\alpha P_\alpha \right]\wedge \bs y \rvert \pmod{2} \\
        &\equiv \lvert P_\alpha \wedge P_\beta \wedge P_\gamma \rvert \pmod{2} \\
    \end{aligned}
    \end{equation}
    by using Equation~\ref{eq:condition_2}.
\end{proof}

\section{Proof of Theorem~\ref{thm:d_str}}\label{app:d_str}
\begin{proof}[Proof of Theorem~\ref{thm:d_str}]
    The $\mathrm{CNOT}$ and $R_Z(\theta)$ gates are acting as follows on basis states $\lvert x_1, x_2 \rangle$, $\lvert x_1 \rangle$:
    \begin{align}
        \mathrm{CNOT}\lvert x_1, x_2 \rangle &= \lvert x_1, x_1 \oplus x_2 \rangle \\
        R_Z(\theta)\lvert x_1 \rangle &= e^{i\theta x_1} \lvert x_1 \rangle
    \end{align}
    Therefore, the action of an $n$-qubits $\{\mathrm{CNOT}, R_Z(\pi/2^d), R_Z(2\pi/2^d)\}$ circuit on a basis state $\lvert \bs x \rangle$ has the form
    \begin{equation}
        \lvert \bs x\rangle \mapsto e^{i\frac{\pi}{2^d} p(\bs x)} \lvert g(\bf \bs x) \rangle
    \end{equation}
    where $g: \mathbb{Z}_2^n \rightarrow \mathbb{Z}_2^n$ is a linear reversible Boolean function which can be implemented using only CNOT gates, and $p$ is a linear combination of linear Boolean functions:
    \begin{equation}
        p(\bs x) = \sum_{j = 1}^{m} a_j (y^{(j)}_1x_1 \oplus \ldots \oplus y^{(j)}_n x_n) \pmod{2^{d+1}}
    \end{equation}
    where $\bs y^{(j)} \in \mathbb{Z}_2^n\setminus \{\bs 0\}$, $\bs a \in \mathbb{Z}_{2\ell}^n$ and $m \geq 0$.
    The function $p$ is called a phase polynomial, we will refer to the Boolean vectors $\bs y^{(j)}$ as the parities of the phase polynomial $p$ and we will refer to $\bs a$ as the weights of the phase polynomial.
    Notice that if a weight $a_j$ associated to a parity $\bs y^{(j)}$ is odd, then the associated rotation can be implemented using only the $R_Z(2\pi/2^d)$ gate.
    Therefore, the number of $R_Z(\pi/2^d)$ gates required to implement $p$ is equal to $\lvert \bs a \pmod{2} \rvert$.
    That is why the problem of minimizing the number of $R_Z(\pi/2^d)$ gates in a $\{\mathrm{CNOT}$, $R_Z(\pi/2^d)$, $R_Z(2\pi/2^d)\}$ circuit consists in finding a phase polynomial $p'$ equivalent to $p$ but with weights $\bs a'$ such that $\lvert \bs a' \pmod{2} \rvert$ is minimal.
    We now prove the following equality, which will be useful to characterize the set of phase polynomials that are equivalent:
    \begin{equation}\label{eq:pp_wp_eq}
        x_1 \oplus \ldots \oplus x_n = \sum_{k = 1}^{n} \sum_{\alpha_{1} < \ldots < \alpha_k} (-2)^{k-1} \prod_{i=1}^k x_{\alpha_i}
    \end{equation}
    for all $n \geq 1$ and where $\bs x \in \mathbb{Z}_2^n$.
    This equality is trivially true for $n=1$, and for $n=2$ we can easily verify by case distinction that
    \begin{equation}
        x_1 \oplus x_2 = x_1 + x_2 - 2 x_1 x_2.
    \end{equation}
    Let assume that the equality is true for $n$ and let $\tilde{x}_1 = x_1 \oplus x_{n+1}$, then we have
    \begin{equation}
    \begin{aligned}
        x_1 \oplus \ldots \oplus x_{n+1} &= \tilde{x}_1 \oplus \ldots \oplus x_n \\
         &= \tilde{x}_1\left(1 + \sum_{k = 2}^{n} \sum_{1 < \alpha_{1} < \ldots < \alpha_{k-1}} (-2)^{k-1} \prod_{i=2}^k x_{\alpha_i}\right) \\
         &\quad + \sum_{k = 1}^{n} \sum_{1 < \alpha_{1} < \ldots < \alpha_k} (-2)^{k-1} \prod_{i=1}^k x_{\alpha_i} \\
         &= (x_1 \oplus x_{n+1})\left( 1 + \sum_{k = 2}^{n} \sum_{1 < \alpha_{1} < \ldots < \alpha_{k-1}} (-2)^{k-1} \prod_{i=2}^k x_{\alpha_i}\right) \\
         &\quad + \sum_{k = 1}^{n} \sum_{1 < \alpha_{1} < \ldots < \alpha_k} (-2)^{k-1} \prod_{i=1}^k x_{\alpha_i} \\
         &= (x_1 + x_{n+1} - 2x_1x_{n+1}) \left(1 + \sum_{k = 2}^{n} \sum_{1 < \alpha_{1} < \ldots < \alpha_{k-1}} (-2)^{k-1} \prod_{i=2}^k x_{\alpha_i}\right) \\
         &\quad+ \sum_{k = 1}^{n} \sum_{1 < \alpha_{1} < \ldots < \alpha_k} (-2)^{k-1} \prod_{i=1}^k x_{\alpha_i} \\
         &= \sum_{k = 1}^{n+1} \sum_{\alpha_{1} < \ldots < \alpha_k} (-2)^{k-1} \prod_{i=1}^k x_{\alpha_i} \\
    \end{aligned}
    \end{equation}
    Moreover, the equality holds under the modulo of any even number, and so
    \begin{equation}
    \begin{aligned}
        x_1 \oplus \ldots \oplus x_n &= \sum_{k = 1}^{n} \sum_{\alpha_{1} < \ldots < \alpha_k} (-2)^{k-1} \prod_{i=1}^k x_{\alpha_i} \pmod{2^{d+1}} \\
                                     &= \sum_{k = 1}^{d+1} \sum_{\alpha_{1} < \ldots < \alpha_k} (-2)^{k-1} \prod_{i=1}^k x_{\alpha_i} \pmod{2^{d+1}}
    \end{aligned}
    \end{equation}
    where $d$ is a non-negative integer satisfying $d+1 \leq n$.
    Then we have
    \begin{equation}
    \begin{aligned}
        p(\bs x) &= \sum_{j = 1}^{m} a_j (y^{(j)}_1x_1 \oplus \ldots \oplus y^{(j)}_n x_n) \pmod{2^{d+1}} \\
                 &= \sum_{j = 1}^{m} a_j \sum_{k = 1}^{d+1} \sum_{\alpha_{1} < \ldots < \alpha_k} (-2)^{k-1} \prod_{i=1}^k y_{\alpha_i}^{(j)} x_{\alpha_i} \pmod{2^{d+1}} \\
                 &= \sum_{k = 1}^d \sum_{\alpha_{1} < \ldots < \alpha_k} (-2)^{k-1} c_{\alpha_1,\ldots,\alpha_k} \prod_{i=1}^k x_{\alpha_i} \pmod{2^{d+1}}
    \end{aligned}
    \end{equation}
    where $c_{\alpha_1,\ldots,\alpha_k} = \sum_{j=1}^{m} a_{j} \prod_{i=1}^k y_{\alpha_1}^{(j)} \pmod{2^{d-k+1}}$.
    Notice that two phase polynomials $p$ and $p'$ with parities and weights $\bs y^{(j)}$, $\bs a$ and $\bs y^{\prime(j)}$, $\bs a'$ respectively are equal if and only if
    \begin{equation}
    \begin{aligned}
        c_{\alpha_1,\ldots,\alpha_k} &= \sum_{j=1}^{m} a_{j} \prod_{i=1}^k y_{\alpha_1}^{(j)} \pmod{2^{d-k+1}} \\
                                 &= \sum_{j=1}^{m'} a'_{j} \prod_{i=1}^k y_{\alpha_1}^{\prime(j)} \pmod{2^{d-k+1}} \\
                                 &= c'_{\alpha_1,\ldots,\alpha_k}
    \end{aligned}
    \end{equation}
    for all $\alpha_1, \ldots, \alpha_k$ satisfying $\alpha_1 < \ldots < \alpha_k$.
    Moreover, $p$ and $p'$ are equivalent up to an operator implementable over the $\{\mathrm{CNOT},$ $R_Z(2\pi/2^d)\}$ gate set if and only if
    \begin{equation}
    \begin{aligned}
        c_{\alpha_1,\ldots,\alpha_k} &\equiv c'_{\alpha_1,\ldots,\alpha_k} \pmod{2}
    \end{aligned}
    \end{equation}
    for all $\alpha_1, \ldots, \alpha_k$ satisfying $\alpha_1 < \ldots < \alpha_k$.
    Let $P$ and $P'$ be the parity tables (constructed as explained in Section~\ref{sec:preliminaries}) associated with $p$ and $p'$ and such that the columns of $P$ and $P'$ are encoding the parities $\bs y^{(j)}, \bs y^{\prime(j)}$ associated with an odd weight $a_j$ or $a'_j$.
    Then we have
    \begin{equation}
    \begin{aligned}
        c_{\alpha_1,\ldots,\alpha_k} &\equiv\sum_{j=1}^{m} a_{j} \prod_{i=1}^k y_{\alpha_1}^{(j)} \pmod{2} \\
                                     &\equiv \big\lvert \bigwedge_{i=1}^{k} P_{\alpha_i} \big\rvert \pmod{2}
    \end{aligned}
    \end{equation}
    for all $\alpha_1, \ldots, \alpha_k$ satisfying $\alpha_1 < \ldots < \alpha_k$ and $k$ satisfying $1 \leq k \leq d+1$.
    And so $p$ and $p'$ are equivalent up to an operator implementable over the $\{\mathrm{CNOT},$ $R_Z(2\pi/2^d)\}$ gate set if and only if
    \begin{equation}
    \begin{aligned}
        \big\lvert \bigwedge_{i=1}^{k} P_{\alpha_i} \big\rvert \equiv \big\lvert \bigwedge_{i=1}^{k} P'_{\alpha_i} \big\rvert \pmod{2}
    \end{aligned}
    \end{equation}
    for all $\alpha_1, \ldots, \alpha_k$ satisfying $\alpha_1 < \ldots < \alpha_k$ and $k$ satisfying $1 \leq k \leq d+1$.
    Let $\mathcal{A} \in \mathbb{Z}_2^{(n, \ldots, n)}$ be a symmetric tensor of order $d+1$ such that 
    \begin{equation}
        \mathcal{A}_{\alpha_1, \ldots, \alpha_{d+1}} = \big\lvert \bigwedge_{i=1}^{d+1} P_{\alpha_i} \big\rvert \pmod{2}
    \end{equation}
    for all $\alpha_1, \ldots, \alpha_{d+1}$ satisfying $0 \leq \alpha_1 \leq \ldots \leq \alpha_{d+1} < n$.
    Then $p'$ is equivalent to $p$ up to an operator implementable over the $\{\mathrm{CNOT},$ $R_Z(2\pi/2^d)\}$ gate set if and only if
    \begin{equation}\label{eq:tensor_parity_table}
        \mathcal{A}_{\alpha_1,\ldots,\alpha_{d+1}} = \big\lvert \bigwedge_{i=1}^{k} P'_{\alpha_i} \big\rvert \pmod{2}
    \end{equation}
    for all $\alpha_1, \ldots, \alpha_{d+1}$ satisfying $0 \leq \alpha_1 \leq \ldots \leq \alpha_{d+1} < n$.
    The number of $R_Z(\pi/2^d)$ gates required to implement $p'$ over the $\{\mathrm{CNOT},$ $R_Z(\pi/2^d),$ $R_Z(2\pi/2^d)\}$ gate set is equal to the number of column of $P'$.
    Thus, minimizing the number of $R_Z(\pi/2^d)$ gates consists in finding a parity table $P'$ satisfying Equation~\ref{eq:tensor_parity_table} with a minimal number of columns, which correponds to the $(d+1)$-STR problem.
\end{proof}

\end{document}